\newtheorem{lemma}{{\bf Lemma}}
\newtheorem{corollary}{{\bf Corollary}}
\newtheorem{remark}{{\bf Remark}}
\newtheorem{theorem}{{\bf Theorem}}
\newtheorem{proposition}{{\bf Proposition}}
\def\bep{\begin{proposition}}
\def\eep{\end{proposition}}
\newcommand{\bw}{\mbox{\bf w}}
\newcommand{\bI}{\mbox{\bf I}}
\newcommand{\mE}{\mathbb{E}}
\newcommand{\bW}{\mbox{\bf W}}
\newcommand{\bX}{\mbox{\bf X}}
\newcommand{\mR}{\mathbb{R}}
\newcommand{\bdelta}{\mbox{\boldmath $\delta$}}
\newcommand{\bbeta}{\mbox{\boldmath $\beta$}}
\newcommand{\bSig}{\mbox{\boldmath $\Sigma$}}
\newcommand{\T}{\top}
\newcommand{\var}{\mathrm{var}}
\begin{document}

\title{Residual Importance Weighted Transfer Learning For High-dimensional Linear Regression}
\author{Junlong Zhao$^{1}$,~ Shengbin Zheng$^{1}$, ~Chenlei Leng$^{2,*}$\\
$^{1}$School of Statistics, Beijing Normal University\\
$^{2}$Department of Statistics, University of Warwick}
\date{}
\maketitle

\begin{abstract}
 Transfer learning is an emerging paradigm for leveraging multiple sources to improve the statistical inference on a single target. In this paper, we propose a novel approach named residual importance weighted transfer learning (RIW-TL) for high-dimensional linear models built on penalized likelihood. Compared to existing methods such as Trans-Lasso that selects sources in an all-in-all-out manner, RIW-TL includes samples via importance weighting and thus may permit more effective sample use. To determine the weights, remarkably RIW-TL only requires the knowledge of one-dimensional densities dependent on residuals, thus overcoming the curse of dimensionality of having to estimate high-dimensional densities in naive importance weighting. We show that the oracle RIW-TL provides faster rate than its competitors and develop a cross-fitting procedure to estimate this oracle. We discuss variants of RIW-TL by adopting different choices for residual weighting.  The theoretical properties of RIW-TL and its variants are established  and  compared with those of  LASSO and Trans-Lasso. Extensive simulation and a real data analysis confirm its advantages.
\end{abstract}

\noindent
  {\small \bf KEY WORDS}:  High-dimensional linear models, Kernel density estimation, Penalized likelihood, Residual importance weighting, Sample selection, Transfer learning.

\section{Introduction}
Statistical techniques are most demanding in applications where sample sizes are small. This is particularly true when they come with a large number of variables. Although there is now a mature sub-field of statistics focusing on penalized regression for producing sparse models that mitigates the challenge of high dimensionality,  small sample sizes pose a fundamental limitation on the statistical properties of any estimator.

Fortunately, though the sample size for a \textit{target} problem of interest is small, in reality there often exist multiple different but related \textit{source} datasets in many applications. For example,
\begin{enumerate}[$(a)$~~]
\item in biology, for predicting gene expression, the  data for the target tissue may be limited, but the source data for other tissues may be large \citep{li2021};
\item in economics, to study what may affect income in a particular country, we have income data from many other countries;
\item in pattern recognition, you want to build a classifier to identify drones from a target dataset. Instead of building your model from scratch, you may want to explore similar classifiers developed to recognize, for instance, birds.
\end{enumerate}
In these scenarios, there is a possibility to explore the similarity between the sources and the target. This paper is about a new methodology on how to do it for linear regression, for which our goal is to  build a model to relate a response variable  $y\in\mR$ to a predictor vector $\bm x\in\mR^p$ when $p$ is large. The starting point is that we have a data set $\mathcal{S}^{(0)}=\{\bm z_i^{(0)}=(\bm x_i^{(0)}, y_i^{(0)}), i=1, \cdots, n_0\}$ with \textit{i.i.d.} observations which will be referred to as the target data hereafter. The data come from the following linear model
\begin{equation}\label{lm}
 y_i^{(0)} = (\bm x_i^{(0)})^{\top} \bbeta^{(0)} + \epsilon_i^{(0)},
 \end{equation}
 where $\epsilon_i^{(0)}$'s are random noises with $\mathbb{E}(\epsilon_i^{(0)})=0$ and our interest is to estimate the unknown regression coefficients $\bbeta^{(0)}$.
 Note
$$
\bbeta^{(0)} = \underset{\bm\beta \in \mathbb{R}^p}{\rm argmin} \, \mathbb{E}_{\bm z \sim f_0}(y - \bm x^{\top} \bbeta)^2
$$
where $f_0$ is the pdf of $\bm z:=(\bm x^{\top}, y)^{\top}$ in the target data and $\bm z\sim f_0$ denotes that $\bm z$  is a random vector following $f_0$.
Assume that $\bbeta^{(0)}$ is sparse such that the cardinality of its support $s_0:=|\mathrm{supp}(\bbeta^{(0)})|$ satisfies $s_0\ll n_0$ (or approximately so). It is known that the LASSO estimator \citep{Tibshirani1996Regression}
 \begin{equation}\label{eq:lasso}
  \hat\bbeta^{(0)}_{\rm Lasso} = \underset{\bm\beta \in \mathbb{R}^p}{\rm argmin} \, \frac{1}{2n_0}\sum_{i=1}^{n_0}\left\{y_i^{(0)} - (\bm x_i^{(0)})^{\top} \bbeta\right\}^2+\lambda \| \bbeta\|_1,
  \end{equation}
 where $\|\bbeta\|_1$ is the $\ell_1$ norm of $\bbeta$, achieves the following convergence rate \citep{Bickel2009}
 \[ \| \hat\bbeta^{(0)}_{\rm Lasso} - \bbeta^{(0)}\|^2 = O_p\left( \frac{s_0\log p}{n_0} \right),
 \]
where $\| \cdot \|$ is the $\ell_2$ norm.  For the LASSO estimator to convergence, a fundamental requirement on the sample size is $n_0 \gg s_0\log p$.  With source data available, an interesting question arises whether we can exploit these auxiliary datasets to boost the performance and if yes how. In this paper, we will refer to the LASSO estimator as the estimator defined in \eqref{eq:lasso} that uses only the sample in the target data $\mathcal{S}^{(0)}$.

\subsection{Transfer learning}\label{sec:tl}
Transfer learning is a learning paradigm originated in machine learning for leveraging knowledge learned from other tasks to boost performance on the target under investigation. In our regression setup, denote the source data as $K$ independent datasets $\mathcal{S}^{(k)}=\{\bm z_i^{(k)}=(\bm x_i^{(k)}, y_i^{(k)}), i=1, \cdots, n_k\}, k=1, \cdots, K$, each with \textit{i.i.d.} observations that satisfy the following linear model
\begin{equation}\label{lin-model0}
 y_i^{(k)} = (\bm x_i^{(k)})^{\top} \bbeta^{(k)} + \epsilon_i^{(k)},
 \end{equation}
 where $\epsilon_i^{(k)}$'s are random noises with $\mathbb{E}(\epsilon_i^{(k)})=0$ not necessarily following the same distribution of $\epsilon_i^{(0)}$ and
 $\bbeta^{(k)}$ is sparse or approximately so.  If $\bbeta^{(k)} \approx \bm \bbeta^{(0)}$ for some $k$ in a suitable sense, then the $k$th source may be exploited to aid the estimation of $\bbeta^{(0)}$. In reality, of course, we seldom know which sources are relevant. Transfer learning basically aims to identify those complementary sources in order to leverage their information to boost the estimation of the main quantity of interest. Intuitively since the resulting estimator will be based on a larger sample size combining the target and the related sources, it may converge at a faster rate.

Motivated by the rationale above, for high-dimensional models, a dominant class of approaches for transfer learning advocated by \cite{li2021} roughly follows the general recipe discussed below. Define $\bdelta^{(k)}=\bbeta^{(k)}-\bbeta^{(0)}$ as the contrast between the regression coefficients of the $k$th source and the target. Denote the set of informative sources as
\[ \mathcal{J} =\{1\le k \le K: \| \bdelta^{(k)} \|_1 \le h \},
\]
which includes those datasets with contrasts sufficiently small as measured by $h$. Given $\mathcal{J}$, the Trans-Lasso approach in \cite{li2021} estimates the coefficient $\bbeta^{(0)}$ in the target data as
\[\hat\bbeta^{(0)}_{\rm Trans-Lasso}= \hat\bw-\hat\bdelta,\]
where $\hat\bw$ is the LASSO estimator using the data in the informative set $\bigcup_{k \in \mathcal{J} }\mathcal{S}^{(k)} $ and
\[
\hat{\bdelta} = \underset{\bm\delta \in \mathbb{R}^p}{\rm argmin} \, \frac{1}{2n_0}\sum_{i=1}^{n_0}\left\{y_i^{(0)} - (\bm x_i^{(0)})^{\top} (\hat\bw-\bdelta)\right\}^2+\lambda_\delta \| \bdelta\|_1.
\]
 The idea underpinning Trans-Lasso is that for the data in source set $\mathcal{J}$, the population version of the regression coefficient of $y$ on $\bm x$ is a linear combination of $\bbeta^{(k)}, k \in \mathcal{J}$, and thus is approximately sparse. In addition,  the difference between $\bbeta^{(0)}$ and this population regression coefficient is also approximately sparse. Provided $\mathcal{J}$ is known, the oracle Trans-Lasso estimator in \cite{li2021}
 can achieve the following convergence rate
 \[ \| \hat\bbeta^{(0)}_{\rm Trans-Lasso} - \bbeta^{(0)}\|^2 = O_p\left( \frac{s_0\log p}{n_0+n_{\mathcal{J}}} +\frac{s_0\log p}{n_0} \wedge h \sqrt{\frac{\log p}{n_0}} \wedge h^2\right),
 \]
 where $a \wedge b$ is the minimum of $a$ and $b$ and $n_{\mathcal{J}}=\sum_{k \in \mathcal{J}} n_k$ is the sample size of the informative set. Immediately we see that we require $h\ll s_0\sqrt{\log p/n_0}$ and $n_0 \ll n_{\mathcal{J}}$ for the oracle Trans-Lasso to have improved rate over the LASSO estimator. Note that in Trans-Lasso, the observations in a single source dataset $\mathcal{S}^{(k)}$ is either all included in or left out. This strategy immediately raises the following fundamental question:
\begin{center}
\it{Does all-in-or-all-out make the best use of source data in transferring knowledge?}
\end{center}

\subsection{Importance weighted transfer learning}\label{sec:iw}
An idea to use all the observations in the source data is via importance weighting by noting \citep{cochran2007sampling,fishman2013monte}
  \begin{equation}\label{eq: importance weighting}
  \bbeta^{(0)} = \underset{\bm\beta \in \mathbb{R}^p}{\rm argmin} \, {\mathbb{E}}_{\bm z \sim f_k}
  \left\{
  \frac{f_0(\bm z )}{f_k(\bm z)}
  (y - \bm x^{\top} \bbeta)^2
  \right\},
  \end{equation}
where $f_k$ is the pdf of $\bm z$ in the $k$th source. Note that by this notation, we allow $f_k$ to be different for different $k$. To estimate $\bbeta^{(0)}$, a simple approach is to formulate a weighted least-squares loss function aggregating all the observations in the target and source datasets, aided by adding an $\ell_1$ penalty on the estimand to encourage sparsity. This simple approach may use all the observations in all the available data as opposed to all-in-or-all-out.

The simple importance weighted approach, however, does not work unless for trivial cases, because specifying the weights $f_0/f_k$ requires the knowledge of  the unknown $f_0$ and $f_k$, the joint distribution of $y$ and $\bm x$ in each data set. One way to proceed is to estimate these two unknowns via density estimation, which will suffer from the curse of dimensionality since $f_0$ and $f_k$ are both $(p+1)$-dimensional. The major thrust of this paper is a novel importance weighted transfer learning approach that only requires the estimation of one-dimensional densities. The foundation of our approach rests on the following important property.

 \bep\label{prop1}
  For $1 \le k \le K$, it holds that
  \begin{equation}
  \bbeta^{(0)} = \underset{\bm\beta \in \mathbb{R}^p}{\rm argmin} \, \mathbb{E}_{\bm z \sim f_k}
  \left\{
  \omega^{(k)} \cdot
  (y - \bm x^{\top} \bbeta)^2
  \right\}, \label{eq:riw-tl}
  \end{equation}
  with  weights $\omega^{(k)}$ defined as
\begin{equation}\label{weight}
\omega^{(k)} =\omega^{(k)}(\bm z)=
\frac{f_{\epsilon}(y-\bm{x}^\top\bbeta^{(0)})}{f_{\epsilon^{(k)}}(y-\bm{x}^\top\bbeta^{(k)})},
\end{equation}
where $f_\epsilon$ is the pdf of a random variable $\epsilon$ independent of the data in source $k$, satisfying $\mathbb{E}(\epsilon)=0$ for reasons stated  in Section \ref{sec:f}, and $f_{\epsilon^{(k)}}$ is the pdf of $\epsilon^{(k)}$ in the $k$th source data.
 \eep
Note that in this proposition, we can make $\epsilon$ depend on $k$ but we have omitted $k$ for simplicity.  A few remarks are in place.
 \begin{enumerate}[$(a)$~~]
\item  For the proposition to hold, we do not require the marginal distribution of $\bm x^{(k)}$,  the population version of the predictor in source $k$, to be the same across $k$. That is, our framework adapts to what is known as \textit{covariate shift} in the literature.

 \item In the denominator of the weights, if $\epsilon^{(k)}$ follows Gaussian,  estimating $f_{\epsilon^{(k)}}$ is rather trivial because we just need to estimate the variance of $\epsilon^{(k)}$ or the residuals $y_i^{(k)}-(\bm{x}_i^{(k)})^\top\bbeta^{(k)}$ in practice.  Alternatively, without any distributional assumption on $f_{\epsilon^{(k)}}$, we can estimate the density of $\epsilon^{(k)}$ nonparametrically based on these residuals via univariate density estimation. The latter case is more robust to the misspecification of the distribution of $\epsilon^{(k)}$ and is somewhat more interesting from a methodological and theoretical perspective.  Thus, the rest of the paper focuses on this case, while we examine the performance when $f_{\epsilon^{(k)}}$ is assumed Gaussian numerically.

\item In the numerator of the weights, $\epsilon$ can be any random variable independent of the data as long as it has a mean zero. This opens the door to a wide variety of options. We will discuss two instances for specifying the distribution of $\epsilon$: the first by symmetrizing $f_{\epsilon^{(k)}}$ and the second by assuming a uniform distribution.

 \item For the unknown
 $\bbeta^{(0)}$ and $\bbeta^{(k)}$ needed in specifying the importance weights, we can replace them with their penalized regression estimators or a Trans-Lasso estimator.
 \end{enumerate}
 With $\omega^{(k)}(\bm z^{(k)}_i)$ estimated using the estimated quantities discussed, we then formulate a penalized loss function combining all the observations in the target and source data. Because our approach utilizes residuals at its core, we will refer to it as Residual Importance Weighted Transfer Learning or RIW-TL for short.

\vspace{2mm}
\noindent{\textbf{Difference between LASSO, oracle Trans-Lasso and oracle RIW-TL.}} Before we discuss in more detail our methodology, including, for example, how to deal with weights that may be unbounded, we first highlight the difference between the oracle Trans-Lasso in  \cite{li2021} and the oracle RIW-TL in a simple setting where there is a single source $K=1$. As a reminder, the oracle Trans-Lasso is achieved when its tuning parameters are chosen optimally assuming a known $\mathcal{J}$, while for the oracle RIW-TL, $\omega^{(k)}(\bm z^{(k)}_i)$ are assumed known and properly trimmed for boundedness with optimally chosen tuning parameters in various places. We provide the convergence rate of an oracle by calculating the squared $\ell_2$ norm of the difference between an oracle estimator and $\bbeta^{(0)}$.

To appreciate these oracles under different regimes, we divide the range of the contrast $\bdelta^{(1)}=\bbeta^{(1)}-\bbeta^{(0)}$ into three scenarios with the rates provided in Table \ref{tab:0}. In the first regime when $\bbeta^{(0)}$ and $\bbeta^{(1)}$ are close, both approaches produce oracle estimators as if we had a sample size $n_0+n_1$. That is, we effectively use all the samples in the target and source. In regime (ii) when $\bdelta^{(1)}$ becomes larger but its $\ell_1$ norm is smaller than $s_0 \sqrt{\log p/n_0}$, oracle RIW-TL retains the same rate while oracle Trans-Lasso has a slower rate. How much slower depends on the interplay between the ambient dimension $p$, true dimension $s_0$ and the two sample sizes $n_0$ and $n_1$.  In the difficult regime (iii) where $\bdelta^{(1)}$ is much larger than $s_0 \sqrt{\log p/n_0}$ in $\ell_1$ norm, the oracle Trans-Lasso may produce an estimator inferior to the LASSO estimator if the source data chosen is uninformative, while in the best case scenario it has the same
convergence rate as the LASSO estimator using only the target data. In the inferior case, using uninformative source data causes what is referred to as negative transfer for the target problem \citep{torrey2010}. In all the settings, our oracle RIW-TL has a faster rate than the LASSO estimator using only the target data.

 \begin{table}[!hbt]
\caption{\label{tab:0}Comparison of the convergence rates of two oracles under three regimes with a single source having sample size $n_1$. }
\begin{tabular*}{\columnwidth}{@{\extracolsep\fill}lcc@{\extracolsep\fill}}
\toprule
{\bf Regime}  & {\bf Oracle RIW-TL}  & {\bf Oracle Trans-Lasso}\\
\midrule
($i$)~$\|\bdelta^{(1)}\|_1 \lesssim \dfrac{s_0 \sqrt{n_0 \log p}}{n_0 + n_1} $    & $ \dfrac{s_0 \log p}{n_0+n_1}$ & $\dfrac{s_0 \log p}{n_0+n_1}$ \\[1ex]
($ii$)~$\dfrac{s_0 \sqrt{n_0 \log p}}{n_0 + n_1} \ll \|\bdelta^{(1)}\|_1 \ll s_0 \sqrt{\dfrac{ \log p}{ n_0}} $ &  $\dfrac{s_0 \log p}{n_0+n_1}$ &$ \sqrt{\dfrac{\log p}{n_0}} \left\|\bdelta^{(1)}\right\|_1 \wedge \left\|\bdelta^{(1)}\right\|_1^2\wedge \dfrac{s_0\log p}{n_0}
$ \\[1ex]
($iii$)~$\|\bdelta^{(1)}\|_1 \gtrsim s_0 \sqrt{\dfrac{ \log p}{ n_0}}$&  $\dfrac{s_0 \log p}{n_0 + n_1/\max\{{\|\bdelta^{(1)}\|},1\}}$  & $\sqrt{\dfrac{\log p}{n_0}} \left\|\bdelta^{(1)}\right\|_1 \wedge \dfrac{s_0\log p}{n_0}$ \\[1ex]
\bottomrule
\end{tabular*}
\end{table}

\vspace{-2mm}
\subsection{Contributions}
Our methodological contribution is to propose residual importance weighted transfer learning (RIW-TL) as a new framework for transferring information from source data to target data. This marks a paradigm shift from selecting in the existing all-in-or-all-out framework \citep[cf.]{li2021,tian2022} to weighting. By weighting each individual observation in the source data via a density ratio, RIW-TL potentially makes better and more effective use of the observations in the source, as demonstrated theoretically and empirically in this paper. We show that the rate of oracle RIW-TL is superior in the difficult regime when a diverse range of source data is available, some of which may have large contrasts between $\bbeta^{(k)}$ and $\bbeta^{(0)}$. Moreover, RIW-TL is oblivious to the homogeneity of the design, not requiring the marginal distributions of the predictors in the source to be similar, unlike its all-in-or-all-out competitors. These facts make RIW-TL appealing in reality, especially because we seldom know whether knowledge learned in other problems or which other problems' knowledge can be transferred.

When the weights in \eqref{weight} needed for the oracle are unknown, we propose a cross-fitting approach to implement the oracle RIW-TL. We investigate the nonparametric estimation of the weights in \eqref{eq:riw-tl} via kernel density estimation \citep{Silverman1978} or parametric estimation in numerical study by assuming Gaussianity of the errors. When $n_0$, the sample size of the target data, is small, it is found that the rate of RIW-TL via cross-fitting is inferior to that of the oracle. To bring the rate closer to the oracle, we further explore the specification of $\epsilon$ as a uniform random variable with unknown endpoints and show that the rates will be of the same order when the sources are close to the target on average.

In specifying and estimating the weights in RIW-TL, without extra care, however, the weights $f_\epsilon/f_{\epsilon^{(k)}}$ may become unbounded. As a result, a sample version of \eqref{eq:riw-tl} will be dominated by those observations with large weights, causing a large variance in the resulting estimator. This is reminiscent of a wider issue for importance weighting-based methods, for example, when used in Monte Carlo computing \citep{tokdar2010importance}. To overcome the challenge, we employ further sample selection so that we can focus on those observations whose weights defined in \eqref{weight} are bounded in a suitable way. This on the other hand brings challenges in parameter estimation which we must handle with delicate analysis.

\vspace{-2mm}
\subsection{Literature review}
Transfer learning is a modern technique for applying the insight gained in related problems to a new context. In many relatively data-scarce applications including pattern drug sensitivity prediction, biological imaging diagnosis, and natural language processing, for example, it has found many uses \citep[cf.]{torrey2010,shin2016, Wang2018deep}. On the other hand, high-dimensional data analysis has undergone rapid development thanks to a series of important breakthroughs made in statistics with the development of LASSO \citep{Tibshirani1996Regression} and SCAD \citep{fan2001variable}, among many others.

Exploring the interface between these two areas, \cite{bastani2021} studied transfer learning with a single source for high-dimensional generalized linear models (GLMs). \cite{li2021} proposed the Trans-Lasso method in high-dimensional linear regression with multiple source data, showing that their estimators can be minimax optimal under certain conditions. \cite{tian2022} proposed a method for transfer learning in high-dimensional GLMs. \cite{Li2023} developed an approach for high-dimensional GLMs with knowledge transfer that estimates the target parameter and coefficients difference jointly, while \cite{Zhang2022} considered transfer learning for high-dimensional quantile regression. The literature on transfer learning for high-dimensional data is rapidly expanding recently. We refer the reader to its use in federated learning, functional data analysis and many others   \cite[etc.]{Li2021federal, Zhou2022, Lin2022, Gu2022}. The majority of the methods taken by these papers select source data to enhance estimation for the target data at the source level, that is,  data in a single source are either selected altogether or discarded as a whole. To identify those useful sources,   most methods aggregate candidate sets of informative sources, sometimes by applying model averaging to avoid negative transfer \citep[cf.]{li2021} or identifying informative sources by minimizing prediction error in the target \citep[cf.]{tian2022}.

An alternative to the all-in-or-all-out approach is importance weighting \citep[cf.]{Sugiyama2007,fang2020}. As can be seen in Section \ref{sec:iw}, a key step to use importance weighting is to estimate the importance weight ${f_0(\bm z )}/{f_k(\bm z)}$. For small or moderate dimensional problems, there are well-established works mostly for estimating the weights based on nonparametric regression. For example, \cite{Scholkopf2007} proposed kernel mean matching method  and \cite{Kanamori2009} modeled the importance weights using a set of function basis. \cite{Shimodaira2000} proposed the empirical risk minimization method with an exponentially-flattened parameter to reduce the effects of high variance of the estimated importance weights, while
\cite{Yamada2011} further explored the fluctuation problem and proposed a relative importance weighting method. A recent review on this topic can be found in  \cite{lu2022rethinking}. These methods nevertheless are not applicable in the high-dimensional data setting in this paper.

\vspace{-2mm}
 \subsection{Organization of the paper and notations}
 The rest of this paper is organized as follows. In Section 2, we study the oracle RIW-TL  for a high-dimensional linear model. In Section 3, we propose an approach to estimate this oracle via kernel density estimation and cross-fitting, and study its properties.  In Section 4, we propose another variant of this approach by adopting different weighting and sample selection with properties provided. Extensive simulation and real data analysis are reported in Sections 5 and 6. A brief discussion is presented in Section 7. All the technical details and additional numerical results can be found in the supplementary materials. The code implementing RIW-TL is freely available on \url{https://github.com/RIW-TL/Transfer-learning}.

{\bf Notations}. For any two positive  series $\{a_n\}$ and $\{b_n\}$, we use $a_n\gtrsim b_n$ or {$b_n \lesssim a_n$} to mean $\lim_{n\to \infty} a_n/b_n>C>0$ for some constant $C$,  and $a_n\asymp b_n$ to mean they have the same rate. We use either $a_n\ll b_n$ or $a_n=o(b_n)$ when $a_n/b_n\to 0$, and  $a_n=O(b_n)$ when $\lim\limits_n a_n/b_n<C<\infty$ for some constant $C$. And $o_p(\cdot)$ and $O_p(\cdot)$ are defined similarly for the stochastic version.
 For any vector $\bm v=(v_1, \cdots, v_p)^\top\in\mR^p$, $\|\bm v\|_1$, $\|\bm v\|$ and $\|\bm v\|_\infty$ denote its $\ell_1,\ell_2$ and $\ell_\infty$ norm respectively.  For a set $\mathcal{A}$, we use $|\mathcal{A}|$ to denote its cardinality. We define $[p]:=\{1, \cdots, p\}$ for any integer $p$. For any set $\mathcal{A} \subseteq [p]$, we use $\bm v_{\mathcal{A}}$ to denote a subvector of $\bm v$ with entries in $\mathcal{A}$. For any nonempty set $\mathcal{A}$, we denote $\bI_{\mathcal{A}}$ as the diagonal matrix with its $i$th diagonal element being indicator ${\rm I
}(i \in \mathcal{A})$.
 Finally, we use $C, C_1, \cdots, c_1,\cdots$ to denote absolute constants which may vary from line to line depending on the context.

 As a reminder, we have used $\mathcal{S}^{(0)}$ with sample size $n_0$ and $\mathcal{S}^{(k)}$ with sample size $n_k$ to denote the target and the $k$th source data respectively.
 The notations $f_k$ and $f_{\epsilon^{(k)}}$ are used respectively to denote the joint distribution of $(y, \bm x^\top)^\top$ in $\mathcal{S}^{(k)}$  (or $(y^{(k)}, (\bm x^{(k)})^\top)^\top$), and the distribution of $\epsilon^{(k)}$. {Without loss of generality, we assume that $\bm x^{(k)}_i$ are all centered such that $\mathbb{E}(\bm x_i^{(k)})=0$ with the covariance matrix denoted as $\bSig^{(k)} = (\sigma_{ij}^{(k)})$,} which is positive definite.

 \section{Residual Importance Weighting}\label{sec:riw}
 We first discuss the choice of $\omega^{(k)}$ in \eqref{weight}. Write
 $$
y_i^{(k)}=(\bm{x}_{i}^{(k)})^\top\bbeta^{(k)}+\epsilon^{(k)}_i=(\bm{x}_{i}^{(k)})^\top\bbeta^{(0)}+
\{(\bm{x}_{i}^{(k)})^\top(\bbeta^{(k)}-\bbeta^{(0)})+\epsilon_i^{(k)}\},
 $$
 where we will denote
 \begin{equation}\label{eta}
\eta_i^{(k)}:=(\bm{x}_i^{(k)})^{\top}(\bbeta^{(k)} - \bbeta^{(0)})=(\bm{x}_i^{(k)})^{\top}\bm \delta^{(k)}.
\end{equation}
 From Proposition \ref{prop1}, the importance weight of the observation $\bm z_i^{(k)}$  is simply
  \begin{equation}\label{weight2}
  \omega_i^{(k)} = \frac{f_{\epsilon}\bigl(y_i^{(k)}-(\bm{x}_i^{(k)})^\top\bbeta^{(0)}\bigr)}{f_{\epsilon^{(k)}}\bigl(y_i^{(k)}-(\bm{x}_i^{(k)})^\top\bbeta^{(k)}\bigr)} =
   \frac{f_{\epsilon}(\epsilon_i^{(k)} + \eta_i^{(k)})}{f_{\epsilon^{(k)}}(\epsilon_i^{(k)})}.
  \end{equation}
The quantity $\eta_i^{(k)}$ measures  the contrast between $\bbeta^{(k)}$ and $\bbeta^{(0)}$ adjusted by an individual predictor. Since the denominator in \eqref{eta} can be close to zero, one difficulty associated with $\omega_i^{(k)}$ is that it may become unbounded. To avoid this and for technical reasons, we employ sample selection by choosing those observations that have bounded weights in the following set
\begin{equation}\label{weight-constraint}
\mathcal{I}_k = \{i \in [n_k]: |\epsilon_i^{(k)} + \eta_i^{(k)}| \le A,~ |\eta_i^{(k)}| \leq M \}, \qquad k = 1,\cdots,K,
\end{equation}
where $A$ and $M$ are two tuning parameters. This particular choice of sample selection is employed to ensure that both the numerator and the denominator in (\ref{weight2}) are bounded under some conditions.
We discuss the selection of $A$ and $M$ via cross-validation later.
We remark that $\mathcal{I}_k$ is not the only sample selection that one can use. An alternative is, for example, $\{i \in [n_k]: |\epsilon_i^{(k)}| \le A, |\eta_i^{(k)}| \leq M\}$ that also constraints the weights. We find that the choice of  $\mathcal{I}_k$ matters, leading to different theoretical properties. {More discussion can be found in Section \ref{sec:Properties of RIW-TL}, for example, in Remark \ref{remark2}}, and this particular alternative subset will be discussed in Remark \ref{remark1} in Section \ref{sec:f}.

To study in detail the magnitude of $\omega_i^{(k)}$, we
 make the following assumptions on $f_{\epsilon^{(k)}}$.

\begin{enumerate}[]
\item
 \textbf{Condition 1.} For $k=1,\cdots, K$, $f_{\epsilon^{(k)}}$ is bounded away from $\infty$ satisfying $f_{\epsilon^{(k)}}(t)>0$ for any finite $t$, and has bounded first derivative.
\end{enumerate}
This assumption allows $\inf_{t\in\mR} f_{\epsilon^{(k)}}(t)=0$ and   is quite mild. It holds for many commonly encountered distributions such as Gaussian and the $t$ distribution. Under this assumption,
the following proposition shows that $f_{\epsilon^{(k)}}(\epsilon_i^{(k)})$ and $\omega_i^{(k)}$ are both bounded for all $i \in \mathcal{I}_k$.

\bep\label{prop2}
Under Condition 1, for $k = 1,\cdots,K$ and all $i \in \mathcal{I}_k$, it holds that
  ($i$)  $f_{\epsilon^{(k)}}(\epsilon_i^{(k)})$ is bounded away from 0 strictly (i.e. larger than a positive constant), and   ($ii$)
    $\omega_i^{(k)}$ is bounded away from 0  and $\infty$ {strictly}.
\eep

With the weights all bounded, we are now ready to define the oracle RIW-TL assuming known weights.
Denote  $\mathcal{I}_0 = [n_0]$ and let $\omega_i^{(0)} = 1$ for all the observations in the target data. The oracle RIW-TL estimator is found as
 \begin{equation}\label{ora}
\tilde \bbeta_{ora}^{(0)} = \underset{\bm\beta \in \mathbb{R}^p}{\rm argmin}~ \frac{1}{2(n_0+\sum_{k=1}^K n_k)}
\left\{
\sum_{k = 0}^K \sum_{i\in \mathcal{I}_k}
 \omega_i^{(k)} \left(y_i^{(k)} - (\bm x_i^{(k)})^\top \bbeta\right)^2
\right\} + \lambda \|\bbeta\|_1,
\end{equation}
where  $\lambda$ is a tuning parameter. That is, we define the oracle RIW-TL as the solution to a penalized weighted least-squares problem using the observations in the target data, as well as all those observations having bounded weights in the $K$ sources.

\subsection{The choice of $f_\epsilon$ in \eqref{weight2}}\label{sec:f}
We remark that $\bbeta^{(0)}$ is not the population minimizer of the expectation of the loss function in \eqref{ora} when $\lambda=0$. That is, our formulation incurs some bias even in the best case scenario. Although it will be interesting to characterize this bias explicitly, it turns out that it is largely irrelevant for the purposes of establishing a rate for the oracle RIW-TL estimator due to our proof strategy. In particular, to establish the rate of $\tilde \bbeta_{ora}^{(0)}$, we will make use of the basic inequality that the loss function in \eqref{ora} evaluated at $\tilde \bbeta_{ora}^{(0)}$ is no greater than it evaluated at $\bbeta^{(0)}$, since the former minimizes the loss function. This is the proof strategy widely used for proving the rate of LASSO type estimators \citep[cf.]{Bickel2009}. To establish this rate for the LASSO estimator defined in \eqref{eq:lasso} for example, a critical step is to provide an upper bound of the largest element for $\sum_{i=1}^{n_0} \bm x_i^{(0)} \epsilon_i^{(0)}$ with high probability.

We now discuss informally our strategy to make a similar term stochastically small.
Recall that for a random vector $(y_i^{(k)},(\bm{x}_{i}^{(k)})^{\top})^{\top}$ from source $k$, we have denoted
$$
y_i^{(k)}=(\bm{x}_{i}^{(k)})^\top\bbeta^{(0)}+(\eta_i^{(k)}+\epsilon_i^{(k)}),
 $$
 in which we can view $\epsilon_i^{(k)} + \eta_i^{(k)}$ as the random error.
In the context of the oracle RIW-TL estimator defined in \eqref{ora}, to use the idea of the basic inequality, we need to bound the  term $\sum_{i=1}^{n_0} \bm x_i^{(0)} \epsilon_i^{(0)}$ as in the LASSO estimator case, and additional $K$ terms with the $k$th term having the form
\begin{equation}\label{eq:sum}
\sum_{i=1}^{n_k}  \xi_i^{(k)}\bm x_i^{(k)},~~ \xi_i^{(k)}:=(\eta_i^{(k)}+\epsilon_i^{(k)}) \omega^{(k)}_i {\rm I}(i \in \mathcal{I}_k),
\end{equation}
which arises due to the formulation in \eqref{ora}.
Note that $\xi_i^{(k)}$ is a bounded random variable for given $\bm x_i^{(k)}$, since  $\omega^{(k)}_i $ is bounded by Proposition \ref{prop2}.
The expectation of $\xi_i^{(k)}$ is seen as
\[ \text{Bias}(f_\epsilon, \mathcal{I}_k):=
\mathbb{E}_{\epsilon_i^{(k)}} \xi_i^{(k)}
 = \int_{i \in \mathcal{I}_k} (\epsilon_i^{(k)} + \eta_i^{(k)}) f_{\epsilon}(\epsilon_i^{(k)} + \eta_i^{(k)}) d \epsilon_i^{(k)}= \int_{-A}^A t f_{\epsilon}(t) dt.
 \]
 If we take $f_\epsilon$ to be a symmetric function,
 $\xi_i^{(k)}$ will have mean zero for fixed $\bm x_i^{(k)}$. Thus, \eqref{eq:sum} is just a sum of bounded mean zero random variables that can be further bounded stochastically. This is why we require the random error $\epsilon$ satisfies $\mathbb E(\epsilon)=0$. Since the expectation in the above expression
 plays an important role in the choice of $f_\epsilon$ and sample selection $\mathcal{I}_k$, we have used notation $\text{Bias}(f_\epsilon, \mathcal{I}_k)$ for its value.

 Although $f_\epsilon$ can be any symmetric pdf, in this paper, we discuss two choices where $f_\epsilon(t)=[f_{\epsilon^{(k)}}(t)+f_{\epsilon^{(k)}}(-t)]/2$ or when $f_\epsilon$ is the pdf of a uniform distribution. These choices are somewhat more interesting than when $f_\epsilon$ is from a Gaussian distribution, while we only present numerical results for the latter.

\begin{remark}\label{remark1}
 One can  consider another  set  $\mathcal{I}'_k=\{i \in [n_k]: |\epsilon_i^{(k)}| \le A, |\eta_i^{(k)}| \leq M\}$ for which  the bias becomes
\[\text{Bias}(f_\epsilon, \mathcal{I}'_k)=
\int_{-A}^A (\epsilon_i^{(k)} + \eta_i^{(k)}) f_{\epsilon}(\epsilon_i^{(k)} + \eta_i^{(k)}) d \epsilon_i^{(k)} = \int_{-A + \eta_i^{(k)}}^{A + \eta_i^{(k)}} t f_{\epsilon}(t) dt = \int_{A - \eta_i^{(k)}}^{A + \eta_i^{(k)}} t f_{\epsilon}(t) dt, \]
 which is not zero usually.
When $\eta_i^{(k)}$ is small, the bias approximately equals  $2 A f_{\epsilon}(A) \eta_i^{(k)}$. However,
when  $\bm\delta^{(k)}$'s are small on average, taking $f_{\epsilon}$ as the density of a uniform distribution and $\mathcal{I}'_k$ for sample selection has certain advantages as we show in Section \ref{sec:riw-tl-u}.
\end{remark}

\subsection{The effective sample size}
The formulation of the oracle RIW-TL implies that its effective sample size is $n_0+\sum_{k=1}^{K} |\mathcal{I}_k|$. That is, the source data have contributed additional $\sum_{k=1}^{K} |\mathcal{I}_k|$ observations  to the estimation of $\bbeta^{(0)}$ as compared to the LASSO estimator. Noting $\mathcal{I}_k$ is a random set, we quantify the gain in the sample size by evaluating the expectation of $|\mathcal{I}_k|$.

To simplify exposition, we denote $n_{\mathcal{I}_k} = |\mathcal{I}_k|$ as the cardinality of  $\mathcal{I}_k$ and write $n_{\mathcal{I}} = \sum_{k = 1}^K n_{\mathcal{I}_k}$ as the total sample size used for transfer from the source data. Recall the notation $\bm \delta^{(k)} = \bbeta^{(k)} - \bbeta^{(0)}$. Denote $\eta^{(k)} = (\bm x^{(k)})^{\top}(\bbeta^{(k)} - \bbeta^{(0)})=(\bm x^{(k)})^{\top}\bm\delta^{(k)}$ to be consistent with the notion $\eta^{(k)}_i$ in \eqref{eta},
  and \[d^2_k=(\bdelta^{(k)})^{\top} \bSig^{(k)} \bdelta^{(k)}\] as the quadratic contrast weighted by the covariance of $\bm x^{(k)}$.  We have the following results on the effective sample size $n_{\mathcal{I}_k}$ from the $k$th source.

\bep\label{prop3}
If $\epsilon_i^{(k)}$ is distributed as $N(0,1)$ and $\bm x_i^{(k)}$ follows $N(\bm 0, \bSig^{(k)})$, for $k=1,\cdots, K$, it holds that
$$
 \frac{n_k  d_k}{d_k^2 + 1} \left\{ 1 - {\rm exp}\left( - \frac{d_k^2 + 1}{d_k^2} \varphi^2 \right)
\right\} \lesssim \mE(n_{\mathcal{I}_k}) \leq n_k,
$$
where $\varphi = {\rm min}\{A,M\}<\infty$.
\eep

Proposition \ref{prop3} explicitly states the dependence of the effective sample size $\mE(n_{\mathcal{I}_k})$ on $d_k$. When $d_k = O(1)$, the expected contribution in terms of the sample size from source $k$   is of the order $n_k$ since in this case   $\mE(n_{\mathcal{I}_k}) \asymp n_k$. When $d_k$ diverges as $ n_k\to\infty$, the expected contribution from source $k$  to the effective sample size has an order no less than $n_k/d_k$ since  $n_k/d_k  \lesssim   \mE(n_{\mathcal{I}_k}) \leq n_k$. The expected contribution in these two cases can be summarized as $\mE(n_{\mathcal{I}_k}) \gtrsim n_k/\max\{d_k,1\}$. When  the eigenvalues of $\bSig^{(k)}$ are upper bounded, we can equivalently write $\mE(n_{\mathcal{I}_k}) \gtrsim n_k/\max\{\|\bm\delta^{(k)}\|,1\}$.

Using the results in this proposition, we have the following corollary regarding the sample usage rate (SUR) defined as
\[\rho_{\mathcal{I}} := \frac{n_0 + \mE(n_{\mathcal{I}})}{n_0 + \sum_{k=1}^K n_k},\]
the ratio between the effective sample size used by RIW-TL and the total sample size of all the data. The larger the sample usage rate is, the more observations in the source data are used in knowledge transfer.
\begin{corollary}\label{coro1}
Under the conditions of  Proposition \ref{prop3},   the sample usage rate $\rho_{\mathcal{I}}$ is bounded as
 $$ \frac{n_0+\sum_{k=1}^K n_k/d_k}{n_0+\sum_{k=1}^K n_k}\lesssim  \rho_{\mathcal{I}}\lesssim  1.$$
\end{corollary}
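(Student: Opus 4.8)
The plan is to read both bounds off Proposition~\ref{prop3} after summing over the $K$ sources. Since $\rho_{\mathcal{I}} = (n_0 + \mE(n_{\mathcal{I}}))/(n_0 + \sum_{k=1}^K n_k)$ with $\mE(n_{\mathcal{I}}) = \sum_{k=1}^K \mE(n_{\mathcal{I}_k})$, everything reduces to sandwiching each $\mE(n_{\mathcal{I}_k})$ between constant multiples of $n_k/\max\{d_k,1\}$ and $n_k$, and then combining the bounds additively.

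\textbf{Upper bound.} This is immediate: Proposition~\ref{prop3} gives $\mE(n_{\mathcal{I}_k}) \le n_k$ for every $k$, so $\mE(n_{\mathcal{I}}) \le \sum_{k=1}^K n_k$ and hence $\rho_{\mathcal{I}} \le (n_0 + \sum_{k=1}^K n_k)/(n_0 + \sum_{k=1}^K n_k) = 1$, i.e. $\rho_{\mathcal{I}} \lesssim 1$.

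\textbf{Lower bound.} I would first show the left-hand side of the inequality in Proposition~\ref{prop3} is at least a $k$-free constant times $n_k/\max\{d_k,1\}$. Because $(d_k^2+1)/d_k^2 = 1 + d_k^{-2} > 1$, the factor $1 - \exp(-\frac{d_k^2+1}{d_k^2}\varphi^2) \ge 1 - e^{-\varphi^2} =: c_\varphi > 0$ uniformly in $k$. For the prefactor, when $d_k \ge 1$ we have $d_k^2+1 \le 2d_k^2$, so $\frac{n_k d_k}{d_k^2+1} \ge \frac{n_k}{2d_k} = \frac{n_k}{2\max\{d_k,1\}}$; when $d_k < 1$ the Proposition~\ref{prop3} bound degenerates, and I would instead argue directly that $\mathcal{I}_k \supseteq \{i\in[n_k]: |\epsilon_i^{(k)}| \le A/2\} \cap \{i\in[n_k]: |\eta_i^{(k)}| \le \min\{A/2,M\}\}$, an intersection of two independent events (as $\epsilon_i^{(k)} \sim N(0,1)$ is independent of $\bm x_i^{(k)}$, with $\eta_i^{(k)}/d_k \sim N(0,1)$ and $\min\{A/2,M\}/d_k > \min\{A/2,M\}$) each of constant probability, which gives $\mE(n_{\mathcal{I}_k}) \gtrsim n_k = n_k/\max\{d_k,1\}$. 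Combining the two cases, $\mE(n_{\mathcal{I}_k}) \gtrsim n_k/\max\{d_k,1\}$ with a $k$-independent constant, which is precisely the estimate already recorded in the paragraph after Proposition~\ref{prop3}. Summing over $k$, adding $n_0$, and dividing by $n_0 + \sum_{k=1}^K n_k$ then yields $\rho_{\mathcal{I}} \gtrsim (n_0 + \sum_{k=1}^K n_k/\max\{d_k,1\})/(n_0 + \sum_{k=1}^K n_k)$, i.e. the claimed lower bound (the $d_k$ displayed in the statement is to be read as $\max\{d_k,1\}$; in the interesting regime where the contrasts obey $d_k \gtrsim 1$ the two expressions agree up to constants).

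\textbf{Main obstacle.} There is no genuinely deep step; the only real care is keeping the absolute constants uniform in $k$ — in particular bounding $1 - \exp(-\frac{d_k^2+1}{d_k^2}\varphi^2)$ below independently of $d_k$, for which the observation $(d_k^2+1)/d_k^2 > 1$ is exactly what is needed — together with the small-$d_k$ correction above so that the per-source lower bound does not collapse when a source is nearly identical to the target. I would also note explicitly that when every $d_k = O(1)$ the lower bound becomes $\rho_{\mathcal{I}} \asymp 1$, consistent with each source then contributing essentially its full sample size.
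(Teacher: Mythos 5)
Your proof is correct and follows essentially the same route as the paper, which simply sums the per-source bounds $n_k/d_k \lesssim \mE(n_{\mathcal{I}_k}) \le n_k$ from Proposition~\ref{prop3} and divides by $n_0+\sum_{k=1}^K n_k$. Your additional care in making the constants uniform in $k$ and in handling the degenerate regime $d_k<1$ (where the displayed $n_k/d_k$ must be read as $n_k/\max\{d_k,1\}$) is a detail the paper's two-line proof glosses over but is consistent with its surrounding discussion.
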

Immediately we see that when $d_k$'s are of the order $O(1)$, the effective sample size of RIW-TL is of the same order as the total sample size in that $\rho_{\mathcal{I}} \asymp 1$.

Corollary \ref{coro1} only provides some idea about the order of the effective sample size. In practice, it is often useful to know exactly how probable an observation in the source data gets transferred for the estimation of $\bbeta^{(0)}$. Clearly, characterizing this probability for a general setup is impossible and hence we focus on simple simulation to illustrate the main point. Towards this, recall from Section \ref{sec:tl} that Trans-Lasso retains observation in source $k$ if $\|\bdelta^{(k)}\|_1 \le h$, while our RIW-TL retains those observations in this source if they belong to $\mathcal{I}_k$ requiring in essence $|\eta_i^{(k)}|=|(\bm x_i^{(k)})^\top \bdelta^{(k)}|$ to be small. Note $|\eta_i^{(k)}|\le \|\bm\delta^{(k)}\|_1\|\bm{x}_i^{(k)}\|_\infty$. Thus below we examine the dependence of the probability of an observation in the source data being used for the target problem on $\|\bm\delta^{(k)}\|_1$ and $\|\bm{x}_i^{(k)}\|_\infty$.
  \begin{figure}[htb]
  \centering
  \includegraphics[scale=0.5]{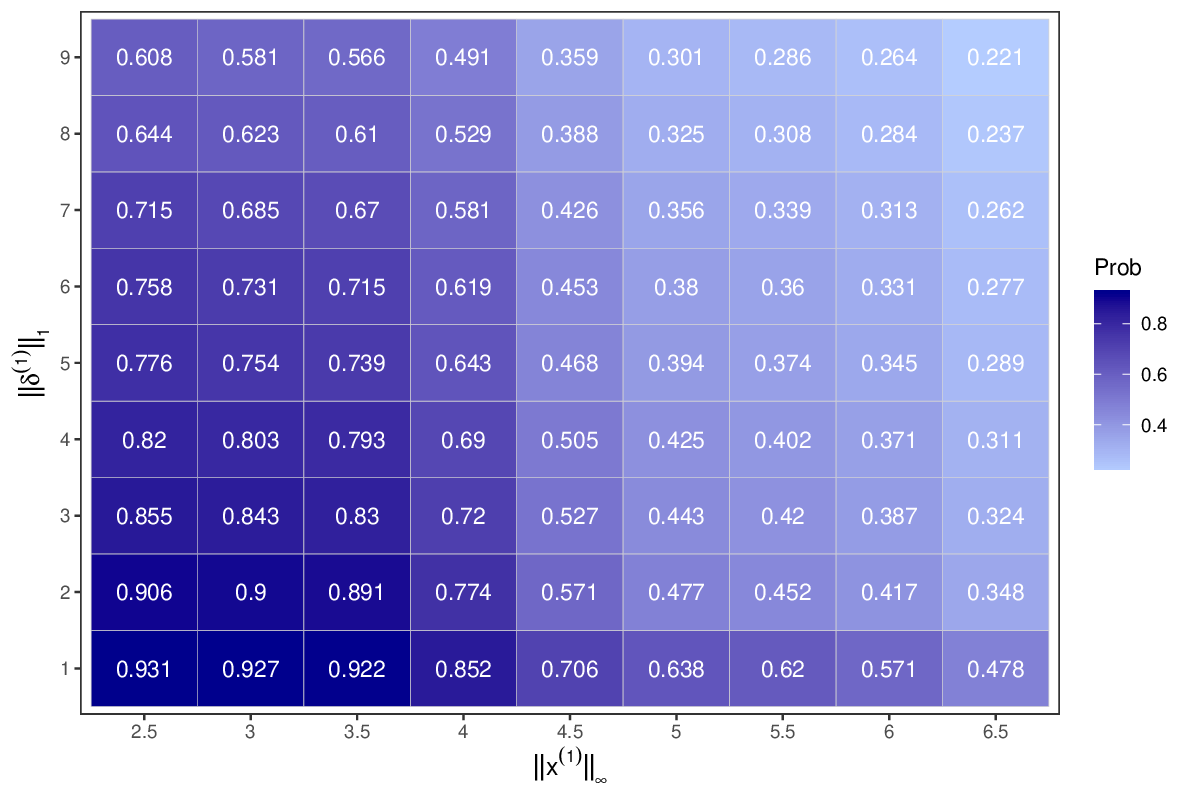}
  \caption{The probability an observation in the source data is included in RIW-TL.} \label{fig1}
  \end{figure}

  In the simulation, we take the number of source data as  $K=1$. We generate $\bm z_i^{(0)}$ and $\bm z_i^{(1)}$ respectively from the linear  models with  $\epsilon_i^{(0)}$ and $\epsilon_i^{(1)}$ following $i.i.d.$ standard normal distribution. The predictors are generated such that $\bm x_i^{(0)}\sim N(0,\bSig)$ and $\bm x_i^{(1)} \sim (1/3)N(-4\mathbf{1}_p,\bSig)+(1/3)N(\mathbf{0}_p,\bSig)+(1/3)N(2\mathbf{1}_p,\bSig)$ where $\bSig=(\sigma_{ij})$ with $\sigma_{ij}=0.5^{|i-j|}$. Here $\mathbf{1}_p$ is a $p$-dimensional vector of ones and $\mathbf{0}_p$ a vector of zeros.
 We set  $\bbeta^{(0)} = (\bm 1_{5}^{\top},\bm 0_{95}^{\top})^{\top}$ and $\bbeta^{(1)}= (\bm 1_{5}^{\top},  {0.2}\bm 1_l^{\top}, \bm 0_{95 - l}^{\top})^{\top}$ with $l\in  \{[5,45],5\}$ such that   $\|\bm{\delta}^{(1)}\|_1=0.2l\in\{[1,9],1\}$, where $\{[a,b],c\}$ denotes the  grid points from $a$ to $b$ with step length $c$.
  We split the range of $\|\bm x^{(1)}\|_\infty$ and $\|\bm{\delta}^{(1)}\|_1$ into $9$ disjoint  intervals as can be seen in Figure \ref{fig1}. For the constraint set in \eqref{weight-constraint}, we set $A=3/2$ and $M=3$ such that  $\mathcal{I}_1 =\{i \in [n_1]: |\epsilon_i^{(1)} + \eta_i^{(1)}| \leq 3/2,~ |\eta_i^{(1)}| \le 3\}$. We generate 100 replicates to compute the frequencies of an observation belonging to $\mathcal{I}_1$ as a function of {$\|\bm\delta^{(1)}\|_1$ and $\|\bm{x}_i^{(1)}\|_\infty$}, discretized on a two-dimensional grid as seen in Figure \ref{fig1}. These frequencies are estimates of the probabilities of an observation getting transferred.    We can see that , given $\|\bm x_i^{(1)}\|_\infty$ (or $\|\bm\delta^{(1)}\|_1$),   the probability is a decreasing function of $\|\bm\delta^{(1)}\|_1$ (or $\|\bm x_i^{(1)}\|_\infty$). For all the grids, this probability is never smaller than 0.2. The probability matrix in Figure \ref{fig1} should be compared to that of Trans-Lasso where the probabilities are always one for the observations in source $k$ if $\|\bdelta^{(k)}\|$ is small, and zero otherwise.

  \subsection{Properties of the oracle RIW-TL}
For $k = 1,\cdots,K$, let $\bX^{(0)}=(\bm x_1^{(0)},\cdots, \bm x_{n_0}^{(0)})^\top \in \mathbb{R}^{n_0 \times p}$ and $\bX^{(k)}=(\bm x_1^{(k)},\cdots, \bm x_{n_k}^{(k)})^\top \in \mathbb{R}^{n_k \times p}$ be the design matrices of the target data and source $k$ data, respectively.
We study the properties of the oracle estimator defined in \eqref{ora} under the following conditions.

\begin{enumerate}[]
 \item
 \textbf{Condition 2.} For $0\le k\le  K$, $\epsilon_i^{(k)}$'s are $i.i.d.$ sub-Gaussian variables with mean zero; that is, there exists  some constant $\kappa > 0$ such that
$\underset{0\le k\le K}{\rm max} \mathbb{E}\{{\rm exp}(u \epsilon_i^{(k)})\} \leq {\rm exp} (u^2 \kappa^2/2)$
 for all $u \in \mathbb{R}$. We denote $\var(\epsilon_i^{(k)})=(\sigma^{(k)})^2<\infty$ { for all $0 \leq k \leq K$.}

 \item \textbf{Condition 3.} Let $\mathcal{H}_0={\rm supp}(\bbeta^{(0)})$ be the support set of $\bbeta_0$. There exist some positive constants $\phi_1$ and $\phi_2$ such that
$$
\underset{\bm v \in \mathcal{E}(\mathcal{H}_0,3)}{\rm inf}
\frac{\bm v^{\top} (\bX^{(0)})^{\top} \bX^{(0)} \bm v}{n_0 \|\bm v\|^2} \geq \phi_1^2 \quad \text{and} \quad
\underset{\bm v \in \mathcal{E}(\mathcal{H}_0,3)}{\rm inf}
\frac{\bm v^{\top}
\left[\sum\limits_{k = 1}^K (\bX^{(k)})^{\top} \bI_{\mathcal{I}_k} \bX^{(k)}\right] \bm v} {n_{\mathcal{I}}\|\bm v\|^2} \geq \phi_2^2,
$$
where $\mathcal{E}(\mathcal{H}_0,3) = \{\bm v \in \mathbb{R}^p :\|\bm v_{\mathcal{H}_0^c}\|_1 \leq 3\|\bm v_{\mathcal{H}_0}\|_1\}$.
\end{enumerate}

The sub-Gaussian condition on $\epsilon^{(k)}$ in Condition 2 is common in high-dimensional data analysis.   Condition 3 is a variant of the restricted eigenvalue (RE) condition commonly assumed in the literature \citep{Bickel2009,sun2012}. When $n_{\mathcal{I}}$ is large, Condition 3 is more likely to hold.
Since the effective sample size $n_{\mathcal{I}_k}$ may be small for some sources,  instead of imposing RE condition separately on $ (\bX^{(k)})^{\top} \bI_{\mathcal{I}_k} \bX^{(k)}$ for each $k$, we assume that RE condition holds on the aggregated design matrix $\sum_{k = 1}^K (\bX^{(k)})^{\top} \bI_{\mathcal{I}_k} \bX^{(k)}$. We have the following rate for the oracle RIW-TL estimator.

\begin{theorem}[Convergence rate of oracle RIW-TL]\label{the1}
 Assume that Conditions 1-3 are satisfied and that  $\mathcal{I}_k$ and $\omega_i^{(k)}$ are all known.
 Let $\lambda \asymp \rho_{\mathcal{I}} \sqrt{{\rm log}\,p / (n_0 + \mathbb{E}(n_{\mathcal{I}}))}$ where $\rho_{\mathcal{I}}$ is the sample usage rate defined before Corollary \ref{coro1}.
 For the oracle estimator $\tilde \bbeta_{ora}^{(0)}$ defined in \eqref{ora}, if $\min\limits_{0\le k\le K} n_k\to \infty$, it follows that \begin{equation}\label{rate1}
\|\tilde \bbeta_{ora}^{(0)} - \bbeta^{(0)}\|^2
 = O_p \left(
 \frac{s_0 {\rm log}\,p}{n_0 + \mathbb{E}(n_{\mathcal{I}})}
\right).
\end{equation}
\end{theorem}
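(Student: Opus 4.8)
The plan is to follow the standard basic-inequality argument for $\ell_1$-penalized estimators, adapted to the weighted, sample-selected loss in \eqref{ora}. Write $\hat N := n_0 + \sum_{k=1}^K n_k$ for the nominal denominator, and let $L(\bbeta)$ denote the weighted least-squares loss in \eqref{ora} (without the penalty). Since $\tilde\bbeta_{ora}^{(0)}$ minimizes $L(\bbeta)+\lambda\|\bbeta\|_1$, the basic inequality $L(\tilde\bbeta_{ora}^{(0)})+\lambda\|\tilde\bbeta_{ora}^{(0)}\|_1 \le L(\bbeta^{(0)})+\lambda\|\bbeta^{(0)}\|_1$ gives, after expanding $L$ around $\bbeta^{(0)}$ and setting $\bu := \tilde\bbeta_{ora}^{(0)}-\bbeta^{(0)}$,
\begin{equation}\label{basic}
\frac{1}{\hat N}\sum_{k=0}^K\sum_{i\in\mathcal{I}_k}\omega_i^{(k)}\bigl((\bm x_i^{(k)})^\top\bu\bigr)^2
\;\le\; \frac{2}{\hat N}\,\bu^\top\Bigl(\sum_{k=0}^K\sum_{i\in\mathcal{I}_k}\omega_i^{(k)}\xi_i^{(k)}\bm x_i^{(k)}\Bigr) + 2\lambda\bigl(\|\bbeta^{(0)}\|_1-\|\tilde\bbeta_{ora}^{(0)}\|_1\bigr),
\end{equation}
where $\xi_i^{(k)}=(\eta_i^{(k)}+\epsilon_i^{(k)})\,\mathrm{I}(i\in\mathcal{I}_k)$ for $k\ge1$ (absorbing the indicator and the weight as in \eqref{eq:sum}) and $\xi_i^{(0)}=\epsilon_i^{(0)}$. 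By Hölder, the inner product is bounded by $\|\bu\|_1$ times the $\ell_\infty$ norm of the score vector $\bG := \hat N^{-1}\sum_{k,i}\omega_i^{(k)}\xi_i^{(k)}\bm x_i^{(k)}$.

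The first and central step is to show $\|\bG\|_\infty \lesssim \rho_{\mathcal{I}}\sqrt{\log p/(n_0+\mathbb{E}(n_{\mathcal{I}}))}$ with high probability, so that $\lambda$ dominates it. For the target block ($k=0$) this is the classical bound on $\hat N^{-1}\|\sum_i\bm x_i^{(0)}\epsilon_i^{(0)}\|_\infty$ via sub-Gaussianity (Condition 2) and a union bound over $p$ coordinates. For each source block $k\ge1$, the key observation — exactly the point made in Section \ref{sec:f} — is that, conditionally on $\bm x_i^{(k)}$, the summand $\omega_i^{(k)}\xi_i^{(k)}\mathrm{I}(i\in\mathcal{I}_k)$ has mean $\mathrm{Bias}(f_\epsilon,\mathcal{I}_k)=\int_{-A}^A t f_\epsilon(t)\,dt = 0$ because $f_\epsilon$ is symmetric, and it is bounded (Proposition \ref{prop2} gives $\omega_i^{(k)}\le C$, the selection forces $|\eta_i^{(k)}+\epsilon_i^{(k)}|\le A$). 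Hence each coordinate of the source contribution to $\hat N\bG$ is a sum of independent, bounded, conditionally-mean-zero terms; Hoeffding/Bernstein conditional on the design, followed by integrating out $\bm x_i^{(k)}$ (whose coordinates are themselves sub-Gaussian under Condition 2's companion assumptions) and a union bound over $p$ and over $k$, yields the $\ell_\infty$ bound with the variance proxy scaling like $\sum_k \mathbb{E}(n_{\mathcal{I}_k}) = \mathbb{E}(n_{\mathcal{I}})$ rather than $\sum_k n_k$. This is where the effective sample size enters and where the definition $\lambda\asymp\rho_{\mathcal{I}}\sqrt{\log p/(n_0+\mathbb{E}(n_{\mathcal{I}}))}$ is calibrated; the book-keeping that $\hat N\cdot\rho_{\mathcal{I}} = n_0+\mathbb{E}(n_{\mathcal{I}})$ makes the rate come out cleanly.

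The second step is the standard cone argument: on the event $\{\|\bG\|_\infty\le\lambda/2\}$, \eqref{basic} together with $\|\bbeta^{(0)}\|_1-\|\tilde\bbeta_{ora}^{(0)}\|_1 \le \|\bu_{\mathcal{H}_0}\|_1-\|\bu_{\mathcal{H}_0^c}\|_1$ forces $\bu\in\mathcal{E}(\mathcal{H}_0,3)$, so Condition 3 applies. The quadratic form on the left of \eqref{basic} must be lower-bounded by $\phi^2\|\bu\|^2$ times $(n_0+\mathbb{E}(n_{\mathcal{I}}))/\hat N = \rho_{\mathcal{I}}$ up to constants; here I split it into the target part (bounded below via $\phi_1^2$) and the source part $\hat N^{-1}\sum_{k\ge1}\bu^\top(\bX^{(k)})^\top\mathrm{diag}(\omega_i^{(k)}\mathrm{I}(i\in\mathcal{I}_k))\bX^{(k)}\bu$, using that the weights are bounded \emph{below} by a positive constant (Proposition \ref{prop2}(ii)) to drop them and invoke the aggregated RE bound $\phi_2^2$ on $\sum_k(\bX^{(k)})^\top\bI_{\mathcal{I}_k}\bX^{(k)}$, noting $n_{\mathcal{I}}$ concentrates around $\mathbb{E}(n_{\mathcal{I}})$. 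Combining, $\phi^2\rho_{\mathcal{I}}\|\bu\|^2 \lesssim \lambda\|\bu_{\mathcal{H}_0}\|_1 \le \lambda\sqrt{s_0}\|\bu\|$, which rearranges to $\|\bu\| \lesssim \lambda\sqrt{s_0}/(\phi^2\rho_{\mathcal{I}})$, and squaring with the chosen $\lambda$ delivers $\|\bu\|^2 = O_p(s_0\log p/(n_0+\mathbb{E}(n_{\mathcal{I}})))$.

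I expect the main obstacle to be the source-block $\ell_\infty$ bound in Step 1 — specifically controlling the interaction between the random selection event $\mathrm{I}(i\in\mathcal{I}_k)$ (which depends on both $\bm x_i^{(k)}$ and $\epsilon_i^{(k)}$), the weight $\omega_i^{(k)}$, and the residual $\eta_i^{(k)}+\epsilon_i^{(k)}$, while keeping the variance proxy at the $\mathbb{E}(n_{\mathcal{I}})$ scale and showing $\mathrm{Bias}(f_\epsilon,\mathcal{I}_k)$ genuinely vanishes under the precise definition of $\mathcal{I}_k$ in \eqref{weight-constraint}. A secondary technical point is replacing the random $n_{\mathcal{I}}$ by its expectation throughout (both in $\lambda$ and in the RE normalization), which needs a concentration bound for $n_{\mathcal{I}_k}=\sum_i\mathrm{I}(i\in\mathcal{I}_k)$ — a sum of independent Bernoullis — combined with the lower bound on $\mathbb{E}(n_{\mathcal{I}_k})$ from Proposition \ref{prop3}; this is routine but must be done carefully so that the final statement is in terms of $\mathbb{E}(n_{\mathcal{I}})$.
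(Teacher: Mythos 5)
Your proposal is correct and follows essentially the same route as the paper's proof: the basic inequality, the $\ell_\infty$ bound on the score using that $(\eta_i^{(k)}+\epsilon_i^{(k)})\omega_i^{(k)}\mathrm{I}(i\in\mathcal{I}_k)$ is bounded and conditionally mean-zero by the symmetry of $f_\epsilon$, the cone/RE argument with the weights bounded below to drop them, and the concentration of $n_{\mathcal{I}_k}$ (a binomial count) around its expectation to replace $n_{\mathcal{I}}$ by $\mathbb{E}(n_{\mathcal{I}})$ in both $\lambda$ and the normalization. The only cosmetic difference is that the paper conditions on the design and normalizes the column norms rather than integrating out $\bm x_i^{(k)}$, but this does not change the argument.
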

Theorem \ref{the1} shows that the convergence rate of RIW-TL in squared $\ell_2$ norm is inversely proportional to $n_0 + \mathbb{E}(n_{\mathcal{I}})$, rather than $n_0$ in the LASSO case. When $\mE(n_{\mathcal{I}}) \gg n_0$, the convergence rate is much faster than LASSO.  As we have argued in Proposition \ref{prop3},  $\mE(n_{\mathcal{I}}) \gtrsim \sum_{k=1}^K n_k/\max\{d_k,1\}$,   where $d_k^2 ={(\bdelta^{(k)})^{\top} \bSig^{(k)} \bdelta^{(k)}}$. Thus if $d_k=O(1)$ uniformly over $k$, we can achieve a rate of the order $\|\hat \bbeta_{ora}^{(0)} - \bbeta^{(0)}\|^2=O_p\left({s_0 {\rm log}\,p}/(n_0 + \sum_{k=1}^{K} n_k)\right)$ which basically makes full use of the $K$ source data. On the other hand, when $d_k$ diverges, $\mathbb{E}(n_{\mathcal{I}})$ is of the order $\sum_{k=1}^K (n_k/d_k)$, which can still be larger than $n_0$, for example, when $n_k\gg n_0 d_k/K$ for all $k$. In this scenario, the oracle RIW-TL still enjoys a faster convergence rate than LASSO. We emphasize again that the convergence rate in the theorem depends on $d_k$ which is allowed to diverge but does not require $\bbeta^{(k)}, k=1,\cdots, K$, to be sparse.

Based on this theorem, we have compared the oracle RIW-TL and the oracle Trans-Lasso  in Table \ref{tab:0} in a simple case when there is only one source data. Both methods have the  same rate  $s_0{\rm log}\,p/(n_0 + n_1)$
when $\|\bm\delta^{(1)}\|_1 =O(s_0\sqrt{n_0 {\rm log}\,p}/(n_0+n_1))$. When the magnitude of $\bm\delta^{(1)}$ becomes larger but $\|\bm\delta^{(1)}\|_1 = o(s_0\sqrt{{\rm log}\,p/n_0})$, the oracle Trans-Lasso estimator has a  convergence rate faster than the LASSO but slower than $s_0 {\rm log}\,p/(n_1 + n_0)$ that is the rate of the oracle RIW-TL.
As $\|\bm\delta^{(1)}\|_1\gtrsim s_0\sqrt{{\rm log}\,p/n_0}$, the convergence rate of oracle Trans-Lasso is at best the same as LASSO, while the oracle RIW-TL can still be better than LASSO if $d_1=o(n_1)$.

\vspace{2mm}
\noindent\textbf{Oblivious to covariate heterogeneity.}
We note that the theorem and the theory in the sequel apply where $\bm x^{(k)}$ follow different distributions for different $k$. In contrast,  for example in \cite{li2021}, it is required that the degree of heterogeneity in the design matrices of the sources is small or moderate. A similar assumption is made in \cite{tian2022} for transfer learning in generalized linear model, among others. One fundamental reason RIW-TL is design-oblivious is that it operates on importance weights dependent only on the conditional distribution of $y$ given $\bm x$.

\section{RIW-TL in Practice} \label{sec:riw-tl in practice}
We have discussed the properties of the oracle RIW-TL which relies on knowing $\mathcal{I}_k$ in sample selection and the weights $\omega_i^{(k)}$ for selected observations. In practice, they are seldom provided. This section provides a data-driven approach to estimate these two sets of unknown quantities.

Recall the definition $\mathcal{I}_k$ in \eqref{weight-constraint} involving constraints on $\eta_i^{(k)}+\epsilon_i^{(k)}$ and $\eta_i^{(k)}$.  Since $y_i^{(k)}-(\bm x_i^{(k)})^\top \bbeta^{(0)}=\eta_i^{(k)}+\epsilon_i^{(k)}$, we can estimate the latter as $y_i^{(k)}-(\bm x_i^{(k)}) ^\top \bbeta^{(0)}$ by plugging in a preliminary estimate of $\bbeta^{(0)}$. For $\eta_i^{(k)}=(\bm x_i^{(k)}) ^\top(\bbeta^{(k)}-\bbeta^{(0)})$, we can plug in this preliminary estimate of $\bbeta^{(0)}$ and a preliminary estimate of $\bbeta^{(k)}$. For the weight in \eqref{weight2}, we replace $\epsilon_i^{(k)}$ by the residual $y_i^{(k)}-(\bm x_i^{(k)})^\top \bbeta^{(k)}$ in which $\bbeta^{(k)}$ is replaced by its preliminary estimator. Then for the denominator, if $f_{\epsilon^{(k)}}$ is from a Gaussian distribution, we just need to estimate the variance for it to be fully specified, which is  {relatively easy} but will be prone to the misspecification of the distribution. Therefore in this section, we tackle the more challenging, and perhaps more interesting case, where $f_{\epsilon^{(k)}}$ is fully nonparametric. As it is a one-dimensional density, we can estimate $f_{\epsilon^{(k)}}$ via any density estimator for which the literature is vast.  For the numerator $f_\epsilon$, we take $f_\epsilon(t)=[f_{\epsilon^{(k)}}(t)+f_{\epsilon^{(k)}}(-t)]/2$ and thus $f_\epsilon$ is source dependent. In this paper, we use kernel density estimation for estimating $f_{\epsilon^{(k)}}$
and SCAD for obtaining preliminary estimates of $\bbeta^{(0)}$ and $\bbeta^{(k)}$.

It turns out, however, that plugging these estimates directly into \eqref{ora} for estimating $\bbeta^{(0)}$
incurs non-ignorable bias. In high-dimensional linear regression,  a similar phenomenon was pointed out by \cite{fan2012variance} that using a plugging-in estimator to estimate the variance of the errors will underestimate. As a remedy, they proposed refitting which motivated our cross-fitting procedure below. Before presenting the detailed algorithm, we provide a schematic summary of our algorithm in Figure  \ref{diagram}. Splitting each source into three subsets, in Step 1 we obtain preliminary estimates of $\bbeta^{(k)}$ on the first subset. This allows us to obtain kernel density estimate of the residuals based on the data in the second subset as in Step 2, and subsequently construct the sample selection sets and obtain the weights for the data in the third subset, as in Step 3. In Step 4, we obtain one RIW-TL estimate and this process is averaged over three permutations of the three subsets as in Step 5.
\begin{figure}[!htbp]
  \centering
  \includegraphics[width=5in]{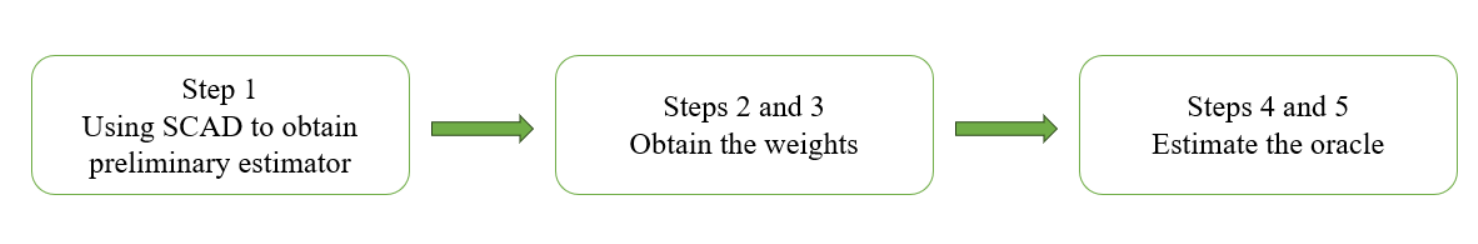}
  \caption{The diagram of the estimation procedure.}\label{diagram}
\end{figure}

Formally, we randomly split $[n_k]$, the index set of the observations in the $k$th source, into three (roughly) equally-sized subsets. We denote these subsets as $\mathcal{D}_{kj}$, $j=1,2,3$. The algorithm is formally presented in Algorithm \ref{algo1}.

\begin{algorithm}[h]
\caption{{\bf  Cross-fitting Algorithm for RIW-TL with Kernel Density Estimation}}\label{algo1}
\begin{algorithmic}
\Require Target data $\mathcal{S}^{(0)}$ and $K$ source data $\{\mathcal{S}^{(k)}\}_{k = 1}^K$.
\Ensure $\hat\bbeta^{(0)}$.
\State {\it Step 1:}
Estimate $\bbeta^{(k)}$ as $\tilde \bbeta^{(k)}$ using observations with indices in $\mathcal{D}_{k1}$ by SCAD for {$k=0,\cdots, K$.}

\State {\it Step 2:}
Employ kernel density estimation on the residuals in $\mathcal{D}_{k2}$ to estimate $f_{\epsilon^{(k)}}$.
For any $t$,  $f_{\epsilon^{(k)}}(t)$ is estimated as
     \begin{equation*}
     \hat f_{\epsilon^{(k)}}(t) = \frac{1}{|\mathcal{D}_{k2}|b_k} \sum\limits_{i\in \mathcal{D}_{k2}} K\left(\frac{|t - \hat\epsilon^{(k)}_{i}|}{b_k} \right),
     \end{equation*}
where $K(\cdot)$ is the kernel function with bandwidth $b_k$ for source $k$ and $\hat\epsilon^{(k)}_{i} = y_i^{(k)} -  (\bm x_i^{(k)})^{\top} \tilde \bbeta^{(k)}$  for  $i \in \mathcal{D}_{k2}$.

\State {\it Step 3:}
 Estimate the weights  $\omega_i^{(k)}$ for  $i\in \mathcal{D}_{k3}$ as
      \begin{equation}\label{hat-weight}
      \hat \omega_i^{(k)} = \frac{\hat f_{\epsilon}(y_i^{(k)}-(\bm x_i^{(k)})^\top\tilde\bbeta^{(0)})}{\hat f_{\epsilon^{(k)}}(y_i^{(k)} - (\bm x_i^{(k)})^\top\tilde\bbeta^{(k)})},
      \end{equation}
where  $\hat f_{\epsilon}(t) = [\hat f_{\epsilon^{(k)}}(t) + \hat f_{\epsilon^{(k)}}(-t)]/2$ is the estimator of $f_{\epsilon}(t)$.
The estimated subset for $\mathcal{D}_{k3}$ is denoted as
    $$
    \hat {\mathcal{I}}_{k3} = \{i \in \mathcal{D}_{k3}: |y_i^{(k)}-(\bm x_i^{(k)})^\top\tilde\bbeta^{(0)}| \leq A,~ |\hat \eta_i^{(k)}| \leq M \},
    $$
    where $\hat \eta_i^{(k)}= (\bm x_i^{(k)})^{\top}(\tilde \bbeta^{(k)} - \tilde \bbeta^{(0)})$  for  $i \in \mathcal{D}_{k3}$.

\State {\it Step 4:}
     With the notations $\hat {\mathcal{I}}_{03} = \mathcal{D}_{03}$ and $\omega_i^{(0)} = 1$ for $i \in \hat {\mathcal{I}}_{03}$.
         The final estimator $\hat\bbeta_1^{(0)}$ can be obtained as
\begin{equation}\label{est-unknowW}
\hat \bbeta_1^{(0)} = \underset{\bm\beta \in \mathbb{R}^p}{\rm argmin}~ \frac{1}{|\mathcal{D}_{03}|+\sum_{k=1}^K |\mathcal{D}_{k3}|}
\left\{
\sum\limits_{k = 0}^K \sum\limits_{i \in \mathcal{D}_{k3}} {\rm I}\{i \in \hat {\mathcal{I}}_{k3}\}
 \hat \omega_i^{(k)} \left(y_i^{(k)} - (\bm x_i^{(k)})^\top \bbeta\right)^2
\right\} + \lambda\|\bbeta\|_1.
\end{equation}

\State {\it Step 5:}
Change the order of  $\mathcal{D}_{k1}$, $\mathcal{D}_{k2}$ and $\mathcal{D}_{k3}$ as in the steps above. Denote the estimator by replacing $(\mathcal{D}_{k1}, \mathcal{D}_{k2}, \mathcal{D}_{k3})$ with $(\mathcal{D}_{k2}, \mathcal{D}_{k3}, \mathcal{D}_{k1})$ as $\hat \bbeta_2^{(0)}$, and the estimator by replacing $(\mathcal{D}_{k1}, \mathcal{D}_{k2},\mathcal{D}_{k3})$ with  $(\mathcal{D}_{k3},\mathcal{D}_{k1}, \mathcal{D}_{k2})$ as $\hat \bbeta_3^{(0)}$. The final estimator based on this cross-fitting scheme is taken as
\begin{equation}\label{hat-ker-beta}
\hat \bbeta^{(0)} = (\hat \bbeta_1^{(0)} + \hat \bbeta_2^{(0)} + \hat \bbeta_3^{(0)})/3.
\end{equation}
\end{algorithmic}
\end{algorithm}

For $b_k$, the bandwidth parameter, and $A$ and $M$, the two factors in the constraint defining $\mathcal{I}_k$, we choose them via $J$-fold cross validation that picks their optimal combination to minimize the prediction error of the response on the hold-out set. Of course, this would involve choosing $K+2$ parameters on a grid which is computationally intensive. As a compromise, we set $b_k$ to be the same as long as the smoothness of $f_{\epsilon^{(k)}}$ are roughly the same, and $M = 2A$ in our implementation.

\subsection{Properties of RIW-TL}\label{sec:Properties of RIW-TL}
To study the properties of RIW-TL obtained via {cross fitting}, {for each $j = 1,2,3$,} we need to study those of the selected sample sets $\hat {\mathcal{I}}_{kj}$ which are seen to estimate
 $$
     {\mathcal{I}}_{kj} = \{i \in \mathcal{D}_{kj}: |y_i^{(k)}-(\bm x_i^{(k)})^\top\bbeta^{(0)}| \leq A,~ |\eta_i^{(k)}| \leq M \},
 $$
and the properties of the estimators of the corresponding weights in Step 3. Our strategy to establish the properties of $\hat {\mathcal{I}}_{kj}$ is to bound this set in a suitable way. Towards this, we define
$$
\mathcal{I}_{kj}^{-} = \{i \in \mathcal{D}_{kj}: |\epsilon_i^{(k)} + \eta_i^{(k)}| \leq A - \alpha_n,~|\eta_i^{(k)}| \leq M - \alpha_n\},
$$
$$
\mathcal{I}_{kj}^{+} = \{i \in \mathcal{D}_{kj}: |\epsilon_i^{(k)} + \eta_i^{(k)}| \leq A + \alpha_n,~|\eta_i^{(k)}| \leq M + \alpha_n\},
$$
that serve as  ``lower bound" and   ``upper bound", respectively. The parameter $\alpha_n$ in the two sets  is taken as
\begin{equation}\label{alpha_n}
\alpha_n = 2 \underset{1\le k\le K}{\rm max} \psi_k \underset{0\le k\le K}{\rm max} \|\tilde \bbeta^{(k)} - \bbeta^{(k)}\|_1~~\text{with}~~\psi_k = \underset{i \in \mathcal{D}_k}{\rm max}~ \|\bm x_i^{(k)}\|_{\infty},
\end{equation}
which, as will be discussed later, has an order  $\alpha_n=o_p(1)$ under mild conditions.  {Note that in our algorithm, $\tilde\bbeta^{(k)}$ is constructed based on the data with indices outside  $\mathcal{D}_{kj}$ for all $k$.  Hence given $\bm x_{i}^{(k)}$'s,  $\alpha_n$ can be viewed as a constant when    $\hat{\mathcal{I}}_{kj}$ is under consideration.}
Denote
 $\mathcal{I}_k^{-} = \cup_{j=1}^3 \mathcal{I}_{kj}^{-}$ and $n_{\mathcal{I}_k^{-}} = \Sigma_{j = 1}^3 n_{\mathcal{I}_{kj}^{-}}$ with $n_{\mathcal{I}_{kj}^{-}} = |\mathcal{I}_{kj}^{-}|$, and define  $(\mathcal{I}_k^{+}, n_{\mathcal{I}_k^{+}}, n_{\mathcal{I}_{kj}^{+}})$ likewise.
 To  establish  the properties of RIW-TL estimator, we assume the following conditions.

\begin{enumerate}[]
\item
 \textbf{Condition 4.} For each $j = 1, 2, 3$,  the RE Condition 3 holds with $\mathcal{I}_k$  replaced by $\mathcal{I}_{kj}^{-}$ for all $k$, and $n_{\mathcal{I}}$ replaced by $\Sigma_{k = 1}^K n_{\mathcal{I}_{kj}^{-}}$.

\item
 \textbf{Condition 5.} For $k=0, 1, \cdots, K$, the initial estimator $\tilde \bbeta^{(k)}$ of  $\bbeta^{(k)}$ is consistent in the sense that the $\ell_1$ rate of convergence satisfies $\|\tilde \bbeta^{(k)} - \bbeta^{(k)}\|_1 = O_p(\gamma_k)$ where $\gamma_k = o(1)$.

\item
 \textbf{Condition 6.} The kernel $K(\cdot)$ is symmetric about zero and has bounded continuous derivative of the first order. The  bandwidths $b_k$ satisfy $b_k = o(1)$ and $b_k^2 \gg \psi_k \gamma_k$ for $k = 1,\cdots,K$.

\item
 \textbf{Condition 7.} The maximum element of $\bm \Sigma^{(k)}$ satisfies $\max\limits_{1 \le i,j \le p} \sigma_{ij}^{(k)}\leq C < \infty$ uniformly over $k=0,\cdots, K$.
\end{enumerate}

As shown in the following Proposition \ref{prop4}, we have  $\mathcal{I}_{kj}^{-} \subseteq \hat {\mathcal{I}}_{kj}$  in probability tending to 1.  Condition 4 ensures  that the restricted eigenvalue condition can be applied on the subset $ \mathcal{I}_{kj}^{-}$.
  In Condition 5,   $\bbeta^{(k)}$ can be  exactly or approximately sparse.
 When { $\bbeta^{(k)}$ is exactly sparse with $s_k= |{\rm supp}(\bbeta^{(k)})|$} and $\tilde {\bm \beta}^{(k)}$ is the {LASSO} or SCAD estimator, we have $\gamma_k = s_k \sqrt{{\rm log}\,p / n_k}$, and when  $\tilde {\bm \beta}^{(k)}$ is the debiased {LASSO} estimator \citep{Zhang2014}, one   has   $\gamma_k =  s_k\sqrt{1/n_k}$. For the order of $\alpha_n$, note {that $\psi_k = O_p(\sqrt{{\rm log}\,(n_k p)})$}
  when $\bm x_i^{(k)} \sim N_p(\bm 0,\bSig^{(k)})$. Consequently, it holds that
  \begin{equation}\label{upp-alpha-n}
  \alpha_n \lesssim \max\limits_{0\le k\le K} \gamma_k \max\limits_{1\le k\le K}\sqrt{{\rm log}\,(n_k p)}
  \end{equation}
  in probability. Combining wit  the order of $\gamma_k$, one can see that the condition  $\alpha_n= o_p(1)$ is mild.

In Condition 6, the quantity {$\psi_k \gamma_k$} is from $|\hat\epsilon_i^{(k)}-\epsilon_i^{(k)}|$, which depends on the error of the estimator $\tilde {\bm \beta}^{(k)}$.   Suppose that { $\psi_k \gamma_k \sim n_k^{-c_1}$.} When  $ 2/5 < c_1\le 1/2$, the commonly used optimal bandwidth $b_k \asymp n_k^{-\frac{1}{5}}$ can be adopted; when  $0< c_1 \leq 2/5$,   under-smoothing  is required. In other words, if $\tilde {\bm \beta}^{(k)}$ is a good estimator, the commonly used optimal bandwidth is still applicable;  otherwise under-smoothing is necessary. We first characterize $\hat {\mathcal{I}}_{kj}$ as an estimator of $ {\mathcal{I}}_{kj}$.

\bep\label{prop4}
Under Conditions 1 and 5, for $k = 1,\cdots,K$ and  $j = 1,2,3$, we have the following conclusions: ($i$)
  {$\mathcal{I}_{kj}^-\subseteq \hat{\mathcal{I}}_{kj} \subseteq \mathcal{I}_{kj}^{+}$}; ($ii$) $ \mE(n_{\mathcal{I}_k^{-}}) \asymp \mE(n_{\mathcal{I}_{kj}^-})$ and
    $n_{\mathcal{I}_k} \asymp \mE(n_{\mathcal{I}_k}) \asymp \mE(n_{\hat{\mathcal{I}}_{kj}}) \asymp \mE(n_{\mathcal{I}_{kj}^-})$ with probability tending to 1.
\eep

 Proposition \ref{prop4} implies that $\mE(n_{\hat{\mathcal{I}}_{kj}})$ has the same order as $\mE(n_{\mathcal{I}_k})$, which enables us to obtain the explicit convergence rate in Theorem \ref{the2}.

\begin{lemma}\label{lemma1}
Suppose that Conditions 1, 5 and 6 are satisfied. For {$j = 1,2,3$},  as $\min\limits_{0\le k\le K} n_k\to \infty$, it holds that
$$
\max\limits_{1\le k\le K}\max\limits_{i \in \hat {\mathcal{I}}_{kj}} |\hat \omega_i^{(k)}/ \omega_i^{(k)} - 1| =
O_p (\alpha_n + q_n),
$$
where $\alpha_n$ is defined in \eqref{alpha_n} and $q_n = u_n + v_n$ with ${u_n = \underset{1 \leq k \leq K}{\rm max}  (\psi_k \gamma_k / b_k^2)=o(1)}$, and $v_n = \underset{1 \leq k \leq K}{\rm max}
\{b_k^2 + ({\rm log}\, n_k /(n_k b_k))^{1/2}\} = o(1)$ which is the error in kernel density estimation.
\end{lemma}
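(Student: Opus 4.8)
The plan is to control the relative error through the multiplicative identity
\[
\frac{\hat\omega_i^{(k)}}{\omega_i^{(k)}}-1=\frac{f_{\epsilon^{(k)}}(\epsilon_i^{(k)})}{f_{\epsilon}(\epsilon_i^{(k)}+\eta_i^{(k)})}\cdot\frac{\hat f_{\epsilon}(\hat a_i^{(k)})-f_{\epsilon}(\epsilon_i^{(k)}+\eta_i^{(k)})}{\hat f_{\epsilon^{(k)}}(\hat\epsilon_i^{(k)})}-\frac{\hat f_{\epsilon^{(k)}}(\hat\epsilon_i^{(k)})-f_{\epsilon^{(k)}}(\epsilon_i^{(k)})}{\hat f_{\epsilon^{(k)}}(\hat\epsilon_i^{(k)})},
\]
where $\hat a_i^{(k)}=y_i^{(k)}-(\bm x_i^{(k)})^\top\tilde\bbeta^{(0)}$ and $\hat\epsilon_i^{(k)}=y_i^{(k)}-(\bm x_i^{(k)})^\top\tilde\bbeta^{(k)}$; it then suffices to show, uniformly over $1\le k\le K$ and $i\in\hat{\mathcal{I}}_{kj}$, that (a) the population quantities $f_{\epsilon}(\epsilon_i^{(k)}+\eta_i^{(k)})$ and $f_{\epsilon^{(k)}}(\epsilon_i^{(k)})$ are bounded away from $0$ and $\infty$, (b) the two numerator estimation errors above are $O_p(\alpha_n+q_n)$, and (c) $\hat f_{\epsilon^{(k)}}(\hat\epsilon_i^{(k)})$ is bounded away from $0$ with probability tending to one.

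For (a) I would invoke Proposition~\ref{prop4}(i): with probability tending to one $\hat{\mathcal{I}}_{kj}\subseteq\mathcal{I}_{kj}^{+}$, so on that event $|\epsilon_i^{(k)}+\eta_i^{(k)}|\le A+\alpha_n$, $|\eta_i^{(k)}|\le M+\alpha_n$, and hence $|\epsilon_i^{(k)}|\le A+M+2\alpha_n$; since $\alpha_n=o_p(1)$, all arguments at play eventually lie in one fixed compact interval. On that interval Condition~1 makes $f_{\epsilon^{(k)}}$ continuous, strictly positive and bounded, so (a) follows exactly as in Proposition~\ref{prop2}, and the symmetrized density $f_{\epsilon}=[f_{\epsilon^{(k)}}(\cdot)+f_{\epsilon^{(k)}}(-\cdot)]/2$ inherits the same bounds.

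For (b) I would carry out the argument for the denominator error $\hat f_{\epsilon^{(k)}}(\hat\epsilon_i^{(k)})-f_{\epsilon^{(k)}}(\epsilon_i^{(k)})$ in detail, the numerator being analogous with $\tilde\bbeta^{(0)}$, $\hat a_i^{(k)}$ and $f_{\epsilon}$ replacing $\tilde\bbeta^{(k)}$, $\hat\epsilon_i^{(k)}$ and $f_{\epsilon^{(k)}}$. Writing $\hat\epsilon_i^{(k)}=\epsilon_i^{(k)}+r_i^{(k)}$ with $\max_i|r_i^{(k)}|\le\psi_k\|\tilde\bbeta^{(k)}-\bbeta^{(k)}\|_1\le\alpha_n/2$, I would split the error as $|f_{\epsilon^{(k)}}(\epsilon_i^{(k)}+r_i^{(k)})-f_{\epsilon^{(k)}}(\epsilon_i^{(k)})|$, which is $O_p(\alpha_n)$ because $f_{\epsilon^{(k)}}$ has a bounded first derivative (Condition~1), plus $\sup_{t\in\mathcal{K}}|\hat f_{\epsilon^{(k)}}(t)-f_{\epsilon^{(k)}}(t)|$ over a fixed compact interval $\mathcal{K}$. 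I would further split this supremum into $\sup_t|\hat f_{\epsilon^{(k)}}(t)-\tilde f_{\epsilon^{(k)}}(t)|$, where $\tilde f_{\epsilon^{(k)}}$ is the kernel estimator built from the true errors $\{\epsilon_i^{(k)}\}_{i\in\mathcal{D}_{k2}}$, plus the classical kernel error $\sup_t|\tilde f_{\epsilon^{(k)}}(t)-f_{\epsilon^{(k)}}(t)|$. For the first piece, the Lipschitz bound $|K(a)-K(b)|\le\|K'\|_\infty|a-b|$ (Condition~6) together with $\max_i|\hat\epsilon_i^{(k)}-\epsilon_i^{(k)}|\le\psi_k\gamma_k$ (Condition~5) gives $O_p(\psi_k\gamma_k/b_k^2)=O_p(u_n)$; for the second, standard uniform consistency of kernel density estimation on a compact set --- using the smoothness in Conditions~1 and~6 and the cross-fitting fact that $\tilde\bbeta^{(k)}$ is independent of $\mathcal{D}_{k2}$ --- gives $O_p(b_k^2+(\log n_k/(n_k b_k))^{1/2})=O_p(v_n)$. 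A union bound to replace $k$-wise control by $\max_{1\le k\le K}$ costs at most a logarithmic factor absorbed into $v_n$, which yields (b). Then (c) is immediate from (a), (b) and $\alpha_n+q_n=o_p(1)$, since $\hat f_{\epsilon^{(k)}}(\hat\epsilon_i^{(k)})\ge f_{\epsilon^{(k)}}(\epsilon_i^{(k)})-|\hat f_{\epsilon^{(k)}}(\hat\epsilon_i^{(k)})-f_{\epsilon^{(k)}}(\epsilon_i^{(k)})|$. Plugging (a)--(c) back into the identity gives the stated rate.

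The hard part will be (b): the kernel estimator incurs a double plug-in error --- the estimated residuals enter both as the sample used to form $\hat f_{\epsilon^{(k)}}$ and as the evaluation point at which it is queried --- and the bound must be uniform over a compact interval and simultaneously over all $K$ sources. Cleanly disentangling these two layers (which is exactly why the residual-perturbation scale $\alpha_n$, and not merely $u_n$, appears in the rate) and exploiting the cross-fitting independence to license the classical kernel bound is where the real effort lies; the remaining steps are routine bookkeeping.
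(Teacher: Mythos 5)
Your proposal is correct and follows essentially the same route as the paper: both arguments reduce the relative weight error to (i) the uniform kernel-estimation error $O_p(q_n)$, split into the Lipschitz-kernel contribution $u_n$ from the estimated residuals and the classical error $v_n$, and (ii) the $O_p(\alpha_n)$ evaluation-point shift caused by plugging in $\tilde\bbeta^{(0)}$ and $\tilde\bbeta^{(k)}$, all made rigorous by the lower bound $\tau_l$ on the densities over $\hat{\mathcal{I}}_{kj}\subseteq\mathcal{I}_{kj}^{+}$. The only cosmetic difference is that the paper telescopes through an intermediate weight $\tilde\omega_i^{(k)}$ rather than using your single exact identity, but the quantities controlled and the conditions invoked are identical.
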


Lemma \ref{lemma1} shows that the estimated weights converge to the true weights for all $i \in \hat {\mathcal{I}}_{kj}$.

\begin{remark}\label{remark2}
There is a subtle issue associated with the bias for the estimated subsets $\hat {\mathcal{I}}_{kj}$'s. For the true subset $\mathcal{I}_k$ in \eqref{weight-constraint}, recall that the bias  $\text{Bias}(f_\epsilon, \mathcal{I}_k)$  can be  eliminated when $f_{\epsilon}$ is symmetric. However, for the  estimated subset $\hat {\mathcal{I}}_{kj}$,  the bias $\text{Bias}(f_\epsilon, \hat {\mathcal{I}}_{kj})$ is no longer zero  due to estimation errors. This bias is small nevertheless. To see this,
take the subset $\hat{\mathcal{I}}_{k3}$ as an example for illustration.    When $f_{\epsilon}$ is a symmetric function,  the bias  becomes
$$
\text{Bias}(f_\epsilon, \hat {\mathcal{I}}_{k3}) = \int_{ i\in \hat {\mathcal{I}}_{k3}}  (\epsilon_i^{(k)} + \eta_i^{(k)}) f_{\epsilon}(\epsilon_i^{(k)} + \eta_i^{(k)}) d \epsilon_i^{(k)} =  \int_{A - r_i^{(k)0}}^{A + r_i^{(k)0}} t f_{\epsilon}(t) dt \approx 2 A f_{\epsilon}(A) r_i^{(k)0},
$$
   where  $r_i^{(k)0} = (\bm x_i^{(k)})^{\top}(\tilde \bbeta^{(0)} - \bbeta^{(0)})$.
 One can see that this bias depends on $r_i^{(k)0}$, implying that the estimated error from  $\tilde \bbeta^{(0)}$ will appear in the final convergence rate. But the constant $A f_{\epsilon}(A)$ can be small (i.e. equivalently when $A$ is large). Particularly, when $f_{\epsilon}$ is the density of the standard normal distribution, we see that $A f_{\epsilon}(A) r_i^{(k)0} \asymp A {\rm exp}(-A^2/2) r_i^{(k)0}$,
which is small for large A.
\end{remark}

Combining Lemma \ref{lemma1} and the consistency of $\hat{\mathcal{I}}_{kj}$ in Proposition \ref{prop4}, the convergence of the RIW-TL estimator can be proved.
 Let $\hat {\mathcal{I}} = \cup_{k=1}^K \hat {\mathcal{I}}_k$ and  define $\mu_{\rm max} =  \Sigma_{k=1}^K \Sigma_{i \in \mathcal{D}_{k}} \|\bm x_i^{(k)}\|_{\infty} / \Sigma_{k = 1}^K n_k$, which is the empirical version of $\Sigma_{k = 1}^K \pi_k \mE(\|\bm x^{(k)}\|_\infty)$ with $\pi_k = n_k / \Sigma_{k = 1}^K n_k$. Note that
when $\bm x_i^{(k)}$ is Gaussian, we have $\mu_{\rm max} = O_p(\sqrt{{\rm log}\,p})$. When the predictors are bounded in $\ell_\infty$ norm, we have $\mu_{\rm max} = O(1)$. Bounded predictors are commonly seen in image data \citep{Shorten2019} and gene expression data \citep{Vinas2022}, among many others.  Theorem \ref{the2} below establishes the convergence rate of the RIW-TL estimator computed by the cross-fitting algorithm.

\begin{theorem}[Convergence rate of RIW-TL]\label{the2}
Suppose that Conditions 1-7 are satisfied. Let $\lambda = 2 (\lambda_n^{(1)} + \lambda_{n}^{(2)})$ with
   $$
 \lambda_n^{(1)} = C_1 \rho_{\mathcal{I}} \sqrt{\log p / (n_0 + \mE(n_{\mathcal{I}}))},~~~~
 \lambda_{n}^{(2)}= C_2
 [\mu_{\rm max} (q_n + \alpha_n) + C_{f_{\epsilon},A}  \gamma_0],
 $$
 where $\rho_{\mathcal{I}}$ is the sample usage rate discussed in Corollary \ref{coro1} and $C_{f_{\epsilon},A} = \max\limits_{a\in [A - \theta_0,A + \theta_0] }a f_{\epsilon}(a)$ for some $\theta_0$ sufficiently small. 
 As $\min\limits_{0\le k\le K}n_{k}\to \infty$, it follows that
\begin{equation}\label{final-rate}
\|\hat \bbeta^{(0)} - \bbeta^{(0)}\|_2^2
 = O_p \left[
 \frac{s_0 {\rm log}\,p}{n_0 + \mathbb{E}(n_{\mathcal{I}})} +
 \rho_{\mathcal{I}}^{-2}
s_0 \{\mu_{\rm max} (q_n + \alpha_n) + {C_{f_{\epsilon},A}}  \gamma_0\}^2
\right],
\end{equation}
where $\alpha_n$ is defined in \eqref{alpha_n}, $q_n$ is defined in Lemma \ref{lemma1}, and $\gamma_0$ is the convergence rate of the initial estimator $\tilde {\bm\beta}^{(0)}$.
\end{theorem}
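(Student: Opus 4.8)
\textbf{Proof proposal for Theorem \ref{the2}.}

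The plan is to follow the basic inequality strategy used for the oracle RIW-TL in Theorem \ref{the1}, but now carefully tracking the additional error terms introduced by (a) estimating the weights $\omega_i^{(k)}$ by $\hat\omega_i^{(k)}$, (b) estimating the selection sets $\mathcal{I}_{kj}$ by $\hat{\mathcal{I}}_{kj}$, and (c) the residual bias $\text{Bias}(f_\epsilon,\hat{\mathcal{I}}_{kj})$ discussed in Remark \ref{remark2}. Because $\hat\bbeta^{(0)}$ is the average of the three cross-fitted estimators $\hat\bbeta_j^{(0)}$, I would first prove the stated rate for a single $\hat\bbeta_1^{(0)}$ (say) and then conclude by the triangle inequality $\|\hat\bbeta^{(0)}-\bbeta^{(0)}\|\le \tfrac13\sum_{j}\|\hat\bbeta_j^{(0)}-\bbeta^{(0)}\|$, using that the bound is the same for each $j$. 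Throughout, everything is done conditionally on the data used to construct the preliminary estimators $\tilde\bbeta^{(k)}$ and the kernel estimates, so that $\hat{\mathcal{I}}_{kj}$, $\hat\omega_i^{(k)}$ and $\alpha_n$ may be treated as fixed; Proposition \ref{prop4}, Lemma \ref{lemma1} and Condition 5 then enter as high-probability events.

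The core computation starts from the basic inequality: plugging $\hat\bbeta_1^{(0)}$ into the penalized objective in \eqref{est-unknowW} gives a value no larger than at $\bbeta^{(0)}$. Writing $\hat\bm v=\hat\bbeta_1^{(0)}-\bbeta^{(0)}$ and expanding the quadratic terms, the cross term produces an empirical "noise" term of the form $\hat\bm v^{\top}\big[\sum_{k=0}^K\sum_{i\in\mathcal{D}_{k3}}\mathrm{I}(i\in\hat{\mathcal{I}}_{k3})\,\hat\omega_i^{(k)}(\eta_i^{(k)}+\epsilon_i^{(k)})\,\bm x_i^{(k)}\big]$, which I would bound in $\ell_\infty$ and pair against $\|\hat\bm v\|_1$. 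This linear term I would split as $\hat\omega_i^{(k)}=\omega_i^{(k)}+(\hat\omega_i^{(k)}-\omega_i^{(k)})$ and $\mathrm{I}(i\in\hat{\mathcal{I}}_{k3})=\mathrm{I}(i\in\mathcal{I}_{k3})+\{\mathrm{I}(i\in\hat{\mathcal{I}}_{k3})-\mathrm{I}(i\in\mathcal{I}_{k3})\}$, producing four pieces: (1) the leading stochastic term $\sum\omega_i^{(k)}\xi_i^{(k)}\bm x_i^{(k)}$ over the true set, which by the symmetry of $f_\epsilon$ has (near) zero conditional mean and, being a sum of bounded sub-Gaussian-type increments, is $O_p(\rho_{\mathcal{I}}\sqrt{\log p/(n_0+\mE(n_{\mathcal{I}}))})$ exactly as in Theorem \ref{the1}; (2) a weight-error term controlled using Lemma \ref{lemma1}, contributing $O_p(\mu_{\max}(q_n+\alpha_n))$ after using $|\eta_i^{(k)}+\epsilon_i^{(k)}|\le A$ on $\hat{\mathcal{I}}_{k3}$ and $\|\bm x_i^{(k)}\|_\infty$ bounds aggregated into $\mu_{\max}$; (3) a set-mismatch term, which by Proposition \ref{prop4}(i) is supported on the symmetric difference $\mathcal{I}_{kj}^+\setminus\mathcal{I}_{kj}^-$ whose expected size is $O(\alpha_n)$ times the boundary density, again absorbed into $\mu_{\max}\alpha_n$; and (4) the non-vanishing bias term $\text{Bias}(f_\epsilon,\hat{\mathcal{I}}_{k3})\approx 2Af_\epsilon(A)\,r_i^{(k)0}$ from Remark \ref{remark2}, which summed and paired with $\|\hat\bm v\|_1$ gives the $C_{f_\epsilon,A}\gamma_0$ contribution (here $\gamma_0$ is the $\ell_1$ rate of $\tilde\bbeta^{(0)}$). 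Collecting, the total "effective noise level" is $\lambda_n^{(1)}+\lambda_n^{(2)}$, which dictates the choice $\lambda=2(\lambda_n^{(1)}+\lambda_n^{(2)})$; the standard cone argument then forces $\hat\bm v\in\mathcal{E}(\mathcal{H}_0,3)$, Condition 4 supplies the restricted eigenvalue lower bound $\phi^2$ on the aggregated empirical Gram matrix $\sum_k (\bX^{(k)})^\top \bI_{\mathcal{I}_{kj}^-}\bX^{(k)}$ (noting $\bI_{\mathcal{I}_{kj}^-}\preceq\bI_{\hat{\mathcal{I}}_{kj}}$ by Proposition \ref{prop4}(i) so the RE bound transfers), and the usual LASSO algebra yields $\|\hat\bm v\|^2\lesssim s_0\lambda^2/\phi^4$, which is exactly \eqref{final-rate} after substituting $\lambda$ and using $\rho_{\mathcal{I}}^{-1}$ from the normalization mismatch between the $n$-weighting and the $\mE(n_{\mathcal{I}})$-scaling.

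The main obstacle I anticipate is controlling piece (3), the set-mismatch term, rigorously: one must show not only that $\mathcal{I}_{kj}^-\subseteq\hat{\mathcal{I}}_{kj}\subseteq\mathcal{I}_{kj}^+$ (Proposition \ref{prop4}) but that the contribution of the "sandwich gap" observations to the $\ell_\infty$ bound of the noise vector is genuinely of order $\mu_{\max}\alpha_n$ and not larger — this requires a uniform (over coordinates) bound on $\sum_{i\in\mathcal{I}_{kj}^+\setminus\mathcal{I}_{kj}^-}|\bm x_{ij}^{(k)}|\cdot|\eta_i^{(k)}+\epsilon_i^{(k)}|\,\omega_i^{(k)}$, i.e. a concentration argument for a sum of bounded terms over a random index set of expected size $O(n_k\alpha_n)$ (by the density-continuity argument behind Proposition \ref{prop3}), combined with a union bound over $p$ coordinates. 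A secondary delicate point is that the four error pieces must all be shown to be dominated by (or of the same order as) $\lambda$ simultaneously on a single high-probability event, which requires combining the high-probability statements of Proposition \ref{prop4}, Lemma \ref{lemma1} and Condition 5 and verifying the side conditions $\alpha_n=o_p(1)$, $q_n=o(1)$ (guaranteed by Conditions 5, 6, 7 via \eqref{upp-alpha-n}) so that the cone membership of $\hat\bm v$ is not destroyed. The final averaging step across $j=1,2,3$ is routine.
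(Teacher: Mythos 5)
Your proposal is correct in substance and shares the paper's overall skeleton (cross-fitting conditioning, basic inequality, a two-part penalty level $\lambda_n^{(1)}+\lambda_n^{(2)}$, RE via $\bI_{\mathcal{I}_{kj}^-}\preceq\bI_{\hat{\mathcal{I}}_{kj}}$, then standard LASSO algebra and averaging over $j$), but it organizes the key $\ell_\infty$ bound on the noise vector differently. You expand $\mathrm{I}(i\in\hat{\mathcal{I}}_{k3})\hat\omega_i^{(k)}$ term by term into a true-set/true-weight piece, a weight-error piece, a set-mismatch piece, and a bias piece, and you correctly flag the set-mismatch piece (concentration of a sum over the random sandwich gap $\mathcal{I}_{kj}^{+}\setminus\mathcal{I}_{kj}^{-}$, uniformly over $p$ coordinates) as the hard step. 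The paper avoids that step entirely: it defines $\check{\epsilon}_i^{(1)}=(\epsilon_i^{(1)}+\eta_i^{(1)})\,\mathrm{I}(i\in\hat{\mathcal{I}}_{13})\,\hat\omega_i^{(1)}$, observes that conditional on $\check{\mathcal{D}}$ these are independent and bounded, and splits only into $\check{\bm\epsilon}^{(1)}-\mathbb{E}(\check{\bm\epsilon}^{(1)})$ (one sub-Gaussian concentration bound that silently absorbs the fluctuations of both the weight error and the set mismatch, yielding $\lambda_n^{(1)}$) and the mean $\mathbb{E}(\check{\epsilon}_i^{(1)})=E_{i1}+E_{i2}$, where $E_{i1}$ is the shifted-limits integral of Remark \ref{remark2} giving $C_{f_\epsilon,A}\gamma_0$ via a covariance-matrix bound, and $E_{i2}$ is controlled by Lemma \ref{lemma1} giving $\mu_{\max}(q_n+\alpha_n)$. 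This centering trick is what buys the paper a clean proof; your route would also work but requires the extra concentration argument you identify, and as written your pieces (3) and (4) partially double-count (the bias \emph{is} the mean of the set-mismatch piece, so you must either center piece (3) and compute its mean as piece (4), or bound it crudely and accept that $\mu_{\max}\alpha_n$ already dominates $\gamma_0$). Two smaller points you gloss over that the paper handles explicitly: the quadratic form in the basic inequality involves $\hat\bW$ rather than $\bW$, producing a correction term ($\Lambda_2$ in the paper) that must be shown to be at most $\theta_0$ times the quadratic form so the RE constant degrades only to $(1-\theta_0)\phi_2^2$; and the RE lower bound on the weighted Gram matrix needs $\hat\omega_i^{(k)}\ge c_1>0$ on $\hat{\mathcal{I}}_{k3}$ (the paper's event $\mathcal{T}$), both consequences of Lemma \ref{lemma1} plus Proposition \ref{prop2}.
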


  The rate in this theorem involves two terms. The first term is the same as the rate of the oracle RIW-TL in Theorem \ref{the1}.  The second term involves the estimation errors from estimating the weights $\omega_i^{(k)}$ and selecting the sample ${\mathcal{I}}_{kj}$. Specifically,  $q_n+\alpha_n$ comes from the error of the weights estimator as in Lemma \ref{lemma1}  and $\gamma_0$ comes from  that of $\hat{\mathcal{I}}_{kj}$. {It is worth pointing out that  the distances between the target and sources in terms of $d_k$ affect only the quantities $\rho_{\mathcal{I}}$ and $\mE(n_{\mathcal{I}})$. Particularly, when $d_k=O(1)$ for all $k\ge 1$, it holds that  $\mE(n_{\mathcal{I}})\asymp \sum_{k=1}^K n_k$ and $\rho_{\mathcal{I}}\asymp1$.
To obtain a  more explicit expression of the second term on the right hand side of  \eqref{final-rate}, we consider the special cases   in the following  Corollary \ref{coro-2}. Recall  that $\gamma_k$ is the rate  of  $\|\tilde\bbeta^{(k)}-\bbeta^{(k)}\|_1$ for $k=0,\cdots,K$.

 \begin{corollary}\label{coro-2}
  Assume that the conditions in Theorem \ref{the2} and  ($i$) the conditions in Corollary \ref{coro1} hold with
all the eigenvalues of $\bSig^{(k)}$ bounded away from 0 and $\infty$;
 ($ii$) the sample sizes of sources are much larger than the target such that $\max\limits_{1\le k\le K}\gamma_k=o(\gamma_0)$; ($iii$)  $s_0\asymp 1$; and ($iv$) $\|\bm\delta^{(k)}\|=O(1)$ for all $k\ge 1$. Then we have
 $$
 \|\hat \bbeta^{(0)} - \bbeta^{(0)}\|_2^2
 = O_p \left\{
 \frac{s_0 {\rm log}\,p}{n_0 + \sum_{k=1}^K n_k  } +  \gamma_0^2 \log(p) \log (n_1 p)\right\}.
 $$
 \end{corollary}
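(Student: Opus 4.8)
The plan is to derive Corollary \ref{coro-2} simply by specializing the general bound \eqref{final-rate} of Theorem \ref{the2}: since every hypothesis of Theorem \ref{the2} is assumed, \eqref{final-rate} holds verbatim, and all that remains is to simplify the quantities $\mE(n_{\mathcal{I}})$, $\rho_{\mathcal{I}}$, $\mu_{\rm max}$, $\alpha_n$, $q_n$ and $C_{f_{\epsilon},A}$ appearing in it under the extra assumptions (i)--(iv).

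First I would bound the contrasts. By (iv), $\|\bm\delta^{(k)}\| = O(1)$, and by (i) the eigenvalues of $\bSig^{(k)}$ are uniformly bounded, so $d_k^2 = (\bm\delta^{(k)})^\top \bSig^{(k)} \bm\delta^{(k)} = O(1)$ uniformly in $k$. Feeding $d_k = O(1)$ into Proposition \ref{prop3} gives $\mE(n_{\mathcal{I}_k}) \asymp n_k$, hence $\mE(n_{\mathcal{I}}) \asymp \sum_{k=1}^K n_k$; Corollary \ref{coro1} then gives $\rho_{\mathcal{I}} \asymp 1$. Substituting into the first term of \eqref{final-rate} yields $s_0 \log p/(n_0+\sum_{k=1}^K n_k)$, the first term claimed.

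Next I would handle the second term $\rho_{\mathcal{I}}^{-2} s_0\{\mu_{\rm max}(q_n+\alpha_n)+C_{f_{\epsilon},A}\gamma_0\}^2$. Here $\rho_{\mathcal{I}}^{-2}\asymp 1$ and, by (iii), $s_0\asymp 1$. The conditions of Proposition \ref{prop3} place us in the Gaussian-design regime, so $\mu_{\rm max}=O_p(\sqrt{\log p})$ and $\psi_k=O_p(\sqrt{\log(n_k p)})$. By (ii), $\max_{0\le k\le K}\|\tilde\bbeta^{(k)}-\bbeta^{(k)}\|_1 = O_p(\max_{0\le k\le K}\gamma_k) = O_p(\gamma_0)$, so from \eqref{alpha_n} and \eqref{upp-alpha-n} we get $\alpha_n = O_p(\gamma_0\sqrt{\log(n_1 p)})$, with $n_1$ the largest (or a representative, comparable) source size so that $\max_{1\le k\le K}\log(n_k p)\asymp\log(n_1 p)$. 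The kernel-estimation error $q_n=u_n+v_n$ is $o(1)$ by Condition 6 and, in the bandwidth range discussed after Condition 6 with the sources sufficiently larger than the target, is of no larger order than $\alpha_n$; since also $C_{f_{\epsilon},A}$ is a bounded constant, $C_{f_{\epsilon},A}\gamma_0=O(\gamma_0)=O(\alpha_n)$. Hence the bracket is $O_p(\sqrt{\log p}\,\gamma_0\sqrt{\log(n_1 p)})$, and squaring gives $\gamma_0^2\log(p)\log(n_1 p)$, completing the proof.

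The substitutions are routine; the step that needs care is verifying that neither $\mu_{\rm max}q_n$ nor $C_{f_{\epsilon},A}\gamma_0$ exceeds $\gamma_0\sqrt{\log p\,\log(n_1 p)}$. The $C_{f_{\epsilon},A}\gamma_0$ piece is harmless, since $C_{f_{\epsilon},A}$ is bounded and can be made small by taking $A$ large (cf.~Remark \ref{remark2}). The genuinely delicate part is $q_n$: one must track $u_n=\max_k\psi_k\gamma_k/b_k^2$ and $v_n=\max_k\{b_k^2+(\log n_k/(n_k b_k))^{1/2}\}$ and check that, with the bandwidths chosen as in Condition 6 and $n_k$ growing fast enough relative to $n_0$, both are $O(\gamma_0\sqrt{\log(n_1 p)})$. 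This is where assumption (ii) and Condition 6 must be used in combination; everything else is bookkeeping on \eqref{final-rate}.
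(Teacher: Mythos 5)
Your proposal is correct and follows essentially the same route as the paper: specialize \eqref{final-rate}, use $d_k=O(1)$ to get $\mE(n_{\mathcal{I}})\asymp\sum_k n_k$ and $\rho_{\mathcal{I}}\asymp 1$, bound $\alpha_n\lesssim\gamma_0\sqrt{\log(n_1p)}$ via \eqref{upp-alpha-n} and condition ($ii$), take $\mu_{\rm max}=O_p(\sqrt{\log p})$ for Gaussian designs, and absorb $q_n$ and $C_{f_\epsilon,A}\gamma_0$ into the $\alpha_n$ term before squaring. The one step you rightly flag as delicate --- that $q_n=O_p(\alpha_n)$ --- is handled in the paper only by the informal remark that $q_n$ depends solely on the source sample sizes, which are much larger, so your treatment is no less complete than the original.
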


In Theorem \ref{the2} and Corollary \ref{coro-2}, we allow $\|\bm\delta^{(k)}\|\nrightarrow 0$ for all $k\ge 1$ thus allowing the target and the sources to be not close. The price paid, however, is that the rate obtained is conservative,  due to some technical reasons, in the sense that the error rate seems no better than  $\gamma_0^2$. Nevertheless, our numerical experience suggests that RIW-TL performs better than LASSO, which makes sense {since in the bound, the constant $C_{f_{\epsilon},A}$ can be small when $A$ is large.} Theoretically, to obtain a rate better than that of  LASSO,  we may use Trans-Lasso as an initial estimator. In the next section, we investigate an alternative version of RIW-TL when $f_{\epsilon}$ is taken as the density function of a uniform distribution. In this case, an average distance between the sources and the target rather than $\gamma_0$  appears in the error rate. As we will see,  this choice of $f_{\epsilon}$ is well suited for the relatively easy transfer learning scenario {where most of the sources are informative.}

\vspace{2mm}
\noindent{\textbf{Difference between Trans-Lasso and RIW-TL.}}
To appreciate the results in Theorem \ref{the2} and Corollary \ref{coro-2} further, it is instructive to make a comparison with the results in \cite{li2021}. For the latter, to estimate the oracle Trans-Lasso, an aggregation procedure called Trans-Lasso is proposed to combine a collection of candidate estimators, each of which is based on an estimate of the informative source set $\mathcal{J}$ as defined in Section \ref{sec:tl}. In particular, \cite{li2021} showed that Trans-Lasso achieves the following convergence rate
 \[\text{Trans-Lasso Rate:}~~
\frac{s_0 {\rm log}\,p}{n_0+n_{\mathcal{J}}} + h\sqrt{\frac{{\rm log}\,p} {n_0}} \wedge h^2+\frac{\log K}{ n_0},\]
where $h$ satisfies $h\ge \max_{k\in \mathcal{J}}\|\bm\delta^{(k)}\|_1$.
{When $n_{\mathcal{J}}\gg n_0$, the Trans-Lasso rate  is determined by $ h\sqrt{ \log p / n_0} $ if $h\gtrsim \sqrt{\log p/n_0}$, and $h^2+\log K/n_0$ otherwise. Under the settings in  Corollary \ref{coro-2} when $\tilde\bbeta_0$ is the LASSO estimator, $\gamma_0^2$ has the order  $s_0^2\log p/n_0$. Thus, if
 $ h\sqrt{ \log p / n_0}\ge \gamma_0^2 \log(p) \log (n_1 p)$ (equivalently  $h\gtrsim n_0^{-1/2} s_0^2 (\log p)^{3/2}(\log n_1p)$), RIW-TL is better than Trans-Lasso, demonstrating the advantage of RIW-TL when the distances between sources and the target is moderate or large. However, when $h$ is small, the error rate of Trans-Lasso can be smaller than that of  RIW-TL.

\bigskip
In view of the above comparison, we consider a special case where both $n_0$ and $h$ are small (e.g. $h\lesssim \sqrt{\log p/n_0}$).  From the discussion above,  Trans-Lasso has rate $h^2+n_0^{-1}\log K$, which is equivalent to $n_0^{-1}$  up to some logarithmic terms; the same is true for  RIW-TL. These results hold even if $n_k(k\ge 1)$ are very large. This may be a limitation for transfer learning in small target data because increasing the number of auxiliary samples will not improve the convergence rate, even when sources are close to the target.} Hence both methods are conservative in this setting, substantially inferior to their oracle versions respectively that have a rate inversely proportional to $n_0+\sum_{k=1}^K n_k$ in the best case scenario.  In the following section, we design a variant of RIW-TL by taking  $f_{\epsilon}$ as the density of a  uniform distribution and show that it achieves the same convergence rate as the oracle under this particular setting.

\section{An Alternative RIW-TL}\label{sec:riw-tl-u}
In this section,  we develop an alternative version of RIW-TL, which will be referred to as RIW-TL-U, by taking $\epsilon$ from a uniform distribution; that is, $\epsilon \sim U[-T, T]$ such that  $f_\epsilon(t)=(2T)^{-1} {\rm I}(|t|\le T)$, where $T$ is a tuning parameter. The denominator is still estimated via kernel density estimation.
Similar to what we have provided on RIW-TL, we discuss the rate of the oracle RIW-TL-U assuming known weights and sample selection sets, before proceeding to discuss the rate of RIW-TL-U when they are estimated.

Before delving into technical details, we quickly summarize our findings. It turns out that the convergence rate of RIW-TL-U is independent of   the convergence rate of $\tilde {\bm\beta}^{(0)}$ as long as $\tilde {\bm\beta}^{(0)}$  is consistent. The reason for this is that for the numerator in \eqref{weight2}, we only need a consistent estimator of the endpoints, thus relaxing the dependence on $n_0$. The price to pay is that a term on the average bias appears in the convergence rate, which fortunately can be small when $\bm\delta^{(k)}$'s are small on average. Under this setting, we show that RIW-TL can achieve the rate of its oracle version.

The development of this section largely follows the last two sections. Since we use a different distribution for $f_\epsilon$, we will introduce a new set of notations to be consistent with this choice. First we write the weight associated with the data $\bm z_i^{(k)}$ as
\begin{equation}\label{weights-U}
\omega_{i,T}^{(k)} = \frac{f_{\epsilon}(\epsilon_i^{(k)} + \eta_i^{(k)})}{f_{\epsilon^{(k)}}(\epsilon_i^{(k)})} = \frac{1}{2T f_{\epsilon^{(k)}}(\epsilon_i^{(k)})} {\rm I}(|\epsilon_i^{(k)} + \eta_i^{(k)}| \leq T),
\end{equation}
while for sample selection, we look at the observations in the following subset
\begin{equation}\label{weight-constraint-2}
\mathcal{I}'_k = \{i \in \mathcal{D}_k: |\epsilon_i^{(k)}| \le A,~ |\eta_i^{(k)}| \leq M \}, \qquad k = 1,\cdots,K,
\end{equation}
where $A$ and $M$ are  parameters satisfying $A + M \leq T$. Define $n_{\mathcal{I}'_k} = |\mathcal{I}'_k|$ and $n_{\mathcal{I}'}=\sum_{k=1}^K n_{\mathcal{I}'_k}$. First, we have the following result regarding the sample usage rate.

\bep\label{prop5}
If $\epsilon_i^{(k)}$ is distributed as $N(0,1)$ and $\bm x_i^{(k)}$ follows $N(\bm 0, \bSig^{(k)})$ for $k=1,\cdots, K$,
it holds that $\mE(n_{\mathcal{I}'_k}) \gtrsim n_k/\max\{d_k,1\}$ for $k=1,\cdots, K$.
Furthermore, the conclusions of Corollary \ref{coro1} holds for the sample usage rate defined as $\rho_{\mathcal{I}'}=(n_0 + \mathbb{E}(n_{\mathcal{I}'}))/(n_0 + \sum_{k=1}^K n_k)$.
\eep

Similar to  $\tilde {\bm \beta}_{ora}^{(0)}$,  we define the oracle RIW-TL-U estimator $\tilde \bbeta_{T,ora}^{(0)}$ by replacing $(\mathcal{I}_k, \omega_i^{(k)})$ with $(\mathcal{I}'_k, \omega_{i,T}^{(k)})$ in \eqref{ora}, of which the convergence rate is presented in the following theorem.   Define  $h_{\rm ave}=\sum_{k = 1}^K \pi_k \|\bbeta^{(k)} - \bbeta^{(0)}\|_1 $ with $\pi_k = n_k / \sum_{j=1}^K n_j$, which measures the average distance between the target and sources.

\begin{theorem}\label{the3}
 Assume that Conditions 2 and 3 hold. For $k = 1,\cdots,K$, suppose that $\mathcal{I}'_k$ are observed and the weights $\omega_{i,T}^{(k)}$ are known for all $i \in \mathcal{I}'_k$. Let $\lambda = 2 (\lambda_n^{(1)} + \lambda_{n}^{(2)})$ with
   $$
 \lambda_n^{(1)} = C_1 \rho_{\mathcal{I}'} \sqrt{\log p / (n_0 + \mE(n_{\mathcal{I}'}))},~~
 \lambda_{n}^{(2)}= C_2 C_{A,T} h_{\rm ave},
 $$
 for some constants $C_1, C_2 > 0$, where  $C_{A,T} = A/T<1$. As $\min\limits_{0\le k\le K} n_k\to \infty$, it follows that
\begin{equation}\label{oracel-rate-U}
\|\tilde \bbeta_{T,ora}^{(0)} - \bbeta^{(0)}\|^2
 = O_p \left\{
\frac{s_0 {\rm log}\,p}{n_0 + \mathbb{E}(n_{\mathcal{I}'})} +
\rho_{\mathcal{I}'}^{-2} s_0
 h_{\rm ave}^2
\right\}.
\end{equation}
\end{theorem}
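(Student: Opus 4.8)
The plan is to run the classical basic‑inequality argument for $\ell_1$‑penalized least squares \citep{Bickel2009}, with the new ingredient being the bias created by the truncation set $\mathcal{I}'_k$ together with the uniform choice of $f_\epsilon$. Write $N = n_0 + \sum_{k=1}^K n_k$, put $\hat{\bm u} = \tilde\bbeta_{T,ora}^{(0)} - \bbeta^{(0)}$, and use $y_i^{(k)} - (\bm x_i^{(k)})^\top\bbeta^{(0)} = \eta_i^{(k)} + \epsilon_i^{(k)}$ with $\eta_i^{(0)}\equiv 0$. Since $\tilde\bbeta_{T,ora}^{(0)}$ minimizes the penalized weighted loss in \eqref{ora} with $(\mathcal{I}_k,\omega_i^{(k)})$ replaced by $(\mathcal{I}'_k,\omega_{i,T}^{(k)})$, expanding the quadratic and cancelling common terms gives
$$
\tfrac{1}{2}\hat{\bm u}^\top\bM\hat{\bm u} \ \le\ \bm g^\top\hat{\bm u} + \lambda\bigl(\|\bbeta^{(0)}\|_1 - \|\tilde\bbeta_{T,ora}^{(0)}\|_1\bigr),
$$
where $\bM = \tfrac1N\sum_{k=0}^K(\bX^{(k)})^\top\diag(\omega_{i,T}^{(k)})\bI_{\mathcal{I}'_k}\bX^{(k)}$ and $\bm g = \tfrac1N\sum_{k=0}^K\sum_{i\in\mathcal{I}'_k}\omega_{i,T}^{(k)}(\eta_i^{(k)}+\epsilon_i^{(k)})\bm x_i^{(k)}$. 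As in Theorem \ref{the1}, on $\mathcal{I}'_k$ the weight $\omega_{i,T}^{(k)} = (2Tf_{\epsilon^{(k)}}(\epsilon_i^{(k)}))^{-1}$ is bounded away from $0$ and $\infty$ (the denominator is continuous and positive on the compact set $\{|\epsilon_i^{(k)}|\le A\}$), and a Binomial concentration bound gives $n_{\mathcal{I}'_k}\asymp\mE(n_{\mathcal{I}'_k})$ with probability tending to one; hence $\bM$ dominates a constant multiple of the aggregated design appearing in Condition 3. The proof thus reduces to (a) an $\ell_\infty$ bound on $\bm g$ and (b) the restricted‑eigenvalue lower bound $\hat{\bm u}^\top\bM\hat{\bm u}\gtrsim\rho_{\mathcal{I}'}\|\hat{\bm u}\|^2$ on the cone $\mathcal{E}(\mathcal{H}_0,3)$, which follows from Condition 3 applied to the weighted aggregated design (using that the weights are bounded below and $n_{\mathcal{I}'_k}\asymp\mE(n_{\mathcal{I}'_k})$), combined with Proposition \ref{prop5} and Corollary \ref{coro1} for the order of $\rho_{\mathcal{I}'}$.

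The crux is the decomposition $\bm g = \bm g_0 + \bm g_{\mathrm{b}}$. For a source $k\ge 1$, set $\xi_{i,T}^{(k)} = (\eta_i^{(k)}+\epsilon_i^{(k)})\omega_{i,T}^{(k)}{\rm I}(i\in\mathcal{I}'_k)$; on $\mathcal{I}'_k$ one has $|\eta_i^{(k)}+\epsilon_i^{(k)}|\le A+M\le T$ by \eqref{weights-U}–\eqref{weight-constraint-2}, so $f_\epsilon(\eta_i^{(k)}+\epsilon_i^{(k)}) = (2T)^{-1}$ is constant there, and integrating over the symmetric range $|\epsilon_i^{(k)}|\le A$ yields the clean conditional mean $\mE[\xi_{i,T}^{(k)}\mid\bm x_i^{(k)}] = \tfrac{A}{T}\eta_i^{(k)}{\rm I}(|\eta_i^{(k)}|\le M) = C_{A,T}\,\eta_i^{(k)}{\rm I}(|\eta_i^{(k)}|\le M)$. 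Accordingly take $\bm g_{\mathrm{b}} = \tfrac{C_{A,T}}{N}\sum_{k=1}^K\sum_{i=1}^{n_k}\eta_i^{(k)}{\rm I}(|\eta_i^{(k)}|\le M)\bm x_i^{(k)}$ and $\bm g_0 = \tfrac1N\bigl(\sum_{i=1}^{n_0}\epsilon_i^{(0)}\bm x_i^{(0)} + \sum_{k=1}^K\sum_{i=1}^{n_k}\zeta_i^{(k)}\bm x_i^{(k)}\bigr)$ with $\zeta_i^{(k)} := \xi_{i,T}^{(k)} - C_{A,T}\eta_i^{(k)}{\rm I}(|\eta_i^{(k)}|\le M)$, which is mean zero given the design and uniformly bounded, and is nonzero only on the roughly $n_0 + \mE(n_{\mathcal{I}'})$ indices with $|\eta_i^{(k)}|\le M$ (using independence of $\epsilon_i^{(k)}$ and $\eta_i^{(k)}$ and that $A$ is of constant order, so $n_k\bbP(|\eta^{(k)}|\le M)\asymp\mE(n_{\mathcal{I}'_k})$). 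Applying a Bernstein/Hoeffding bound conditionally on the design, together with the sub‑Gaussian tails of $\epsilon_i^{(0)}$ and $x_{ij}^{(k)}$ and a union bound over $j\in[p]$, gives $\|\bm g_0\|_\infty = O_p\bigl(\sqrt{(n_0+\mE(n_{\mathcal{I}'}))\log p}/N\bigr) = O_p(\lambda_n^{(1)})$, since $\rho_{\mathcal{I}'}\sqrt{\log p/(n_0+\mE(n_{\mathcal{I}'}))}$ equals that quantity. For $\bm g_{\mathrm{b}}$, write $\eta_i^{(k)}x_{ij}^{(k)} = \sum_\ell\delta_\ell^{(k)}x_{i\ell}^{(k)}x_{ij}^{(k)}$, so $|\mE[\eta_i^{(k)}x_{ij}^{(k)}]|\le\|\bm\delta^{(k)}\|_1\max_\ell|\sigma_{\ell j}^{(k)}|\lesssim\|\bm\delta^{(k)}\|_1$; the truncation ${\rm I}(|\eta_i^{(k)}|>M)$ contributes a negligible Gaussian tail, and a union‑bound concentration controls the fluctuation of $\tfrac1{n_k}\sum_i\eta_i^{(k)}{\rm I}(\cdot)x_{ij}^{(k)}$ about its mean. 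Summing over $k$ with weights $n_k/N$ then yields $\|\bm g_{\mathrm{b}}\|_\infty\lesssim C_{A,T}\sum_{k=1}^K\pi_k\|\bm\delta^{(k)}\|_1 = C_{A,T}h_{\rm ave}\asymp\lambda_{n}^{(2)}$ with probability tending to one.

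With $\|\bm g\|_\infty\le\lambda_n^{(1)}+\lambda_{n}^{(2)} = \lambda/2$ on this event, the standard cone argument closes the proof: bounding $\bm g^\top\hat{\bm u}\le\|\bm g\|_\infty\|\hat{\bm u}\|_1$ and using $\|\bbeta^{(0)}\|_1 - \|\tilde\bbeta_{T,ora}^{(0)}\|_1\le\|\hat{\bm u}_{\mathcal{H}_0}\|_1 - \|\hat{\bm u}_{\mathcal{H}_0^c}\|_1$ forces $\hat{\bm u}\in\mathcal{E}(\mathcal{H}_0,3)$ and $\hat{\bm u}^\top\bM\hat{\bm u}\lesssim\lambda\|\hat{\bm u}_{\mathcal{H}_0}\|_1\le\lambda\sqrt{s_0}\|\hat{\bm u}\|$; combined with the restricted‑eigenvalue lower bound $\hat{\bm u}^\top\bM\hat{\bm u}\gtrsim\rho_{\mathcal{I}'}\|\hat{\bm u}\|^2$ this gives $\|\hat{\bm u}\|\lesssim\lambda\sqrt{s_0}/\rho_{\mathcal{I}'}$, i.e.
$$
\|\tilde\bbeta_{T,ora}^{(0)} - \bbeta^{(0)}\|^2 \ \lesssim\ \frac{s_0(\lambda_n^{(1)})^2 + s_0(\lambda_{n}^{(2)})^2}{\rho_{\mathcal{I}'}^2} \ =\ \frac{s_0\log p}{n_0+\mE(n_{\mathcal{I}'})} + \rho_{\mathcal{I}'}^{-2}s_0\,C_{A,T}^2 h_{\rm ave}^2,
$$
which is \eqref{oracel-rate-U} since $C_{A,T} = A/T < 1$ is a constant. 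I expect the main obstacle to be the analysis of the bias term $\bm g_{\mathrm{b}}$: one must exploit the constraint $A+M\le T$ to make $f_\epsilon$ locally constant, evaluate the conditional mean exactly as $C_{A,T}\eta_i^{(k)}{\rm I}(|\eta_i^{(k)}|\le M)$, and then convert the design‑dependent truncated sum into the average $\ell_1$‑contrast $h_{\rm ave}$ while keeping the $C_{A,T}$ factor explicit — this is precisely what makes the rate independent of $\gamma_0$, in contrast to Theorem \ref{the2}. A secondary technical point is matching the number of contributing terms in $\bm g_0$ to $n_0+\mE(n_{\mathcal{I}'})$ rather than $N$, which is what produces the sharp $\lambda_n^{(1)}$ and hence the $s_0\log p/(n_0+\mE(n_{\mathcal{I}'}))$ first term.
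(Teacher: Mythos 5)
Your proposal is correct and follows essentially the same route as the paper's proof: the basic inequality plus restricted-eigenvalue cone argument, with the cross term split into a centered bounded (hence sub-Gaussian) part that yields $\lambda_n^{(1)}$ over the effective sample size $n_0+\mE(n_{\mathcal{I}'})$, and a bias part whose conditional mean is computed exactly as $C_{A,T}\,\eta_i^{(k)}$ from the local constancy of the uniform $f_\epsilon$ on $\{|\epsilon_i^{(k)}+\eta_i^{(k)}|\le A+M\le T\}$ and then converted to $C_{A,T}h_{\rm ave}$ via the (empirical) covariance and Condition 7. The only differences are cosmetic — you carry the truncation indicator ${\rm I}(|\eta_i^{(k)}|\le M)$ explicitly and bound $\mE[\eta_i^{(k)}x_{ij}^{(k)}]$ directly by $\|\bm\delta^{(k)}\|_1\max_\ell|\sigma^{(k)}_{\ell j}|$, where the paper works through $\hat\bSig^{(k)}\bm\delta^{(k)}$ and elementwise convergence of the sample covariance.
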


Under the conditions of  Proposition  \ref{prop5} and Corollary \ref{coro1}, it holds that $\rho_{\mathcal{I}'}^{-1}\asymp  1$, when $d_k$'s are of order $O(1)$.
  Compared with the results of oracle RIW-TL in    Theorem \ref{the1},  the oracle RIW-TL-U $\tilde \bbeta_{T,ora}^{(0)} $ in  Theorem \ref{the3} has an additional term that  depends on $h_{\rm ave}$.    When $h_{\rm ave}\lesssim \sqrt{\log p / (n_0+\mathbb{E}(n_{\mathcal{I}'}))}$, the oracle RIW-TL-U is of the same order as that of the oracle RIW-TL; otherwise it is inferior to oracle RIW-TL.

When weights are unknown and  $\mathcal{I}'_k$'s are unobserved, similar to Section \ref{sec:riw-tl in practice}, we modify the cross-fitting procedure therein by estimating the weights as
$$
\hat \omega_{i,T}^{(k)} = \frac{1}{2T \hat f_{\epsilon^{(k)}}(\hat \epsilon_i^{(k)})} {\rm I}(|\hat\epsilon_i^{(k)} + \hat\eta_i^{(k)}| \leq T),~~~i \in \mathcal{D}_{k3}.
$$
The estimated sample selection sets are denoted as
$$
\hat {\mathcal{I}}'_{k3} = \{i\in  \mathcal{D}_{k3}: |\hat\epsilon_i^{(k)}| \leq A,~ |\hat \eta_i^{(k)}| \leq M \},
$$
where $\hat\epsilon_i^{(k)} = y_i^{(k)}-(\bm x_i^{(k)})^{\top} \tilde \bbeta^{(k)}$ and $A,M$ are parameters such that $A + M - 2 \theta_0 \leq T$ with a sufficient small constant $\theta_0 > 0$. The requirement $A + M - 2\theta_0 \leq T$ is to guarantee that both the weights $\omega_{i,T}^{(k)}$ and its estimators $\hat \omega_{i,T}^{(k)}$ are nonzero for all $i \in \hat{ \mathcal{I}}'_{k3}$ in  probability tending to 1.

Similar to the definition of $\hat {\bm \beta}_1^{(0)}$ in \eqref{est-unknowW}, we define the estimator $\hat {\bm \beta}_{T,1}^{(0)}$ by replacing $(\hat {\mathcal{I}}_{k3}, \hat \omega_i^{(k)})$ with $(\hat {\mathcal{I}}'_{k3}, \omega_{i,T}^{(k)})$. In the same manner, we can obtain the estimators $\hat {\bm \beta}_{T,2}^{(0)}$ and $\hat {\bm \beta}_{T,3}^{(0)}$. Finally, the RIW-TL-U estimator is defined  as $\hat {\bm \beta}_T^{(0)} = (\hat {\bm \beta}_{T,1}^{(0)} + \hat {\bm \beta}_{T,2}^{(0)} + \hat {\bm \beta}_{T,3}^{(0)})/3$ and the convergence rate of $\hat {\bm \beta}_T^{(0)}$ is established in the following.

\begin{lemma}\label{lemma2}
Under the Conditions 5 and 6, for each $j = 1,2,3$, as $\min_{1\le k\le K} n_k\to \infty$, it holds that
$
\max\limits_{1 \le k \le K} \max\limits_{i \in \hat {\mathcal{I}}'_{kj}} |\hat \omega_{i,T}^{(k)}/ \omega_{i,T}^{(k)} - 1| =
O_p (q_n),
$
where $q_n$ is defined in Lemma \ref{lemma1}.
\end{lemma}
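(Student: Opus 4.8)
The plan is to mimic the proof of Lemma \ref{lemma1}, exploiting the fact that the uniform numerator $f_\epsilon(t)=(2T)^{-1}{\rm I}(|t|\le T)$ is much simpler than the symmetrized kernel numerator used in Lemma \ref{lemma1}. Write out the ratio explicitly: for $i\in\hat{\mathcal I}'_{kj}$,
$$
\frac{\hat\omega_{i,T}^{(k)}}{\omega_{i,T}^{(k)}}
= \frac{(2T\hat f_{\epsilon^{(k)}}(\hat\epsilon_i^{(k)}))^{-1}{\rm I}(|\hat\epsilon_i^{(k)}+\hat\eta_i^{(k)}|\le T)}
       {(2T f_{\epsilon^{(k)}}(\epsilon_i^{(k)}))^{-1}{\rm I}(|\epsilon_i^{(k)}+\eta_i^{(k)}|\le T)}
= \frac{f_{\epsilon^{(k)}}(\epsilon_i^{(k)})}{\hat f_{\epsilon^{(k)}}(\hat\epsilon_i^{(k)})},
$$
where the last step uses that on $\hat{\mathcal I}'_{kj}$ both indicator functions are $1$ with probability tending to $1$: by Condition 5 and the definition of $\psi_k$, $|\hat\epsilon_i^{(k)}-\epsilon_i^{(k)}|$ and $|\hat\eta_i^{(k)}-\eta_i^{(k)}|$ are each $O_p(\psi_k\gamma_k)=O_p(\alpha_n)=o_p(1)$; hence for $i\in\hat{\mathcal I}'_{kj}$ (so $|\hat\epsilon_i^{(k)}|\le A$, $|\hat\eta_i^{(k)}|\le M$) we get $|\epsilon_i^{(k)}+\eta_i^{(k)}|\le A+M-2\theta_0+o_p(1)\le T$, and conversely for the true-weight indicator, the margin $\theta_0$ absorbs the estimation error. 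This is exactly the reason the condition $A+M-2\theta_0\le T$ was imposed, and it is precisely the step that is delicate to state cleanly (one must be careful that $\alpha_n=o_p(1)$ dominates any fixed margin only asymptotically, so the identity holds on an event of probability tending to $1$, uniformly in $i,k$ since the bounds on $|\hat\epsilon_i^{(k)}-\epsilon_i^{(k)}|$ are uniform).

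Once the ratio is reduced to $f_{\epsilon^{(k)}}(\epsilon_i^{(k)})/\hat f_{\epsilon^{(k)}}(\hat\epsilon_i^{(k)})$, the argument is identical to the denominator analysis inside Lemma \ref{lemma1}. Decompose
$$
\frac{\hat f_{\epsilon^{(k)}}(\hat\epsilon_i^{(k)})}{f_{\epsilon^{(k)}}(\epsilon_i^{(k)})}-1
= \underbrace{\frac{\hat f_{\epsilon^{(k)}}(\hat\epsilon_i^{(k)})-\hat f_{\epsilon^{(k)}}(\epsilon_i^{(k)})}{f_{\epsilon^{(k)}}(\epsilon_i^{(k)})}}_{\text{(I): argument perturbation}}
+ \underbrace{\frac{\hat f_{\epsilon^{(k)}}(\epsilon_i^{(k)})-f_{\epsilon^{(k)}}(\epsilon_i^{(k)})}{f_{\epsilon^{(k)}}(\epsilon_i^{(k)})}}_{\text{(II): KDE error}}.
$$
For (I), use Condition 6 (the kernel has bounded continuous first derivative), so $\hat f_{\epsilon^{(k)}}$ is Lipschitz with constant $O(1/b_k^2)$ (each summand contributes $b_k^{-1}K'(\cdot)/b_k$); together with $|\hat\epsilon_i^{(k)}-\epsilon_i^{(k)}|=O_p(\psi_k\gamma_k)$ and the uniform lower bound on $f_{\epsilon^{(k)}}(\epsilon_i^{(k)})$ for $i\in\mathcal I'_k$ (which follows from Condition 1 and Proposition \ref{prop2}-type reasoning, since $|\epsilon_i^{(k)}|\le A$), term (I) is $O_p(\psi_k\gamma_k/b_k^2)=O_p(u_n)$. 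For (II), invoke the standard uniform rate for kernel density estimation of a one-dimensional density with bounded first derivative on a compact set: $\sup_{|t|\le A}|\hat f_{\epsilon^{(k)}}(t)-f_{\epsilon^{(k)}}(t)|=O_p(b_k^2+\sqrt{\log n_k/(n_k b_k)})=O_p(v_n)$ — the bias term being $O(b_k^2)$ rather than $O(b_k)$ because $f_\epsilon$ is symmetrized, but here the numerator does not even involve this, so only the denominator KDE bias of order $b_k^2$ (from a second-order-accurate symmetric kernel, or $b_k$ if only first-order; the paper's $v_n$ already records $b_k^2$) enters. Combining, $|\hat f_{\epsilon^{(k)}}(\hat\epsilon_i^{(k)})/f_{\epsilon^{(k)}}(\epsilon_i^{(k)})-1|=O_p(u_n+v_n)=O_p(q_n)$ uniformly over $i\in\hat{\mathcal I}'_{kj}$ and $1\le k\le K$, and a first-order expansion of $x\mapsto 1/x$ near $x=1$ transfers this to $|\hat\omega_{i,T}^{(k)}/\omega_{i,T}^{(k)}-1|=O_p(q_n)$.

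The main obstacle, as flagged above, is handling the indicator functions rigorously: one must show that on an event of probability tending to $1$, the two indicators agree for every $i\in\hat{\mathcal I}'_{kj}$ simultaneously, which requires the uniform control $\max_{1\le k\le K}\max_{i\in\mathcal D_k}\psi_k\|\tilde\bbeta^{(k)}-\bbeta^{(k)}\|_1=O_p(\alpha_n)=o_p(1)$ from Conditions 5–6 and the strict slack $\theta_0$ in $A+M-2\theta_0\le T$. A secondary point is that $\alpha_n$ itself does not appear in the final bound (unlike Lemma \ref{lemma1}), because with the uniform numerator the argument perturbation only enters through the denominator KDE's Lipschitz term $u_n$ — there is no analogue of the $f_\epsilon(A)$-type boundary contribution that produced the $\alpha_n$ (and the $C_{f_\epsilon,A}\gamma_0$) terms in the kernel-numerator case. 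I would simply note this cancellation explicitly when assembling the bound, so that the stated rate $O_p(q_n)$ — rather than $O_p(q_n+\alpha_n)$ — is justified.
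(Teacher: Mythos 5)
Your proposal is correct and follows essentially the same route as the paper's proof: reduce the weight ratio to the density ratio $f_{\epsilon^{(k)}}(\epsilon_i^{(k)})/\hat f_{\epsilon^{(k)}}(\hat\epsilon_i^{(k)})$ (the indicators cancel on $\hat{\mathcal I}'_{kj}$ with probability tending to one, thanks to the slack $\theta_0$), then split into the sup-norm KDE error $O_p(v_n+u_n)=O_p(q_n)$ at the true argument and the argument-perturbation term $O_p(\psi_k\gamma_k/b_k^2)=O_p(u_n)$ via the $O(1/b_k^2)$ Lipschitz bound on $\hat f_{\epsilon^{(k)}}$. The only differences are cosmetic (an additive rather than multiplicative organization of the two error sources) plus your welcome explicit treatment of the indicator cancellation, which the paper's proof leaves implicit; your observation that $\alpha_n$ drops out of the final rate also matches the paper.
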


\vspace{-3mm}
 Lemma \ref{lemma2} shows the convergence rate of $\hat \omega_{i,T}^{(k)}$ for all $i \in \hat {\mathcal{I}}'_{kj}$. Compared with Lemma \ref{lemma1}, this convergence rate does not involve  $\gamma_0$, the $\ell_1$ convergence rate of $\tilde \bbeta^{(0)}$, which helps to improve the convergence rate of RIW-TL-U estimator in Theorem \ref{the4}.

\begin{theorem}\label{the4}
Suppose that Conditions 2-7 are satisfied. Let $\lambda = 2 (\lambda_n^{(1)} + \lambda_{n}^{(2)})$ with
   $$
 \lambda_n^{(1)} = C_1 \rho_{\mathcal{I}'} \sqrt{\log p / (n_0 + \mE(n_{\mathcal{I}'}))},~~~~
 \lambda_{n}^{(2)}= C_2 \{\mu_{\rm max} q_n + C_{A,T}(\gamma_{\rm max} + h_{\rm ave})\},
 $$
 where $C_1, C_2 > 0$ are some constants, and $\gamma_{\rm max} = \underset{1 \le k \le K}{\rm max} \gamma_k$ with $\gamma_k$ the convergence rate of $\tilde {\bm \beta}^{(k)}$. As $\min\limits_{0\le k\le K} n_k\to \infty$, it follows that
$$
\|\hat \bbeta_T^{(0)} - \bbeta^{(0)}\|_2^2
 = O_p \left\{
 \frac{s_0 {\rm log}\,p}{n_0 + \mathbb{E}(n_{\mathcal{I}'})} +
 \rho_{\mathcal{I}'}^{-2}
 s_0 \left(\mu_{\rm max} q_n +  \gamma_{\rm max} + h_{\rm ave}\right)^2
\right\}.
$$
\end{theorem}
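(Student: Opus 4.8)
The plan is to mirror the proof of Theorem~\ref{the2}, with the symmetrized weights $\hat f_\epsilon$ and the sets $\mathcal I_k$ replaced by the uniform weights $\omega_{i,T}^{(k)}$ of \eqref{weights-U} and the sets $\mathcal I'_k$ of \eqref{weight-constraint-2}, and with Lemma~\ref{lemma1} replaced by Lemma~\ref{lemma2}. Fix the permutation yielding $\hat{\bm\beta}_{T,1}^{(0)}$, write $N=|\mathcal D_{03}|+\sum_{k=1}^K|\mathcal D_{k3}|\asymp n_0+\sum_{k=1}^K n_k$, and condition on $\{\tilde{\bm\beta}^{(k)},\hat f_{\epsilon^{(k)}}\}$, which are built from $\mathcal D_{k1}\cup\mathcal D_{k2}$, so that the observations indexed by $\bigcup_k\mathcal D_{k3}$ are fresh. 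Since $\hat{\bm\beta}_{T,1}^{(0)}$ minimizes the objective of \eqref{est-unknowW} with $\hat\omega_{i,T}^{(k)}$, $\hat{\mathcal I}'_{k3}$, the basic inequality at $\bm\beta^{(0)}$ gives, for $\bm\Delta:=\hat{\bm\beta}_{T,1}^{(0)}-\bm\beta^{(0)}$,
\[
\frac1N\sum_{k=0}^K\sum_{i\in\hat{\mathcal I}'_{k3}}\hat\omega_{i,T}^{(k)}\bigl\{(\bm x_i^{(k)})^\top\bm\Delta\bigr\}^2\ \le\ \langle\bm g,\bm\Delta\rangle+\lambda\bigl(\|\bm\beta^{(0)}\|_1-\|\hat{\bm\beta}_{T,1}^{(0)}\|_1\bigr),
\]
where $\bm g=\frac2N\sum_{k=0}^K\sum_{i\in\hat{\mathcal I}'_{k3}}\hat\omega_{i,T}^{(k)}(\epsilon_i^{(k)}+\eta_i^{(k)})\bm x_i^{(k)}$ (with $\eta_i^{(0)}\equiv0$, $\hat\omega_{i,T}^{(0)}\equiv1$, $\hat{\mathcal I}'_{03}=\mathcal D_{03}$). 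Everything reduces to showing $\|\bm g\|_\infty\le\lambda/2$ with probability tending to one for $\lambda=2(\lambda_n^{(1)}+\lambda_n^{(2)})$: on that event $\bm\Delta\in\mathcal E(\mathcal H_0,3)$; since $\hat\omega_{i,T}^{(k)}$ is bounded below on $\hat{\mathcal I}'_{k3}\supseteq\mathcal{I}_{k3}^{\prime-}$, the restricted-eigenvalue condition (Condition~4, read for the sets $\mathcal{I}_{kj}^{\prime-}$) combined with $n_{\mathcal{I}^{\prime-}}\asymp\mE(n_{\mathcal I'})$ (the analogues of Propositions~\ref{prop4}--\ref{prop5}) lower-bounds the left-hand quadratic form by $c\,\rho_{\mathcal I'}\|\bm\Delta\|^2$, whence $\|\bm\Delta\|^2\lesssim s_0\lambda^2/\rho_{\mathcal I'}^2$; substituting $\lambda$ and using $C_{A,T}<1$ gives the stated rate for one fold, and $\|\hat{\bm\beta}_T^{(0)}-\bm\beta^{(0)}\|^2\le\frac13\sum_{j}\|\hat{\bm\beta}_{T,j}^{(0)}-\bm\beta^{(0)}\|^2$ transfers it to $\hat{\bm\beta}_T^{(0)}$.

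To bound $\|\bm g\|_\infty$ I would split $\bm g=\bm g_{\rm ora}+\bm g_{\rm wt}+\bm g_{\rm set}$, where $\bm g_{\rm ora}$ uses the true weights and sets, $\bm g_{\rm wt}=\frac2N\sum_k\sum_{i\in\hat{\mathcal I}'_{k3}}(\hat\omega_{i,T}^{(k)}-\omega_{i,T}^{(k)})(\epsilon_i^{(k)}+\eta_i^{(k)})\bm x_i^{(k)}$ is the weight-estimation part, and $\bm g_{\rm set}$ is the set-mismatch part. For $\bm g_{\rm ora}$ I write it as bias plus fluctuation: because $A+M\le T$, the factor $\mathrm I(|\epsilon_i^{(k)}+\eta_i^{(k)}|\le T)$ inside $\omega_{i,T}^{(k)}$ equals one on $\mathcal I'_{k3}$, and a one-line integral gives $\mE_{\epsilon_i^{(k)}}[\omega_{i,T}^{(k)}(\epsilon_i^{(k)}+\eta_i^{(k)})\mathrm I(|\epsilon_i^{(k)}|\le A)]=\frac{A}{T}\eta_i^{(k)}$; hence the bias is $\frac2N\sum_{k\ge1}\sum_i\mE\bigl[\frac{A}{T}\eta_i^{(k)}\mathrm I(|\eta_i^{(k)}|\le M)\bm x_i^{(k)}\bigr]$, with $\ell_\infty$-norm $\lesssim C_{A,T}h_{\rm ave}$ by Condition~7 and $\|\bSig^{(k)}\bm\delta^{(k)}\|_\infty\le C\|\bm\delta^{(k)}\|_1$. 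The fluctuation is a centered sum of products of bounded summands (bounded on $\mathcal I'_{k3}$) with the sub-Gaussian $\bm x_i^{(k)}$; Bernstein plus a union bound over the $p$ coordinates, together with $n_{\mathcal I'_{k3}}\asymp\mE(n_{\mathcal I'_k})/3$, bounds it by $C\rho_{\mathcal I'}\sqrt{\log p/(n_0+\mE(n_{\mathcal I'}))}=\lambda_n^{(1)}$. For $\bm g_{\rm wt}$, Lemma~\ref{lemma2} gives $\max_{i\in\hat{\mathcal I}'_{kj}}|\hat\omega_{i,T}^{(k)}/\omega_{i,T}^{(k)}-1|=O_p(q_n)$ while $\omega_{i,T}^{(k)}|\epsilon_i^{(k)}+\eta_i^{(k)}|$ is bounded there, so $\|\bm g_{\rm wt}\|_\infty\lesssim q_n\cdot\frac1N\sum_{k\ge1}\sum_{i\in\mathcal D_{k3}}\|\bm x_i^{(k)}\|_\infty\lesssim q_n\mu_{\rm max}$.

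The main obstacle is $\bm g_{\rm set}$. I would control it through the sandwich $\mathcal{I}_{kj}^{\prime-}\subseteq\hat{\mathcal I}'_{kj}\subseteq\mathcal{I}_{kj}^{\prime+}$ (the analogue of Proposition~\ref{prop4}(i), using Condition~5, $|\hat\epsilon_i^{(k)}-\epsilon_i^{(k)}|=O_p(\psi_k\gamma_k)$ and $|\hat\eta_i^{(k)}-\eta_i^{(k)}|=O_p(\psi_k(\gamma_k+\gamma_0))$), separating the mismatch along the residual threshold $A$ from that along the $\eta$-threshold $M$. For the $A$-threshold part, repeating the conditional-expectation computation with endpoints shifted by $r_i^{(k)}:=(\bm x_i^{(k)})^\top(\tilde{\bm\beta}^{(k)}-\bm\beta^{(k)})$ yields $\mE_{\epsilon_i^{(k)}}[\,\cdot\,]=\frac{A}{T}(\eta_i^{(k)}-r_i^{(k)})$, whose $r_i^{(k)}$-contribution has $\ell_\infty$-norm $\lesssim C_{A,T}\max_k\|\bSig^{(k)}(\tilde{\bm\beta}^{(k)}-\bm\beta^{(k)})\|_\infty\lesssim C_{A,T}\gamma_{\rm max}$, while its residual fluctuation lives on a set of vanishing relative size (since $\psi_k\gamma_k=o(1)$ by Condition~6) and is dominated by $\lambda_n^{(1)}$. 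The delicate point — and the step I expect to fight hardest — is showing that the $M$-threshold mismatch is of no larger order: here $\gamma_0$ enters through $\hat\eta_i^{(k)}$, yet, unlike in Theorem~\ref{the2} where $\gamma_0$ survives via $\hat f_\epsilon$, the uniform numerator makes the per-observation effect near the $\eta$-boundary a pure reweighting rather than a bias, and one must verify that the residual bias it creates is at most $O(C_{A,T}(\gamma_{\rm max}+h_{\rm ave}))$ — using that $\mE[\mathrm I(|\eta^{(k)}|\le M)\bm x^{(k)}]$ carries an extra factor $\|\bSig^{(k)}\bm\delta^{(k)}\|_\infty/d_k^2$ and that $\mathcal{I}_{kj}^{\prime+}\setminus\mathcal{I}_{kj}^{\prime-}$ has vanishing relative size — while its fluctuation is again absorbed into $\lambda_n^{(1)}$. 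Summing the three bounds gives $\|\bm g\|_\infty\le C_1\rho_{\mathcal I'}\sqrt{\log p/(n_0+\mE(n_{\mathcal I'}))}+C_2\{\mu_{\rm max}q_n+C_{A,T}(\gamma_{\rm max}+h_{\rm ave})\}=\lambda/2$ with probability tending to one, which is exactly what the reduction in the first paragraph requires.
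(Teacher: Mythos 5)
Your overall strategy is the paper's: reduce via the basic inequality to bounding the $\ell_\infty$ norm of the score-type term by $\lambda/2$, split that term into a centered fluctuation (controlled by sub-Gaussian concentration, giving $\lambda_n^{(1)}$) and a conditional-mean/bias part (giving $\lambda_n^{(2)}$), invoke Lemma \ref{lemma2} for the weight-estimation error, the analogues of Propositions \ref{prop4}--\ref{prop5} for the effective sample size, and Condition 4 for the restricted eigenvalue step. The only substantive difference is organizational: you decompose the bias into oracle $+$ weight-error $+$ set-mismatch, whereas the paper writes $\mE(\check\epsilon_i^{(k)})=E_{k,i1}+E_{k,i2}$ with \emph{both} integrals taken directly over the estimated set $\hat{\mathcal I}'_{k3}$, and this choice dissolves the step you flag as hardest. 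Conditional on $\check{\mathcal D}$ (folds 1--2 plus the fold-3 covariates), the constraint $|\hat\eta_i^{(k)}|\le M$ is a deterministic $0/1$ multiplier that does not involve $\epsilon_i^{(k)}$ at all; only the $A$-threshold shifts the integration limits of the $\epsilon$-integral, by $r_i^{(k)}=(\bm x_i^{(k)})^\top(\tilde\bbeta^{(k)}-\bbeta^{(k)})$. A single computation then gives $E_{k,i1}=C_{A,T}(r_i^{(k)}+\eta_i^{(k)})$, whose weighted sum is bounded by $C\,C_{A,T}(\gamma_{\rm max}+h_{\rm ave})$ via $\|\hat\bSig^{(k)}\bm v\|_\infty\lesssim\|\bm v\|_1$ and Condition 7; there is no residual $M$-threshold bias to chase, and no need for the $\|\bSig^{(k)}\bdelta^{(k)}\|_\infty/d_k^2$ device or a separate argument about the relative size of $\mathcal I_{kj}^{\prime+}\setminus\mathcal I_{kj}^{\prime-}$ in the bias term (the sandwich is still needed, but only for the effective-sample-size and RE steps). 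If you adopt the paper's convention of integrating over the estimated set from the start, your sketch closes with no remaining gaps; as written, the $M$-threshold step is asserted rather than proved.
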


Theorem \ref{the4} establishes the convergence rate of RIW-TL-U estimator. Similar to the discussion after Theorem \ref{the3}, we have $\rho_{\mathcal{I}'}^{-1}\asymp1$ under conditions of Proposition \ref{prop5}. Different from the results in \eqref{oracel-rate-U} for the  oracle RIW-TL-U,  we have two additional  terms in  $\mu_{\rm max} q_n$ and $\gamma_{\rm max}$, where the former is from the estimation errors of the weights $\hat \omega_{i,T}^{(k)}$ and the latter from that of the subsets $\hat {\mathcal{I}}'_{kj}$'s.
 Both $\mu_{\rm max} q_n$ and $\gamma_{\rm max}$,  depending only on the sample sizes $n_k(k\ge 1)$ of the sources, can be small when $n_k\to \infty (k\ge 1)$. Thus  $h_{\rm ave}$ can be the dominating term in $\mu_{\rm max} q_n +  \gamma_{\rm max} + h_{\rm ave}$
when the sample sizes of sources are sufficiently large. Consequently,   the convergence rate of RIW-TL-U estimator is of the same order as its oracle version.  Compared with the RIW-TL estimator in Theorem \ref{the2} that depends on $\gamma_0$, the estimator $\hat \bbeta_T^{(0)}$ is free from  $\gamma_0$ and can be better than both  RIW-TL and Trans-Lasso when  $h_{\rm ave}$, the average distance between the target and the sources, is small.

\section{Simulation}\label{sec:simulation}
In this section, we compare RIW-TL to Trans-Lasso and  LASSO numerically in synthetic data. For our framework, we implement RIW-TL as discussed in Section \ref{sec:riw-tl in practice}, RIW-TL-U as discussed in Section \ref{sec:riw-tl-u}, and a variant of RIW-TL named RIW-TL-P in which $\epsilon^{(k)}$ is assumed Gaussian with its variance estimated by the sample counterpart and $f_\epsilon$ symmetrizing the estimated $f_{\epsilon^{(k)}}$. In our setup, we let $p = 200$, $n_0 = 150$, $K = 10$ and $n_k = 600$ for $k = 1,\cdots,K$. The errors $\epsilon_i^{(0)}$ and $\epsilon_i^{(k)}$ are all generated independently from the standard normal distribution. The response predictor pairs are all generated from the linear model. To simulate interesting scenarios in which the extent of the informativeness of the source data can vary, we consider the following configurations for generating $\bm x_i^{(k)}$ and $\bbeta^{(k)}$ for $k = 0,\cdots,K$.

\textbf{Covariates.} The covariates $\bm x_i^{(0)}$ for the target data are generated from $N_p(\bm 0,\bSig)$ with $\bSig = (0.5^{|l - l^{'}|})_{l,l^{'} = 1}^p$. For $\bm x_i^{(k)}$, $k=1,\cdots,K$, we either generate from $N_p(\bm 0,\bSig)$, or the multivariate central $t$ distribution with covariance matrix $\bSig$ and five degrees of freedom. Clearly, for the former scenario, the marginal distribution of the covariates {for both the sources and the target} are the same, while the conditional distribution
of $y$ given $\bm x$ is different between the sources and the target if $\bbeta^{(k)} \neq \bbeta^{(0)}$. This is sometimes referred to as \textit{posterior shift} in the literature. For the latter scenario, both the marginal distribution of $\bm x$ and the conditional distribution of $y$ given $\bm x$ are different between the sources and target. This is often referred to as \textit{full distribution shift}. We will examine these two kinds of shifts in separate plots.

\textbf{Coefficients.} For the target data, we set $\bbeta^{(0)} = (\bm 1_{s_0}^{\top}, \bm 0_{p - s_0}^{\top})^{\top}$ and $s_0 = 10$. For generating $\bbeta^{(k)}$ in the source data, we vary their closeness to $\bbeta^{(0)}$ in terms of how many entries differ and how much the entries differ. Inspired by \cite{li2021}, we consider the following scenarios. Recall $[n] = \{1,\cdots,n\}$. Define an index set $\mathcal{B} = [m_{\mathcal{B}}]$ where $m_{\mathcal{B}} = 0,2,4,6,8$ or $10$. Given $\mathcal{B}$, for $1 \leq k \leq K$, we specify $\bbeta^{(k)}$ as follows.
    If $k \in \mathcal{B}$, let
  $$
 \beta^{(k)}_j = \left\{
  \begin{array}{ll}
  \beta_j^{(0)} - 0.5, &  j \in T_k;\\
   \beta_j^{(0)}, & \mbox{otherwise,}
        \end{array}
        \right.
 $$
 where $T_k$ is a random subset of $\{s_0 + 1,\cdots,p\}$ with $|T_k|= d$, and  the values of $d$ will vary according to the specification later;
  if $k \notin \mathcal{B}$, let
  $$
 \beta^{(k)}_j = \left\{
  \begin{array}{ll}
  \beta_j^{(0)} -  1,    &  j \in [s_0];\\
  \beta_j^{(0)} - 0.5, &  j \in  U_k;\\
   \beta_j^{(0)}, & \mbox{otherwise,}
        \end{array}
        \right.
 $$
 where ${U_k}$ is a random subset of $\{s_0 + 1,\cdots,p\}$ with ${|U_k|}= 2 s_0.$

\bigskip

In the above configuration, the magnitude of the difference between different entries between $\bbeta^{(0)}$ and $\bbeta^{(k)}$ is either $0.5$ or $1$ and thus is fixed, while how many entries differ depends on the values of $m_{\mathcal{B}}$ and $d$. A larger value of $d$ corresponds to a larger difference between $\bbeta^{(0)}$ and $\bbeta^{(k)}$ in general. {On the other hand, those $\bbeta^{(k)}$ with $k \in \mathcal{B}$ are relatively close to $\bbeta^{(0)}$. The larger $m_{\mathcal{B}}$ is, the more sources are relatively close to the target}. To reflect the generality of our method, we further consider the case where the magnitude is random and the case where $\epsilon_i^{(0)}$ and $\epsilon_i^{(k)}$'s follow different distributions with the simulation results shown in the supplementary materials to save space. We remark that the results under these setting are qualitatively similar to what is seen in this section.

 For the tuning parameters appearing in the constraints in $\mathcal{I}_k$ of RIW-TL, we let $M = 2 A$ and tune $M$ over a step size $0.5$ grid on the interval $[1,3]$. We take all $b_k$'s in kernel density estimation as the same and tune it on \{0.1,0.2,0.3\} via $5$-fold cross-validation. For RIW-TL-U, {we set  $M=2A=2T/3$ and select $M$ by cross-validation from the interval $[1,3]$ with a step length 0.5}.
 For measuring the performance of each approach, we compute the sum of squared estimation errors (SSE) and sample usage rate (SUR). Depending on the values of $d$ and $m_{\mathcal{B}}$, we consider two kinds of simulations to show the performance of various methods. For the first one, with $d$ taking values  $4$ or $8$, we observe the behavior of the resulting estimators varying $m_{\mathcal{B}} \in \{0,2,4,6,8, 10$\}. For the second one, we take $m_{\mathcal{B}} = 4$ or $8$ as an example and tune $d$ over a step size 4 grid on $[0,32]$. For each simulation configuration, we repeat the experiment 200 times and report the average SSE and SUR. The results for posterior shift are shown in Figure \ref{fig22} and those for the full distribution shift are in Figure \ref{fig33}.  From the simulation results, we can draw the following conclusions.

   \begin{figure}[h]
\centering
  \includegraphics[scale=0.5]{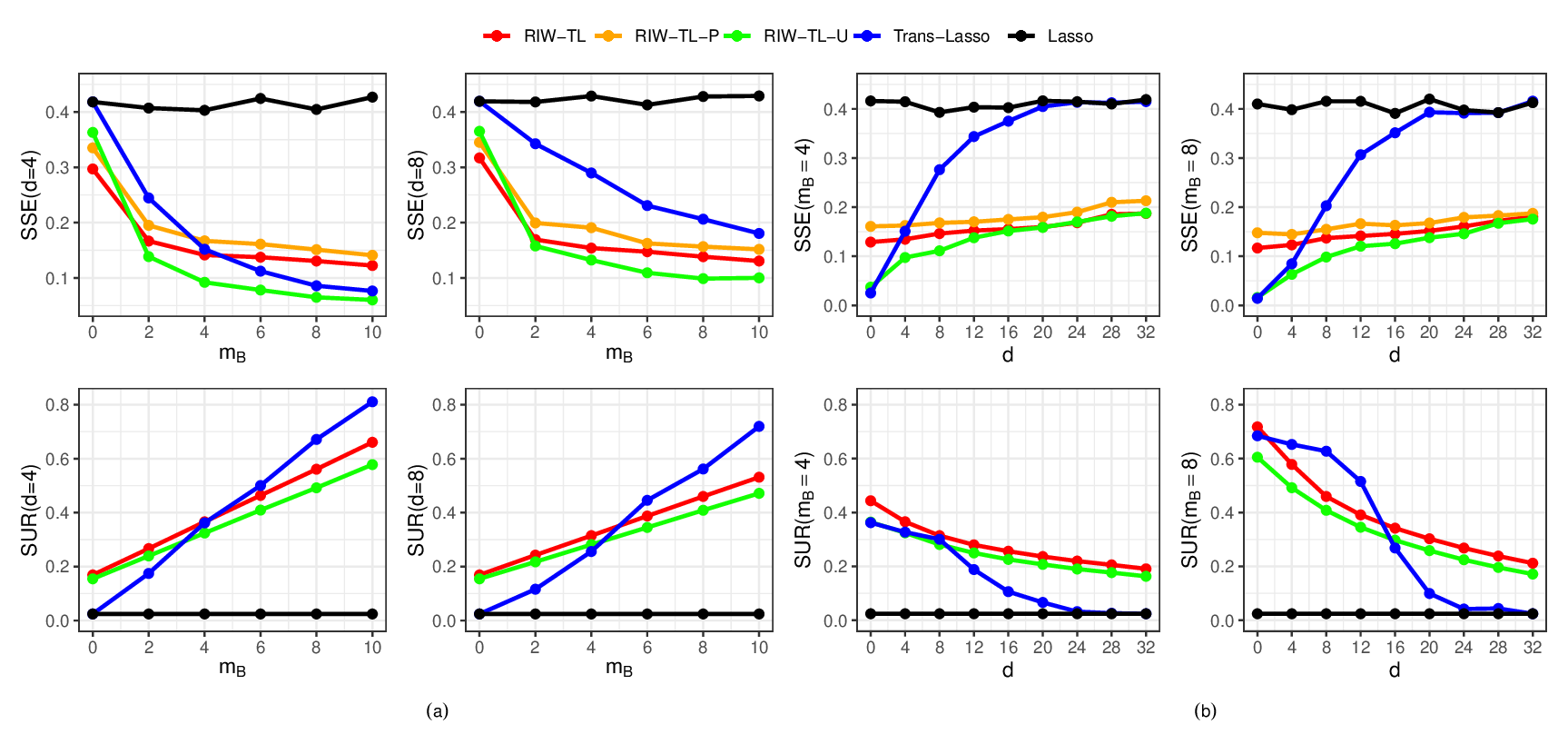}
  \caption{Posterior shift: (a) The estimation errors (the first row) and sample usage rates (the second row) versus $m_{\mathcal{B}}$ for different $d$. (b) The estimation errors (the first row) and sample usage rates (the second  row) versus $d$ for different $m_{\mathcal{B}}$. Note in the SUR plots, RIW-TL-P lines are invisible because they overlap with the corresponding RIW-TL lines. }
  \label{fig22}
\end{figure}

\begin{figure}[!htbp]
\centering
  \includegraphics[scale=0.5]{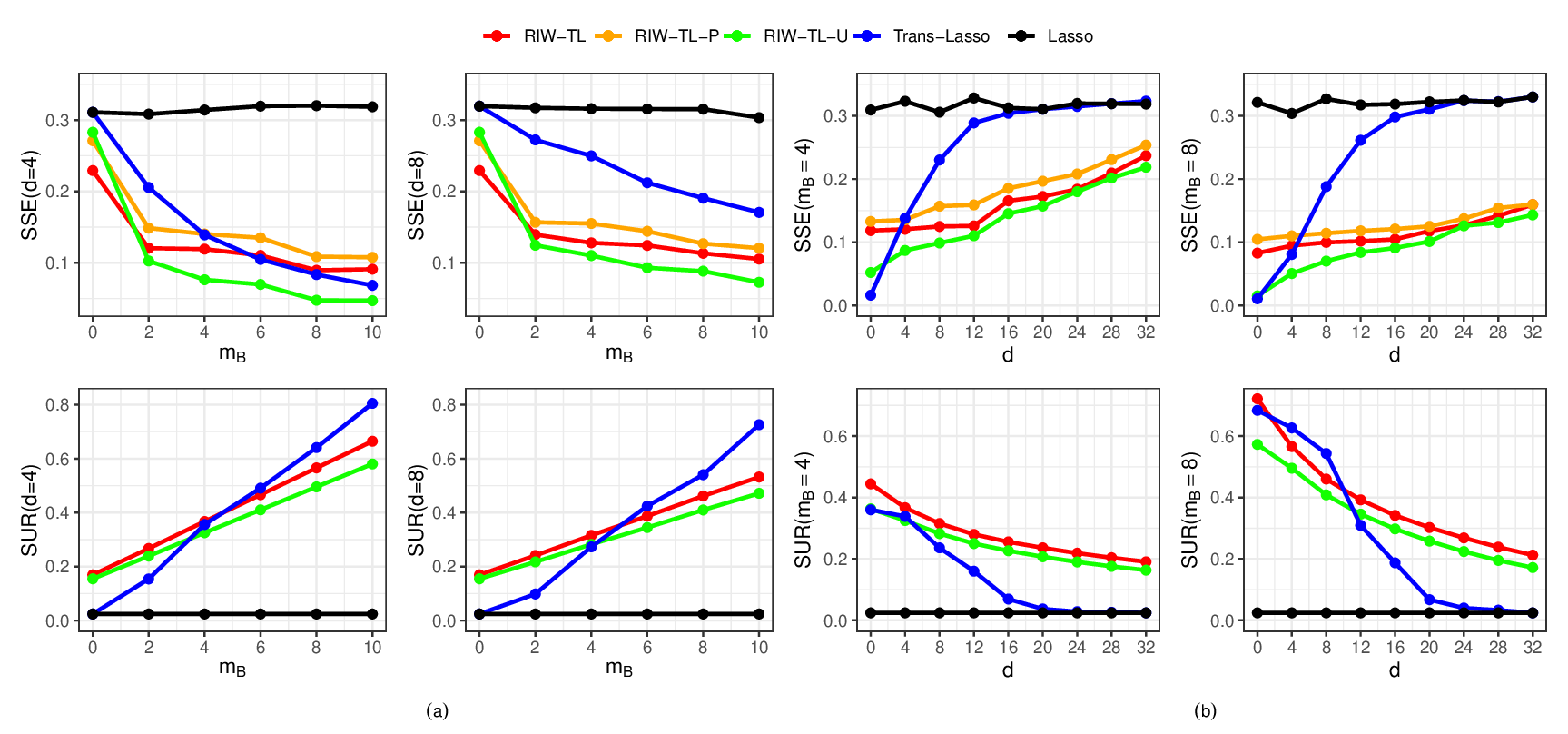}
  \caption{Full distribution shift: (a) The estimation errors (the first row) and sample usage rates (the second  row) versus $m_{\mathcal{B}}$ for different $d$. (b) The estimation errors (the first  row) and sample usage rates (the second row) versus $d$ for different $m_{\mathcal{B}}$. Note in the SUR plots, RIW-TL-P lines are invisible because they overlap with the corresponding RIW-TL lines.}
  \label{fig33}
\end{figure}

\noindent{\textbf{LASSO is inferior to any other transfer learning-based method.}}
From  Figures \ref{fig22} and \ref{fig33}, one can see that its SSE is the worst and its SUR is the smallest in all the cases because no source data is utilized. By borrowing strength from the source data,  the transfer learning methods perform better than LASSO.  The performance of transfer learning-based methods improves in general when $m_{\mathcal{B}}$ increases or $d$ decreases. This is because more source data are similar to the target data.

\noindent{\textbf{RIW-TL type estimators  have advantages over Trans-Lasso especially when $d$ is large or $m_{\mathcal{B}}$ is small.}}  A smaller  $m_{\mathcal{B}}$ or a larger $d$ implies that fewer sources are close to the target because in these cases, $\|\bm\delta^{(k)}\|_1$'s are larger in general.  From the first and the third rows in Figures \ref{fig22} and \ref{fig33}, it is seen that the improvement of  RIW-TL  and its variant (RIW-TL-P, RIW-TL-U)   over Trans-Lasso is significant especially when $d$ {is} large or $m_{\mathcal{B}}$ is small, which coincides with our theory that our method allows effective use of sample even when   $\|\bm\delta^{(k)}\|_1$'s are large.  In addition, we observe that   RIW-TL-U has better performance than Trans-Lasso, RIW-TL and RIW-TL-P. This may be due to that RIW-TL-U requires much weaker conditions on the initial estimators.

\noindent{{\textbf{RIW-TL type estimators make more effective use of the samples in difficult cases.}} It is seen that the SURs of RIW-TL type of methods and Trans-Lasso are increasing functions of $m_{\mathcal{B}}$ (see the second rows of Figures \ref{fig22} and \ref{fig33}) and are decreasing functions of $d$  (see the fourth rows of  Figures \ref{fig22} and \ref{fig33}). When $m_{\mathcal{B}}$ is smaller or $d$ is larger corresponding to the difficult case where the difference between the sources and the target is larger, we can see that RIW-TL type of methods make more effective use of the samples by having larger SURs in general.

\noindent{{\textbf{RIW-TL can be competitive in easy cases where Trans-Lasso is a desired approach.}} When the sources and the target are close, the all-in-or-all-out Trans-Lasso approach will be able to use the source data effectively. This is confirmed by the SURs in Figures \ref{fig22} and \ref{fig33} when either $m_{\mathcal{B}}$ is large or $d$ is small. For these scenarios, we can see that RIW-TL and RIW-TL-P  are slightly inferior to Trans-Lasso as expected because Trans-Lasso uses more data that are close to the target. Even for these scenarios, we can see that  RIW-TL-U still performs similarly to or better than Trans-Lasso, though its SUR is smaller generally. As argued in our theory, this is due to the fact that the average distance $h_{\rm ave}$ is small in these settings.   However, when $d$ is moderate   (e.g. $d = 8$), although  RIW-TL and RIW-TL-U have smaller values of SUR, they have smaller estimation errors than the Trans-Lasso. This is probably due to that Trans-Lasso selects the whole sample of sources that are identified as informative, among which some observations may be far from the target and thus are less useful for transfer.

\noindent{{\textbf{Assuming a parametric distribution for $f_{\epsilon^{(k)}}$ does not gain much even if $\epsilon^{(k)}$ follows this distribution.}} In the simulation presented here and in the supplementary materials, we see that RIW-TL-P underperforms RIW-TL and RIW-TL-U in general even when the error distributions of $\epsilon^{(k)}, k=0,1,\cdots,K$, are Gaussian. This is probably due to that univariate density estimation employed for RIW-TL and RIW-TL-U is fairly accurate or that taking $f_\epsilon$ by symmetrizing $f_{\epsilon^{(k)}}$ is not the best option when the latter is Gaussian. In view of this, we recommend the use of RIW-TL or RIW-TL-U in practice.

\section{Real Data Analysis}
In this section, we apply our method to the Genotype-Tissue Expression (GTEx) data available on \href{https://gtexportal.org/}{https://gtexportal.org/}. Our decision to analyze this dataset is motivated by \cite{li2021} where Trans-Lasso is shown to improve the prediction performance of LASSO.

This dataset includes 38,187 gene expression levels from 49 tissues of 838 human donors. We are interested in predicting the expression level of gene JAM2 (Junctional adhesion molecule B) in brain tissue using other central nervous system (CNS) genes. Mutations in JAM2 have been found to cause primary familial brain classification \citep{Cen2020, Schottlaender2020}. JAM2 is expressed in 49 tissues in our data sets and we use 40 tissues that have more than 150 measurements on JAM2. The association between JAM2 and other CNS genes in each of the 9 brain tissues will be treated as the target model successively; the details of the target tissues are found in Table \ref{Tiss-test}. That is, we consider 9 target models separately, using the other 31 tissues as the source models in which the total sample size of the source data is 12,386. The covariates in use are the genes that are in the enriched MODULE\_137 (\href{https://www.gsea-msigdb.org/gsea/msigdb/cards/MODULE\_137.html}
 {https://www.gsea-msigdb.org/gsea/msigdb/cards/MODULE\_137.html}) and there are no missing values in all of the 40 tissues. The final covariates include a total of 1089 genes.

 \begin{table}[h]
\caption{The list of 9 brain tissues, their abbreviations and
sample sizes.}\label{Tiss-test}
\begin{tabular*}{\columnwidth}{@{\extracolsep\fill}cllc@{\extracolsep\fill}}
\toprule
{\bf ID} & {\bf Names} & {\bf Abbreviation} & {\bf Sample size} \\
\midrule
1 & Caudate\_basal\_ganglia            & C.B.ganglia    & 194 \\
2 & Cerebellar\_Hemisphere             & C.hemisphere   & 175 \\

3 & Cerebellum                         & Cerebellum     & 209 \\

4 & Cortex                             & Cortex         & 205 \\

5 & Frontal\_Cortex\_BA9               & F.cortex       & 175 \\

6 & Hippocampus                        & Hippocampus    & 165 \\

7 & Hypothalamus                       & Hypothalamus   & 170 \\

8 & Nucleus\_accumbens\_basal\_ganglia & N.A.B. ganglia & 202 \\

9 & Putamen\_basal\_ganglia            & P.B.ganglia    & 170 \\
\bottomrule
\end{tabular*}
\end{table}

 After obtaining the initial estimator  $\tilde\bbeta^{(k)}$ for each source using SCAD, we test the normality of the residuals for each source tissue. From the $p$-values   presented in Supplementary materials, we find that the $p$-values of most of the source tissues are less than 0.05, suggesting that the normality assumption on $\epsilon^{(k)} (k \geq 1)$ is not appropriate. Consequently, we only consider RIW-TL and RIW-TL-U for our methods. The same tuning scheme in Section \ref{sec:simulation} is used here. To obtain a reliable comparison, we randomly split the data in the target tissue into a training set and a testing one with a ratio 7:3 and compute the average prediction errors on the testing set over 100 replications. We examine the relative prediction error (RPE) of a given method over that of LASSO. Clearly, the smaller this error is, the better an approach is. The results are shown in Figure \ref{fig6} together with the sample usage rate of each method.
\begin{figure}[!htbp]
\centering
\includegraphics[scale=0.5]{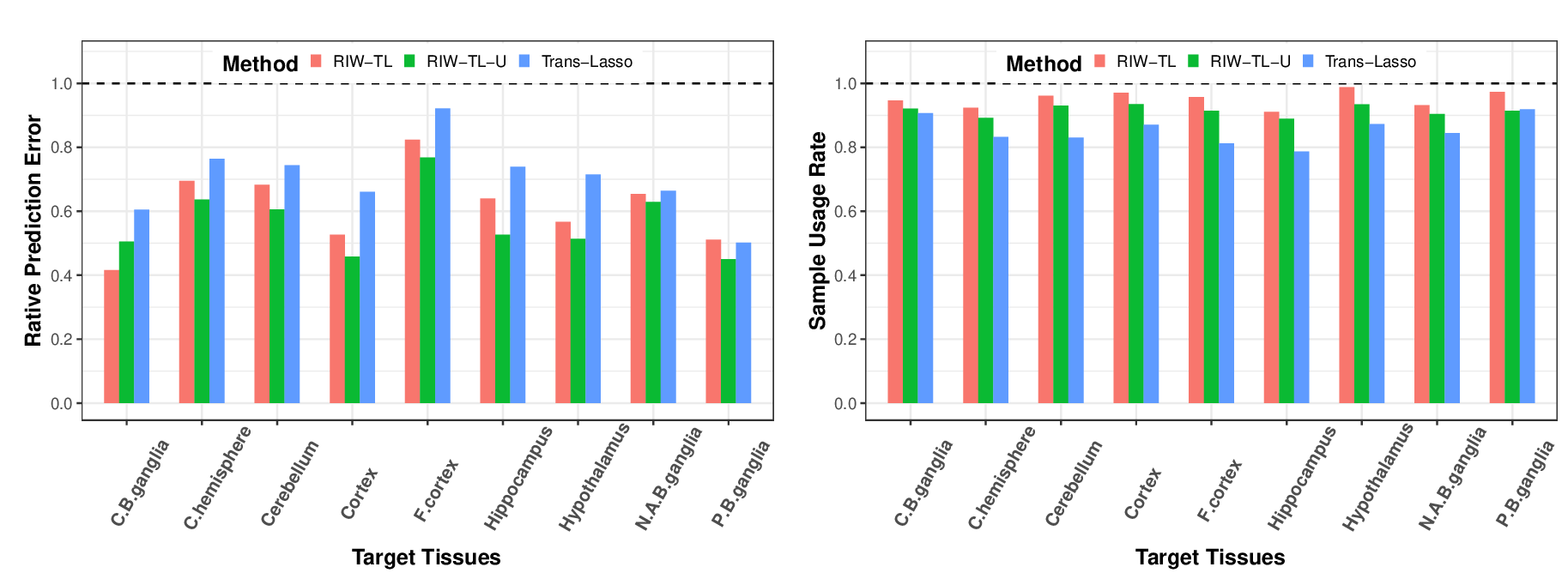}
\caption{Left: Relative prediction errors.
Right: Sample usage rates.
}\label{fig6}
\end{figure}

In Figure \ref{fig6}, it is seen that RIW-TL, RIW-TL-U and Trans-Lasso all perform better than LASSO across different target tissues as all the relative prediction errors are smaller than one. Define the relative gain of a method as the difference between its prediction error and that of LASSO then divide by the latter. We find that the average gain of Trans-Lasso is 29.8\%, while that of RIW-TL is 38.7\% and RIW-TL-U is 43.4\%. On the other hand, it is seen that the sample usage rates of RIW-TL and RIW-TL-U are close to one, higher than those of Trans-Lasso,  suggesting that they use more source data, and hence leading to smaller prediction errors.

Since all the methods are based on LASSO which does variable selection, we compare them in terms of the selected variables in relation to {LASSO estimator $\hat\bbeta_{\rm Lasso} = (\hat\beta_{L,1},\cdots,\hat\beta_{L,p})^{\top}$}. Towards this, for any estimator $\check\bbeta$,
define
$$
{\rm S} = \frac{1}{p} \# \{j: \check \beta_j \neq 0\},~~
{\rm PR} = \frac{\# \{j: \check \beta_j \neq 0, \hat \beta_{L,j} \neq 0\}}{\# \{j: \hat \beta_{L,j} \neq 0\}},~~
{\rm NR} = \frac{\# \{j: \check \beta_j \neq 0, \hat \beta_{L,j} = 0\}}{\# \{j: \hat \beta_{L,j} = 0\}}.
$$
Here S stands for the sparsity rate indicating the proportion of variables selected as nonzero. PR is the positive rate documenting the percentage of nonzero coefficients of the LASSO  estimator also estimated as nonzero by a method. NR is the negative rate, the percentage of zero coefficients of the LASSO estimate that are otherwise estimated as nonzero. The results are summarized in Table \ref{table3} together with the relative prediction error defined previously. We can see that both RIW-TL, RIW-TL-U and Trans-Lasso select more variables than LASSO. This makes sense intuitively since the sample size is boosted in these transfer learning methods by including data in the source data. As a return, the penalized likelihood is able to identify more variables.  Comparing the two RIW-TL methods (RIW-TL and RIW-TL-U) and Trans-Lasso, we find that in general our methods produce less sparse models with larger positive rates and negative rates. Overall, compared to LASSO and Trans-Lasso, the two residual importance weighted methods have smaller relative prediction errors for all the targets, with RIW-TL-U having the smallest errors except for C.B.ganglia.

\begin{table*}[t]
\caption{Further results on variable selection and the relative prediction error (RPE) in data analysis.}\label{table3}
\tabcolsep=0pt
\begin{tabular*}{\textwidth}{@{\extracolsep{\fill}}lcccccccccccc@{\extracolsep{\fill}}}
\toprule%
\textbf{Method} & \textbf{S} & \textbf{PR} & \textbf{NR} &\textbf{RPE}
                & \textbf{S} & \textbf{PR} & \textbf{NR} &\textbf{RPE}
                & \textbf{S} & \textbf{PR} & \textbf{NR} &\textbf{RPE}\\
\midrule
& \multicolumn{4}{@{}c@{}}{C.B.gangli}
& \multicolumn{4}{@{}c@{}}{C.hemisphere}
& \multicolumn{4}{@{}c@{}}{Cerebellu}\\
RIW-TL      & 0.312 & 0.794 & 0.468 & 0.416   & 0.252 & 0.379 & 0.248 & 0.695   & 0.812 & 0.880 & 0.810 & 0.283 \\
RIW-TL-U    & 0.378 & 0.676 & 0.313 & 0.505   & 0.429 & 0.690 & 0.422 & 0.637   & 0.813 & 0.840 & 0.812 & 0.206 \\
Trans-Lasso & 0.144 & 0.365 & 0.140 & 0.605   & 0.121 & 0.138 & 0.121 & 0.764   & 0.124 & 0.321 & 0.119 & 0.344 \\
Lasso       & 0.031 & 1     & 0     & 1       & 0.027 & 1     & 0     & 1       & 0.023 & 1     & 0     & 1     \\
& \multicolumn{4}{@{}c@{}}{Cortex}
& \multicolumn{4}{@{}c@{}}{F.cortex}
& \multicolumn{4}{@{}c@{}}{Hippocampus}\\
RIW-TL      & 0.253 & 0.619 & 0.245 & 0.527   & 0.565 & 0.663 & 0.557 & 0.824   & 0.511 & 0.643 & 0.507 & 0.640 \\
RIW-TL-U    & 0.529 & 0.619 & 0.527 & 0.458   & 0.590 & 0.687 & 0.582 & 0.768   & 0.594 & 0.714 & 0.591 & 0.527 \\
Trans-Lasso & 0.129 & 0.571 & 0.120 & 0.661   & 0.142 & 0.181 & 0.139 & 0.922   & 0.096 & 0.214 & 0.092 & 0.739 \\
Lasso       & 0.019 & 1     & 0     & 1       & 0.076 & 1     & 0     & 1       & 0.026 & 1     & 0     & 1      \\

& \multicolumn{4}{@{}c@{}}{Hypothalamus}
& \multicolumn{4}{@{}c@{}}{N.A.B.ganglia}
& \multicolumn{4}{@{}c@{}}{P.B.gang}\\
RIW-TL      & 0.619 & 0.516 & 0.622 & 0.567   & 0.307 & 0.357 & 0.304 & 0.654   & 0.496 & 0.567 & 0.494 & 0.511 \\
RIW-TL-U    & 0.775 & 0.613 & 0.780 & 0.514   & 0.694 & 0.750 & 0.797 & 0.629   & 0.587 & 0.767 & 0.788 & 0.450 \\
Trans-Lasso & 0.123 & 0.226 & 0.120 & 0.715   & 0.111 & 0.196 & 0.106 & 0.664   & 0.120 & 0.200 & 0.118 & 0.502 \\
Lasso       & 0.028 & 1     & 0     & 1       & 0.051 & 1     & 0     & 1       & 0.028 & 1     & 0     & 1     \\
\bottomrule
\end{tabular*}
\end{table*}

\section{Discussion}
We have proposed a novel transfer learning method named RIW-TL for high-dimensional linear regression, based on weighting residuals by their importance. Our method possesses several attractive features.  First, it uses information adaptively at the individual observation level rather than the source level popular in the literature. We show that the oracle version of RIW-TL gives a better convergence rate than its competitors. Second, the practical version of RIW-TL only requires the estimation of univariate densities thus avoiding the curse of dimensionality needed in naive importance weighting. Third, RIW-TL treats posterior shift and full distribution shift in a unified manner as the needed densities for defining the weights are the conditional distributions of the response given the predictors. Thus, it overcomes the shortcomings of competing approaches in the literature \cite[cf.]{li2021,tian2022} where the predictors across sources should follow more or less a homogenous design. Last, we demonstrate via numerical experiments that RIW-TL achieves better performance than its competitors such as Trans-Lasso, especially in difficult cases where the sources are not close to the target. We have focused on the situation where all $\bbeta^{(k)}, k=0,\cdots, K$, are sparse, but the results are readily applicable when they are approximately so in the sense of \cite{zhang2008sparsity}.

There are many avenues to extend the current framework to other settings. Firstly, we can study this approach for quantile regression to accommodate the heterogeneity and heavy tailedness in the source and target domains. Secondly, thanks to the fact that our importance weights is constructed as a one-dimensional conditional density ratio, a similar analysis may be applied to nonparametric or semi-parametric regression models. Thirdly, we can extend the idea to study generalized linear models where the probability ratio is a natural candidate for importance weighting. Finally, it will be interesting to explore how the ideas developed in this paper can be used for semi-supervised and unsupervised learning. These and other generalizations of RIW-TL will be discussed in future work.

\bibliographystyle{elsarticle-harv}
\bibliography{reference}

\clearpage
{\setstretch{1.5}
\begin{center}
{\Large \bf Supplements to ``Residual Importance Weighted Transfer Learning
For High-dimensional Linear Regression"}
\end{center}
}

This supplementary material contains four parts. The proofs of the propositions, lemmas and theorems where the weights are assumed to be known are presented in Section \ref{S.1}. Section \ref{S.2} contains the proofs where the weights are estimated by kernel density estimation. Section \ref{S.3} includes the proofs of RIW-TL when $f_{\epsilon}$ is from a uniform distribution. Section \ref{S.4} includes additional numerical results.

\setcounter{equation}{0}
\renewcommand\thesection{S.1}
\renewcommand{\theequation}{A.\arabic{equation}}
\section{Proofs in Section \ref{sec:riw}}\label{S.1}
\subsection{Proof of Proposition \ref{prop1}}
 Denote a generic random vector of $\bm z$ in source $k$ as $(\bm x^{(k)}, y^{(k)})$ and, with some abuse of notation,  the conditional distribution of $y$ given $\bm x$ in source $k$ as $f_k(y|{\bm x})$.   Note
  \begin{eqnarray*}
 && \mathbb{E}_{\bm z \sim f_k}
  \left\{
  \omega^{(k)} \cdot
  (y - \bm x^{\top} \bbeta)^2
  \right\}  = \mathbb{E}_{\bm x^{(k)}} \left[\mathbb{E}_{y^{(k)}|\bm x^{(k)}}
  \left\{
  \omega^{(k)} \cdot
  (y^{(k)} - (\bm x^{(k)})^{\top} \bbeta)^2
  \right\}\right]\\
  &=& \mathbb{E}_{\bm x^{(k)}} \left\{ \int (y - (\bm x^{(k)})^{\top} \bbeta)^2 \frac{f_{\epsilon}(y - ({\bm x}^{(k)})^{\top}\bbeta^{(0)})}{f_{\epsilon^{(k)}}(y - ({\bm x}^{(k)})^{\top} \bbeta^{(k)})} f_k(y|{\bm x}^{(k)}) dy \right\}\\
  &=&\mathbb{E}_{\bm x^{(k)}}  \left\{ \int (y - (\bm x^{(k)})^{\top} \bbeta)^2f_{\epsilon}(y - ({\bm x}^{(k)})^{\top}\bbeta^{(0)}) dy\right\}\\
   &=& \mathbb{E}_{\bm x^{(k)}} \left\{ \mathbb{E}_{\epsilon}  (y - (\bm x^{(k)})^{\top} \bbeta)^2\right\},
  \end{eqnarray*}
 where the second equality holds because the conditional distribution of $y$ given $\bm x$ in source $k$ satisfies
 $f_k(y|{\bm x}^{(k)})=f_{\epsilon^{(k)}}(y - ({\bm x}^{(k)})^{\top} \bbeta^{(k)})$ due to the linear model assumption, and in the last equality $\epsilon$ is a random variable satisfying $y=(\bm x^{(k)})^{\top} \bbeta^{(0)}+\epsilon$. Since by assumption $\mathbb{E} (\epsilon)=0$, the expectation $\mathbb{E}_{\epsilon}  (y - (\bm x^{(k)})^{\top} \bbeta)^2$ is minimized at $\bbeta^{(0)}$ for any given $\bm x^{(k)}$ and so is the total expectation above. Proposition \ref{prop1} is proved.

\subsection{Proof of Proposition \ref{prop2}}
 The conclusion $(i)$ is obvious. Recall $\omega_i^{(k)} = f_{\epsilon}(\epsilon_i^{(k)} + \eta_i^{(k)})/f_{\epsilon^{(k)}}(\epsilon_i^{(k)})$ and $f_{\epsilon}(t)=[f_{\epsilon^{(k)}}(t)+f_{\epsilon^{(k)}}(-t)]/2$. By the definition of $\mathcal{I}_k$, that is, $\mathcal{I}_k = \{i: |\epsilon_i^{(k)} + \eta_i^{(k)}| \leq A, |\eta_i^{(k)}| \leq M\}$, the conclusion of $(ii)$ is proved.


\subsection{Proof of Proposition \ref{prop3}}
Denote by $f_{(\epsilon_i^{(k)}, \eta_i^{(k)})}(t_1,t_2)$ the joint density function of $(\epsilon_i^{(k)}, \eta_i^{(k)})$. When $\epsilon_i^{(k)}$ is distributed as $N(0,1)$ and $\bm x_i^{(k)}$ follows $N(\bm 0, \bSig^{(k)})$,  $\eta_i^{(k)}$ is  distributed as $N(0,d_k^2)$ and we have
$$
f_{(\epsilon_i^{(k)}, \eta_i^{(k)})}(t_1,t_2) = \frac{1}{2 \pi d_k} \exp
\left(
- \frac{t_1^2}{2} - \frac{t_2^2}{2 d_k^2}
\right).
$$
Let $X = \epsilon_i^{(k)} + \eta_i^{(k)}$, $Y = \eta_i^{(k)}$ and $g(x,y)$ be the joint density of $(X,Y)$. It is easy to see that
$$
g(x,y) = f_{(\epsilon_i^{(k)}, \eta_i^{(k)})}(x - y,y) = \frac{1}{2 \pi d_k} \exp
\left\{
- \frac{(x - y)^2}{2} - \frac{y^2}{2 d_k^2}
\right\}.
$$
With the notation  $C_{A,M} = \exp(-AM)$,  we have
\begin{eqnarray}\nonumber
\mathbb{P}(|\epsilon_i^{(k)} + \eta_i^{(k)}| \le A,~ |\eta_i^{(k)}| \leq M ) & = &
\frac{1}{2 \pi d_k}
\int_{-A}^A \int_{-M}^M \exp
\left\{
- \frac{(x - y)^2}{2} - \frac{y^2}{2 d_k^2}
\right\} dx dy\\\nonumber
& = &
\frac{1}{2 \pi d_k}
\int_{-A}^A \int_{-M}^M \exp
\left\{
- \frac{d_k^2 x^2 + (d_k^2 + 1)y^2}{2 d_k^2} + xy
\right\} dx dy\\\nonumber
&\geq&
C_{A,M} \frac{1}{2 \pi d_k}
\int_{-A}^A \int_{-M}^M \exp
\left\{
- \frac{(d_k^2 + 1)(x^2 + y^2)}{2 d_k^2}
\right\} dx dy.
\end{eqnarray}
By the polar coordinates transformation, it holds that
\begin{eqnarray}\nonumber
\mathbb{P}(|\epsilon_i^{(k)} + \eta_i^{(k)}| \le A,~ |\eta_i^{(k)}| \leq M )
&\geq&
 C_{A,M} \frac{1}{2 \pi d_k}
\int_{-A}^A \int_{-M}^M \exp
\left\{
- \frac{(d_k^2 + 1)(x^2 + y^2)}{2 d_k^2}
\right\} dx dy\\\nonumber
& \geq &  C_{A,M} \frac{1}{d_k} \int_{0}^{\varphi} \exp\left(- \frac{d_k^2 + 1}{2 d_k^2} r^2 \right) r~ dr\\\nonumber
& = &
C_{A,M} \frac{d_k}{d_k^2 + 1} \left\{
1 - \exp\left(
-\frac{d_k^2 + 1}{2 d_k^2} \varphi^2
\right)
\right\},
\end{eqnarray}
where $\varphi = {\rm min}\{A,M\} < \infty$.
 This completes the  proof of Proposition \ref{prop3}.

\subsection{Proof of Corollary \ref{coro1}}
Recall that $\rho_{\mathcal{I}} = (n_0 + \mathbb{E}(n_{\mathcal{I}}))/(n_0 + \Sigma_{k = 1}^K n_k)$. By the conclusion of Proposition \ref{prop3}, that is, $n_k/d_k \lesssim \mE(n_{\mathcal{I}_k}) \leq n_k$, we complete the proof.

\vspace{5mm}
 For  $0 \le k\le K$, write $\bX^{(k)}=(\bm x_1^{(k)},\cdots, \bm x_{n_k}^{(k)})^\top$ as the design matrix, $\bm Y^{(k)} = (y_1^{(1)},\cdots,y_{n_k}^{(k)})^{\top} \in \mathbb{R}^{n_k}$ as the corresponding response vector, and $\bm \epsilon^{(k)} = (\epsilon_1^{(k)},\cdots,\epsilon_{n_k}^{(k)})^{\top}$ as the error vector. For any nonempty set $\mathcal{B}$, recall $\bI_{\mathcal{B}}$ is a diagonal matrix with its $i$th diagonal element being indicator ${\rm I}(i \in \mathcal{B})$. For easy reference, recall the following conditions.

\subsection{Proof of Theorem \ref{the1}}
It is sufficient to show the conclusion for $K = 1$ where $\mathcal{I} = \mathcal{I}_1$.
 Denote by $\bW = {\rm diag}\{\omega_1^{(1)},\cdots,\omega_{n_1}^{(1)}\}$ the diagonal weights matrix.
Let
$$
\bm Y =
\begin{pmatrix}
 \bm Y^{(0)} \\
\bI_{\mathcal{I}_1} \bW^{1/2} \bm Y^{(1)}
\end{pmatrix} \quad \text{and}
\quad \bX =
\begin{pmatrix}
 \bX^{(0)} \\
\bI_{\mathcal{I}_1} \bW^{1/2} \bX^{(1)}
\end{pmatrix}.
$$
Since $\tilde \bbeta_{ora}^{(0)}$ is the minimizer, it holds that
\begin{equation}\label{f1}
\frac{1}{2(n_0 + n_1)} \|\bm Y - \bX \tilde \bbeta_{ora}^{(0)}\|^2 + \lambda \|\tilde \bbeta_{ora}^{(0)}\|_1
\leq
\frac{1}{2(n_0 + n_1)} \|\bm Y -  \bX \bbeta^{(0)}\|^2 + \lambda \|\bbeta^{(0)}\|_1.
\end{equation}
Denote by $\tilde \bdelta = \tilde \bbeta_{ora}^{(0)} - \bbeta^{(0)}$. By simplifying (\ref{f1}), we obtain the following basic inequality:
\begin{eqnarray}\nonumber
&& \frac{1}{2(n_0 + n_1)} \tilde \bdelta^{\T} [(\bX^{(0)})^{\T} \bX^{(0)} + (\bX^{(1)})^{\T} \bI_{\mathcal{I}_1} \bW \bX^{(1)}] \tilde \bdelta + \lambda \|\tilde \bbeta_{ora}^{(0)}\|_1\\\label{f3}
&\leq&
\frac{1}{n_0 + n_1} [(\bm \epsilon^{(0)})^{\T} \bX^{(0)} + (\bm \epsilon^{(1)} + \bm \eta^{(1)})^{\T} \bI_{\mathcal{I}_1} \bW \bX^{(1)}] \tilde \bdelta + \lambda \|\bbeta^{(0)}\|_1 : =\Lambda,
\end{eqnarray}
\noindent
where $\bm \eta^{(1)} =  \bX^{(1)} (\bbeta^{(1)} - \bbeta^{(0)}) = (\eta_1^{(1)},\cdots,\eta_{n_1}^{(1)})^{\top}$. For $\Lambda$ in (\ref{f3}), it follows that
$$
|\Lambda| \leq \frac{1}{n_0 + n_1}\|(\bm \epsilon^{(0)})^{\T} \bX^{(0)} +(\bm \epsilon^{(1)} + \bm \eta^{(1)})^{\T} \bI_{\mathcal{I}_1} \bW \bX^{(1)}\|_{\infty} \|\tilde \bdelta\|_1 + \lambda \|\bbeta^{(0)}\|_1.
$$
Given $t>0$, we have
\begin{eqnarray}\nonumber
&& {\mathbb{P}}\left( \frac{1}{n_0 + n_1}\|(\bm \epsilon^{(0)})^{\T} \bX^{(0)} + (\bm \epsilon^{(1)} + \bm \eta^{(1)})^{\T} \bI_{\mathcal{I}_1} \bW \bX^{(1)}\|_{\infty} > t \right)\\\nonumber
&\leq& \sum\limits_{j=1}^p \, {\mathbb{P}}\left(
\frac{1}{n_0 + n_1}
\left| \sum\limits_{i=1}^{n_0} \epsilon_i^{(0)} x_{ij}^{(0)} + \sum\limits_{i = 1}^{n_1} (\epsilon_i^{(1)} + \eta_i^{(1)}) \omega_i^{(1)} x_{ij}^{(1)} {\rm I}( i \in \mathcal{I}_1)\right| > t \right)\\\label{f4}
& \leq & p~ \underset{1 \leq j \leq p}{\rm max}\, {\mathbb{P}} \left( \frac{1}{n_0 + n_1}
\left| \sum\limits_{i=1}^{n_0} \epsilon_i^{(0)} x_{ij}^{(0)} + \sum\limits_{i = 1}^{n_1} (\epsilon_i^{(1)} + \eta_i^{(1)}) \omega_i^{(1)} x_{ij}^{(1)} {\rm I}( i \in \mathcal{I}_1)\right| > t \right).
\end{eqnarray}
Recall $\mathcal{I}_1 = \{i \in \mathcal{D}_1: |\epsilon_i^{(1)} + \eta_i^{(1)}| \leq A,~|\eta_i^{(1)}| \leq M\}$. By the symmetry of $f_{\epsilon}$, for any fixed $\eta_i^{(1)}$, we have
\begin{eqnarray}\nonumber
\mathbb{E}[(\epsilon_i^{(1)} + \eta_i^{(1)}) \omega_i^{(1)} {\rm I}(i \in \mathcal{I}_1)]
&=& \int_{\epsilon_i^{(1)} \in \mathcal{I}_1|_{\epsilon_i^{(1)}}} (\epsilon_i^{(1)} + \eta_i^{(1)}) \frac{f_{\epsilon}(\epsilon_i^{(1)} + \eta_i^{(1)})}{f_{\epsilon^{(1)}}(\epsilon_i^{(1)})} f_{\epsilon^{(1)}}(\epsilon_i^{(1)}) d \epsilon_i^{(1)}\\\label{f5}
&=& \int_{-A}^A \xi_i f_{\epsilon}(\xi_i) d \xi_i = 0,
\end{eqnarray}
where $\xi_i = \epsilon_i^{(1)} + \eta_i^{(1)}$ and $\mathcal{I}_1|_{\epsilon_i^{(1)}} = \{\epsilon_i^{(1)}:  |\epsilon_i^{(1)} + \eta_i^{(1)}| \le A,~ |\eta_i^{(1)}| \leq M \}$. By the definition of $\mathcal{I}_1$ and Proposition \ref{prop2}, we have that $(\epsilon_i^{(1)} + \eta_i^{(1)}) \omega_i^{(1)} {\rm I}(i \in \mathcal{I}_1)$ is bounded and consequently is a sub-Gaussian variable, of which the sub-gaussian parameter is denoted as $\sigma_{\omega}$.

For $j = 1,\cdots,p$, without loss of generality, assume that $\Sigma_{i = 1}^{n_0} (x_{ij}^{(0)})^2/n_0 = 1$ and $\Sigma_{i \in \mathcal{I}_1} (x_{ij}^{(1)})^2 / n_{\mathcal{I}_1}= 1$. Since $\epsilon_i^{(0)}$ is distributed as sub-Gaussian with parameter $\kappa$ in Condition 2, and the target and the sources are independent, letting
$$
t = 2 \,{\rm max}\{\kappa, \sigma_{\omega}\} \tilde \rho_{\mathcal{I}_1}
\sqrt{(\tilde t^2 + 2 \,{\rm log}\,p) / (n_0 + n_{\mathcal{I}_1})}~~\text{with}~~
\tilde \rho_{\mathcal{I}_1} = (n_0 + n_{\mathcal{I}_1})/(n_0 + n_1),
$$
for any $\tilde t > 0$, we have
\begin{eqnarray}\nonumber
 && {\mathbb{P}}\left( \frac{1}{n_0 + n_1}\|(\bm \epsilon^{(0)})^{\T} \bX^{(0)} + (\bm \epsilon^{(1)} + \bm \eta^{(1)})^{\T} \bI_{\mathcal{I}_1} \bW \bX^{(1)}\|_{\infty} > t \right)\\\nonumber
 & = &
 {\mathbb{P}}\left( \frac{1}{n_0 + n_{\mathcal{I}_1}}\|(\bm \epsilon^{(0)})^{\T} \bX^{(0)} + (\bm \epsilon^{(1)} + \bm \eta^{(1)})^{\top} \bI_{\mathcal{I}_1} \bW \bX^{(1)}\|_{\infty} >
t / \tilde \rho_{\mathcal{I}_1}
  \right)\\\label{prob-compute}
 & \leq & 2 \, {\rm exp}\left(-\tilde t^2/2 \right).
\end{eqnarray}
Let
$$
\lambda \asymp \tilde \rho_{\mathcal{I}_1} \sqrt{{\rm log}\,p / (n_0 + n_{\mathcal{I}_1})}~~\text{with}~~\tilde \rho_{\mathcal{I}_1} = (n_0 + n_{\mathcal{I}_1})/(n_0 + n_1).
$$
Then it holds in probability that
\begin{equation}\label{lambda-value}
\frac{1}{n_0 + n_1}\|(\bm \epsilon^{(0)})^{\T} \bX^{(0)} + (\bm \epsilon^{(1)} + \bm \eta^{(1)})^{\T} \bI_{\mathcal{I}_1} \bW \bX^{(1)}\|_{\infty} \leq \frac{1}{2} \lambda.
\end{equation}
Feeding the result of \eqref{lambda-value} into (\ref{f3}), we have
\begin{eqnarray}\nonumber
&& \frac{1}{2(n_0 + n_1)} \tilde \bdelta^{\T} [(\bX^{(0)})^{\T} \bX^{(0)} + (\bX^{(1)})^{\T} \bI_{\mathcal{I}_1} \bW \bX^{(1)}] \tilde \bdelta \\\nonumber
&\leq&
 \lambda \|\tilde \bbeta_{ora}^{(0)} - \bbeta^{(0)}\|_1/2 + \lambda \|\bbeta^{(0)}\|_1 - \lambda \|\tilde \bbeta_{ora}^{(0)}\|_1\\\nonumber
&\leq&
 \lambda (\| (\tilde \bbeta_{ora}^{(0)} - \bbeta^{(0)})_{\mathcal{H}_0}\|_1 + \|(\tilde \bbeta_{ora}^{(0)})_{\mathcal{H}_0^c}\|_1)/2 + \lambda(\| (\tilde \bbeta_{ora}^{(0)} - \bbeta^{(0)})_{\mathcal{H}_0}\|_1 - \|(\tilde \bbeta_{ora}^{(0)})_{\mathcal{H}_0^c}\|_1)\\\label{f7}
& = & 3 \lambda \| (\tilde \bbeta_{ora}^{(0)} - \bbeta^{(0)})_{\mathcal{H}_0}\|_1/2 - \lambda \|(\tilde \bbeta_{ora}^{(0)})_{\mathcal{H}_0^c}\|_1/2.
\end{eqnarray}
We see $\tilde \bdelta = \tilde \bbeta_{ora}^{(0)} - \bbeta^{(0)} \in \mathcal{E}(\mathcal{H}_0,3)$. By the restricted eigenvalue Condition 3, it follows that
\begin{eqnarray}\nonumber
&&\frac{1}{n_0 + n_1} \tilde \bdelta^{\T} [(\bX^{(0)})^{\T} \bX^{(0)} + (\bX^{(1)})^{\T} \bI_{\mathcal{I}_1} \bW \bX^{(1)}] \tilde \bdelta \\\nonumber
&\geq&
\frac{1}{n_0 + n_1} [ \tilde \bdelta^{\T} (\bX^{(0)})^{\T} \bX^{(0)}\tilde \bdelta  + c \cdot \tilde \bdelta^{\top} (\bX^{(1)})^{\T} \bI_{\mathcal{I}_1} \bX^{(1)} \tilde \bdelta] \\\nonumber
&\geq&
\frac{1}{n_0 + n_1} (n_0  \phi_1^2 + n_{\mathcal{I}_1} c \phi_2^2)\|\tilde \bdelta\|^2 \\\label{f8}
&\geq&
\frac{n_0 + n_{\mathcal{I}_1}}{n_0 + n_1} \phi^2 \|\tilde \bdelta\|^2 : = \tilde \rho_{\mathcal{I}_1} \phi^2 \|\tilde \bdelta\|^2,
\end{eqnarray}
where $\phi^2 = {\rm min}\{\phi_1^2, c \phi_2^2\}$ and $\omega_i^{(1)} \geq c > 0$ for all $i \in \mathcal{I}_1$ by Proposition \ref{prop2}. Back to (\ref{f7}), by adding $\lambda \|(\tilde \bbeta_{ora}^{(0)} - \bbeta^{(0)})_{\mathcal{H}_0}\|_1/ 2$ to both sides of this inequality, it follows that
\begin{equation}\label{ff9}
\frac{1}{n_0 + n_1} \tilde \bdelta^{\T} [(\bX^{(0)})^{\T} \bX^{(0)} + (\bX^{(1)})^{\T} \bI_{\mathcal{I}_1} \bW \bX^{(1)}] \tilde \bdelta  +  \lambda \|\tilde \bbeta_{ora}^{(0)} - \bbeta^{(0)}\|_1
\leq
 4 \lambda \| (\tilde \bbeta_{ora}^{(0)} - \bbeta^{(0)})_{\mathcal{H}_0}\|_1.
\end{equation}
For  $4\lambda \| (\tilde \bbeta_{ora}^{(0)} - \bbeta^{(0)})_{\mathcal{H}_0}\|_2$ in (\ref{ff9}), it holds that
\begin{eqnarray}\nonumber
4\lambda  \| (\tilde \bbeta_{ora}^{(0)} - \bbeta^{(0)})_{\mathcal{H}_0}\|_1
&\leq&
4\lambda \sqrt{s_0} \| (\tilde \bbeta_{ora}^{(0)} - \bbeta^{(0)})_{\mathcal{H}_0}\|_2\\\nonumber
&\leq&
4\lambda \sqrt{s_0} \|\tilde \bbeta_{ora}^{(0)} - \bbeta^{(0)}\|_2\\\nonumber
&\leq&
\frac{4 \lambda \sqrt{s_0}}{\phi}
\sqrt{\frac{1}{\tilde \rho_{\mathcal{I}_1}}}
\sqrt{\frac{\tilde \bdelta^{\top} [(\bX^{(0)})^{\T} \bX^{(0)} + (\bX^{(1)})^{\T} \bI_{\mathcal{I}_1} \bW \bX^{(1)}] \tilde \bdelta}{n_0 + n_1}}\\\label{f9}
&\leq& \frac{\tilde \bdelta^{\T} [(\bX^{(0)})^{\T} \bX^{(0)} + (\bX^{(1)})^{\top} \bI_{\mathcal{I}_1} \bW \bX^{(1)}] \tilde \bdelta}{2(n_0 + n_1)} +  \frac{8 \lambda^2 s_0}{\tilde \rho_{\mathcal{I}_1} \phi^2},
\end{eqnarray}
where the first inequality follows from the Cauchy-Schwarz inequality, the third uses (\ref{f8}) and the last uses the inequality $4 a b \leq a^2/2 + 8 b^2$. Inserting (\ref{f9}) into (\ref{ff9}), we have
$$
\frac{1}{2} \tilde \rho_{\mathcal{I}_1} \phi^2 \|\tilde \bdelta\|^2 \leq \frac{1}{2(n_0 + n_1)} \tilde \bdelta^{\top} [(\bX^{(0)})^{\top} \bX^{(0)} + (\bX^{(1)})^{\top} \bI_{\mathcal{I}_1} \bW \bX^{(1)}] \tilde \bdelta
 \leq
 \frac{8 \lambda^2 s_0}{\tilde \rho_{\mathcal{I}_1} \phi^2},
$$
where the first inequality uses (\ref{f8}). Then we have
\begin{equation}\label{oracle-rate}
\|\tilde \bdelta\|^2 \leq
\frac{16 \lambda^2 s_0}{\tilde \rho_{\mathcal{I}_1}^2 \phi^4}.
\end{equation}
Recall $\lambda \asymp \tilde \rho_{\mathcal{I}_1} \sqrt{{\rm log}\,p / (n_0 + n_{\mathcal{I}_1})}$ with $\tilde \rho_{\mathcal{I}_1} = (n_0 + n_{\mathcal{I}_1})/(n_0 + n_1)$. Then under the event
\begin{equation}\label{event1}
\{n_{\mathcal{I}_1} \asymp \mathbb{E}(n_{\mathcal{I}_1})\},
\end{equation}
it follows that
$$
\lambda \asymp \rho_{\mathcal{I}_1} \sqrt{{\rm log}\,p / (n_0 + \mathbb{E}(n_{\mathcal{I}_1}))} ~~
\text{with}~~\rho_{\mathcal{I}_1} = (n_0 + \mathbb{E}(n_{\mathcal{I}_1}))/(n_0 + n_1).
$$
Then (\ref{oracle-rate}) can be rewritten as
$$
\|\tilde \bdelta\|^2
 =  O_p \left(
\frac{s_0 {\rm log}\,p}{n_0 + \mathbb{E}(n_{\mathcal{I}_1})}
\right).
$$

It remains to prove that the probability of the event in (\ref{event1}) is tending to 1. Denote by $p^{(1)} = \mathbb{P}(|\epsilon^{(1)} + \eta^{(1)}| \leq A,~|\eta^{(1)}| \leq M)$. It follows that
 $$
 \mE(n_{\mathcal{I}_1}) = n_1 \mE({\rm I}\{i \in \mathcal{I}_1\}) = n_1 p^{(1)}
 $$
 and
 $$
\var(n_{\mathcal{I}_1}) = n_1 [ \mE^2({\rm I}\{i \in \mathcal{I}_1\}) -
 \mE({\rm I}\{i \in \mathcal{I}_1\})^2 ] = n_1 p^{(1)}(1 - p^{(1)}).
 $$
We see that $\mE(n_{\mathcal{I}_1})$ has an order no less than $\sqrt{\var(n_{\mathcal{I}_1})}$. Since $n_{\mathcal{I}_1} -  \mathbb{E}(n_{\mathcal{I}_1}) = O_p \left(\sqrt{\var(n_{\mathcal{I}_1})}\right)$, it holds in probability that
 \begin{equation}\label{nE}
 n_{\mathcal{I}_1} \asymp \mathbb{E}(n_{\mathcal{I}_1}).
\end{equation}
 Thus, the proof of Theorem \ref{the1} is completed.

\renewcommand\thesection{S.2}
\section{Proofs in Section \ref{sec:riw-tl in practice}}\label{S.2}
\renewcommand{\theequation}{B.\arabic{equation}}
\setcounter{equation}{0}
\subsection{Proof of Proposition \ref{prop4}}
We prove the proposition when $k = 1$ and $j = 3$. The other cases can be proved in the same manner.
First, we prove ($i$). Recall
$$
\mathcal{I}_{13}^{-} = \{i \in \mathcal{D}_{13}: |\epsilon_i^{(1)} + \eta_i^{(1)}| \leq A - \alpha_n,~|\eta_i^{(1)}| \leq M - \alpha_n\}
$$
and
\begin{equation}\label{I_13_add}
\hat {\mathcal{I}}_{13} = \{i \in \mathcal{D}_{13}: |\epsilon^{(1)} + \eta_i^{(1)} - r_i^{(1)0}| \leq A,~ |\hat \eta_i^{(1)}| \leq M \},
\end{equation}
where
\begin{equation}\label{alpha_n_1}
\alpha_n = 2 \underset{1\le k\le K}{\rm max} \psi_k \underset{0\le k\le K}{\rm max} \|\tilde \bbeta^{(k)} - \bbeta^{(k)}\|_1~~\text{with}~~\psi_k = \underset{i \in \mathcal{D}_k}{\rm max}~ \|\bm x_i^{(k)}\|_{\infty},
\end{equation}
and $r_i^{(1)0} = (\bm x_i^{(1)})^{\top} (\tilde \bbeta^{(0)} - \bbeta^{(0)})$. For any $i \in \mathcal{I}_{13}^{-}$, it holds that
\begin{eqnarray}\nonumber
 |\epsilon_i^{(1)} + \eta_i^{(1)} - r_i^{(1)0}| & \leq & |\epsilon_i^{(1)} + \eta_i^{(1)}| + |r_i^{(1)0}|\\\nonumber
 & = & |\epsilon_i^{(1)} + \eta_i^{(1)}| + |(\bm x_i^{(1)})^{\top} (\tilde \bbeta^{(0)} - \bbeta^{(0)})|\\\nonumber
 & \leq & |\epsilon_i^{(1)} + \eta_i^{(1)}| + \|\bm x_i^{(1)}\|_{\infty} \|\tilde \bbeta^{(0)} - \bbeta^{(0)}\|_1\\\nonumber
 & \leq & A - \alpha_n + \|\bm x_i^{(1)}\|_{\infty} \|\tilde \bbeta^{(0)} - \bbeta^{(0)}\|_1.
 \end{eqnarray}
 By the definition of $\alpha_n$ in \eqref{alpha_n_1}, it follows that
$$
 \underset{1 \leq i \leq n_1}{\rm max}\|\bm x_i^{(1)}\|_{\infty} \|\tilde \bbeta^{(0)} - \bbeta^{(0)}\|_1 \leq \alpha_n.
$$
Then it follows that, for any $i \in \mathcal{I}_{13}^{-}$,
  $$
  |\epsilon_i^{(1)} + \eta_i^{(1)} - r_i^{(1)0}| \leq
  |\epsilon_i^{(1)} + \eta_i^{(1)}| + \|\bm x_i^{(1)}\|_{\infty} \|\tilde \bbeta^{(0)} - \bbeta^{(0)}\|_1 \leq
  A - \alpha_n + \alpha_n = A.
  $$
Further, for any $i \in \mathcal{I}_{13}^{-}$, it holds that
 \begin{eqnarray}\nonumber
 |\hat \eta_i^{(1)}| &\leq& |\eta_i^{(1)}| + |\hat \eta_i^{(1)} - \eta_i^{(1)}|\\\nonumber
 & = & |\eta_i^{(1)}| + |(\bm x_i^{(1)})^{\top} (\tilde \bbeta^{(1)} - \bbeta^{(1)}) - (\bm x_i^{(1)})^{\top} (\tilde \bbeta^{(0)} - \bbeta^{(0)})|\\\nonumber
  &\leq& |\eta_i^{(1)}|+ \|\bm x_i^{(1)}\|_{\infty} \|\tilde \bbeta^{(1)} - \bbeta^{(1)}\|_1 + \|\bm x_i^{(1)}\|_{\infty} \|\tilde \bbeta^{(0)} - \bbeta^{(0)}\|_1\\\nonumber
  &\leq& M - \alpha_n + \alpha_n = M,
 \end{eqnarray}
 which leads to $\mathcal{I}_{13}^{-} \subseteq \hat {\mathcal{I}}_{13}$. Similarly, by the definition of $\mathcal{I}_{13}^{+}$, we have
 $\hat {\mathcal{I}}_{13} \subseteq \mathcal{I}_{13}^{+}$. Then we have that $\mathcal{I}_{13}^- \subseteq \hat{\mathcal{I}}_{13} \subseteq \mathcal{I}_{13}^{+}$.

 Next, we prove ($ii$). Due to the splitting technique, $n_{\mathcal{I}_{1j}^{-}}$ has the same order for $j = 1,2,3$. Therefore, by the definition of $n_{\mathcal{I}_1^{-}}$, that is, $n_{\mathcal{I}_1^{-}} = \Sigma_{j = 1}^3 n_{\mathcal{I}_{1j}^{-}}$, we have
   $$
   \mE(n_{\mathcal{I}_1^{-}}) \asymp \mE(n_{\mathcal{I}_{13}^-}).
   $$
Without loss of generality, assume that $n_1$ can be evenly split into three parts such that   $\tilde n_1 = n_1 / 3$. Recall the definition of $\mE(n_{\mathcal{I}_1})$ and $\mE(n_{\mathcal{I}_{13}^{-}})$, that is,
 $$
 \mE(n_{\mathcal{I}_1}) = n_1 \mathbb{P}(|\epsilon^{(1)} + \eta^{(1)}| \leq A,~|\eta^{(1)}| \leq M)
 $$
 and
  $$
 \mE(n_{\mathcal{I}_{13}^{-}}) = \tilde n_1 \mathbb{P}(|\epsilon^{(1)} + \eta^{(1)}| \leq A - \alpha_n,~|\eta^{(1)}| \leq M - \alpha_n),
 $$
 where $\tilde n_1 = n_1 /3$.
Since $\alpha_n = o_p(1)$, so we can derive that
 $$
 \mE(n_{\mathcal{I}_{13}^-}) \asymp \mE(n_{\mathcal{I}_1}).
 $$
 Similarly, we can derive that
  $$
  \mE(n_{\mathcal{I}_{13}^+}) \asymp \mE(n_{\mathcal{I}_1}).
  $$
  Finally, by combining the result that $\mathcal{I}_{13}^-\subseteq \hat {\mathcal{I}}_{13} \subseteq \mathcal{I}_{13}^{+}$ and (\ref{nE}), it holds with probability tending to one that
   $$
   n_{\mathcal{I}_1} \asymp \mE(n_{\mathcal{I}_1}) \asymp \mE(n_{\hat{\mathcal{I}}_{13}}) \asymp \mE(n_{\mathcal{I}_{13}^-}).
   $$
 Proposition \ref{prop4} is proved.

\vspace{5mm}
To simplify the proof of Lemma \ref{lemma1}, we give the following Lemma \ref{lemma0}.

\setcounter{lemma}{-1}
\begin{lemma}\label{lemma0}
Suppose that Conditions 1, 5 and 6 are satisfied, for $k = 1,\cdots,K$, it holds that
$$
\underset{t}{\rm sup} \,
|\hat f_{\epsilon^{(k)}}(t) - f_{\epsilon^{(k)}}(t)| = O_p (q_n),
$$
where $q_n = u_n + v_n$ with $u_n = \underset{1 \leq k \leq K}{\rm max} \psi_k \gamma_k / b_k^2$ and $v_n = \underset{1 \leq k \leq K}{\rm max}
\{b_k^2 + ({\rm log}\, n_k /(n_k b_k))^{1/2}\}$ being the error of kernel estimation.
\end{lemma}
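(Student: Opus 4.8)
The plan is to follow the classical route for uniform consistency of kernel density estimators, augmented by an extra term that accounts for the fact that $\hat f_{\epsilon^{(k)}}$ is built from the residuals $\hat\epsilon_i^{(k)} = y_i^{(k)} - (\bm x_i^{(k)})^\top \tilde\bbeta^{(k)}$ rather than the true errors $\epsilon_i^{(k)}$. It suffices to establish the bound for each fixed $k$, since $K$ is a fixed finite number. Introduce the infeasible estimator $\tilde f_{\epsilon^{(k)}}(t) = (|\mathcal{D}_{k2}| b_k)^{-1}\sum_{i\in\mathcal{D}_{k2}} K((t - \epsilon_i^{(k)})/b_k)$ based on the true errors (using symmetry of $K$ so that $K(|u|)=K(u)$), and decompose
$$\hat f_{\epsilon^{(k)}}(t) - f_{\epsilon^{(k)}}(t) = \underbrace{\big(\hat f_{\epsilon^{(k)}}(t) - \tilde f_{\epsilon^{(k)}}(t)\big)}_{\text{(I)}} + \underbrace{\big(\tilde f_{\epsilon^{(k)}}(t) - \mathbb{E}\tilde f_{\epsilon^{(k)}}(t)\big)}_{\text{(II)}} + \underbrace{\big(\mathbb{E}\tilde f_{\epsilon^{(k)}}(t) - f_{\epsilon^{(k)}}(t)\big)}_{\text{(III)}},$$
where (I) is the residual-plug-in error, (II) the stochastic fluctuation, and (III) the bias.

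For term (I) I would use a first-order mean value expansion together with the boundedness of $K'$ from Condition 6: uniformly in $t$, $|K((t-\hat\epsilon_i^{(k)})/b_k) - K((t-\epsilon_i^{(k)})/b_k)| \le \|K'\|_\infty\, b_k^{-1}|\hat\epsilon_i^{(k)} - \epsilon_i^{(k)}|$. Since $|\hat\epsilon_i^{(k)} - \epsilon_i^{(k)}| = |(\bm x_i^{(k)})^\top(\tilde\bbeta^{(k)} - \bbeta^{(k)})| \le \psi_k \|\tilde\bbeta^{(k)} - \bbeta^{(k)}\|_1$ and $\|\tilde\bbeta^{(k)}-\bbeta^{(k)}\|_1 = O_p(\gamma_k)$ by Condition 5, averaging yields $\sup_t|\hat f_{\epsilon^{(k)}}(t) - \tilde f_{\epsilon^{(k)}}(t)| = O_p(\psi_k\gamma_k/b_k^2) = O_p(u_n)$; here it is convenient that $\tilde\bbeta^{(k)}$ is fitted on $\mathcal{D}_{k1}$ and hence independent of $\mathcal{D}_{k2}$, so one may condition on it and treat $\|\tilde\bbeta^{(k)}-\bbeta^{(k)}\|_1$ as a fixed small number in this step. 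For term (III) I would change variables $u = (t-s)/b_k$ to write $\mathbb{E}\tilde f_{\epsilon^{(k)}}(t) - f_{\epsilon^{(k)}}(t) = \int K(u)\{f_{\epsilon^{(k)}}(t - b_k u) - f_{\epsilon^{(k)}}(t)\}\,du$; the first-order term vanishes by the symmetry of $K$, and the smoothness of $f_{\epsilon^{(k)}}$ from Condition 1 controls the remainder, so $\sup_t|\mathbb{E}\tilde f_{\epsilon^{(k)}}(t) - f_{\epsilon^{(k)}}(t)| = O(b_k^2)$, absorbed into $v_n$.

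Term (II) is the crux. I would invoke the classical uniform consistency result for kernel density estimates on the whole real line (the cited Silverman, 1978), which gives $\sup_t|\tilde f_{\epsilon^{(k)}}(t) - \mathbb{E}\tilde f_{\epsilon^{(k)}}(t)| = O_p(\sqrt{\log n_k/(n_k b_k)})$ under the regularity of $K$ in Condition 6, $b_k=o(1)$, and i.i.d.\ sampling of $\epsilon_i^{(k)}$. If a self-contained argument is preferred, one splits $\mathbb{R}$ into a compact window of width $O(\sqrt{\log n_k})$ (outside which both $\tilde f$ and $\mathbb{E}\tilde f$ are $O(n_k^{-1})$ since $\epsilon_i^{(k)}$ is sub-Gaussian by Condition 2) and its complement, covers the window with a grid of spacing $\asymp b_k^2 \sqrt{b_k/n_k}$, applies Bernstein's inequality and a union bound at the grid points (each summand bounded by $\|K\|_\infty/b_k$ with variance $O(1/b_k)$), and fills in between grid points using the Lipschitz property of $K$. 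This produces the $\sqrt{\log n_k/(n_k b_k)}$ part of $v_n$. Combining (I)--(III) and maximizing over the finite index set $k=1,\dots,K$ gives $\sup_t|\hat f_{\epsilon^{(k)}}(t) - f_{\epsilon^{(k)}}(t)| = O_p(u_n+v_n) = O_p(q_n)$; the constraints $b_k=o(1)$ and $b_k^2 \gg \psi_k\gamma_k$ in Condition 6 are what make $q_n=o(1)$, though they are not needed for the rate itself.

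The main obstacle is term (II): upgrading the pointwise fluctuation bound to one that is uniform in $t\in\mathbb{R}$. This is handled either by directly quoting Silverman (1978) or, in a self-contained proof, by the sub-Gaussian tail truncation plus grid-and-Bernstein argument above. Terms (I) and (III) are routine: (I) reduces to a deterministic Lipschitz bound once Condition 5 and the cross-fitting independence are invoked, and (III) is the textbook bias expansion for a symmetric kernel.
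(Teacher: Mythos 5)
Your proof is correct and follows essentially the same route as the paper's: both introduce the infeasible estimator $\tilde f_{\epsilon^{(k)}}$ built from the true errors, bound $\sup_t|\hat f_{\epsilon^{(k)}}(t)-\tilde f_{\epsilon^{(k)}}(t)|$ by a Lipschitz/Taylor argument on $K$ to get $O_p(\psi_k\gamma_k/b_k^2)$, and control $\sup_t|\tilde f_{\epsilon^{(k)}}(t)-f_{\epsilon^{(k)}}(t)|$ via the classical Silverman (1978) uniform rate $O_p\bigl(b_k^2+\sqrt{\log n_k/(n_k b_k)}\bigr)$, which the paper quotes as a single combined bound rather than splitting it into bias and fluctuation as you do. The only caveat, shared with the paper since it is implicit in the Silverman citation, is that the $O(b_k^2)$ bias term needs second-order smoothness of $f_{\epsilon^{(k)}}$, slightly more than the bounded first derivative stated in Condition 1.
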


\begin{proof}
For $k = 0,1,\cdots,K$, let
$$
\tilde f_{\epsilon^{(k)}}(t) = \frac{1}{n_k b_k} \sum\limits_{i=1}^{n_k} K\left( \frac{t - \epsilon_i^{(k)}}{b_k} \right)
$$
be the traditional kernel density estimator of $f_{\epsilon^{(k)}}(\cdot)$. From \cite{Silverman1978}, we know
\begin{equation}\label{B1}
\underset{t}{\rm sup} |\tilde f_{\epsilon^{(k)}}(t) - f_{\epsilon^{(k)}}(t)| = O_p \left\{ b_k^2 + \left( \frac{{\rm log}\, n_k}{n_k b_k}\right)^{1/2} \right\} = O_p (v_n),
\end{equation}
where $v_n = \underset{1 \leq k \leq K}{\rm max}\{b_k^2 + ({\rm log}\, n_k /(n_k b_k))^{1/2}\}$.
Further, by Taylor's expansion, for some $(\epsilon_i^{(k)})^*$ between $\epsilon_i^{(k)}$ and $\hat \epsilon_i^{(k)}$, it holds that
\begin{eqnarray}\nonumber
\underset{t}{\rm sup} |\hat f_{\epsilon^{(k)}}(t) - \tilde f_{\epsilon^{(k)}}(t)|
& = & \underset{t}{\rm sup} \frac{1}{n_k b_k} \left| \sum\limits_{i=1}^{n_k} K \left(\frac{t - \hat \epsilon_i^{(k)}}{b_k}\right) - \sum\limits_{i=1}^{n_k} K \left(\frac{t - \epsilon_i^{(k)}}{b_k}\right)
\right|\\\nonumber
&\leq&  \underset{t}{\rm sup} \frac{1}{n_k b_k^2} \sum\limits_{i=1}^{n_k} K^{\prime} \left(\frac{t - (\epsilon_i^{(k)})^*}{b_k}\right) |\hat \epsilon_i^{(k)} -  \epsilon_i^{(k)}|\\\nonumber
&=& O_p\left( \underset{i \in \mathcal{D}_k}{\rm max}~|\hat \epsilon_i^{(k)} -  \epsilon_i^{(k)}|/b_k^2 \right),
\end{eqnarray}
where $K^{\prime}(\cdot)$ denotes the first derivative of the kernel $K(\cdot)$ and is bounded according to Condition 6. By Condition 5, it holds in probability that
$$
\underset{i \in \mathcal{D}_{kj}}{\rm max}~|\hat \epsilon_i^{(k)} -  \epsilon_i^{(k)}| \leq
\psi_k \gamma_k,
$$
where $\psi_k$ is defined in \eqref{alpha_n_1}.
Further, by Condition 6, where $b_k$ satisfies $b_k = o(1)$ and $b_k^2 \gg \sqrt{{\rm log}\,(n_k p)} \gamma_k$, we have
\begin{equation}\label{B2}
\underset{t}{\rm sup} |\hat f_{\epsilon^{(k)}}(t) - \tilde f_{\epsilon^{(k)}}(t)| = O_p \left(
\psi_k \gamma_k / b_k^2
\right) = O_p (u_n)
\end{equation}
with $u_n = \underset{1 \leq k \leq K}{\rm max} \psi_k \gamma_k / b_k^2$.
Combining (\ref{B1}) and (\ref{B2}) can be found
$$
\underset{t}{\rm sup} \,
|\hat f_{\epsilon^{(k)}}(t) - f_{\epsilon^{(k)}}(t)| \leq
\underset{t}{\rm sup} |\hat f_{\epsilon^{(k)}}(t) - \tilde f_{\epsilon^{(k)}}(t)| +
\underset{t}{\rm sup} |\tilde f_{\epsilon^{(k)}}(t) - f_{\epsilon^{(k)}}(t)|
 =  O_p (q_n),
$$
where $q_n = u_n + v_n$.  Lemma \ref{lemma0} is proved.
\end{proof}

\subsection{Proof of Lemma 1}
It suffices to prove the lemma \ref{lemma1} when $K=1$. For $i = 1,\cdots,n_1$, let
$\tilde \omega_i^{(1)} = f_{\epsilon}(\epsilon_i^{(1)} + \eta_i^{(1)} - r_i^{(1)0})/f_{\epsilon}(\epsilon_i^{(1)})$.
It holds that

\begin{eqnarray}\nonumber
|\hat \omega_i^{(1)} / \omega_i^{(1)} - 1|
& = &
\left|
\frac{\hat \omega_i^{(1)}}{\tilde \omega_i^{(1)}} \cdot
\frac{\tilde \omega_i^{(1)}}{\omega_i^{(1)}} - 1
\right|\\\nonumber
&\leq&
\left|
\frac{\hat \omega_i^{(1)}}{\tilde \omega_i^{(1)}}   - 1
\right| \cdot
\left|
\frac{\tilde \omega_i^{(1)}}{\omega_i^{(1)}} - 1
\right| + \underbrace{\left|
\frac{\hat \omega_i^{(1)}}{\tilde \omega_i^{(1)}}   - 1
\right|}_{\Lambda_{i,1}} + \underbrace{\left|
\frac{\tilde \omega_i^{(1)}}{\omega_i^{(1)}} - 1
\right|}_{\Lambda_{i,2}} \\\label{B3}
&:=&
\Lambda_{i,1} \Lambda_{i,2} + \Lambda_{i,1} + \Lambda_{i,2}.
\end{eqnarray}

We  upper bound (\ref{B3}) in two steps.

{\it \textbf{Step~1}.}~ Denote by $r_i^{(1)0} = (\bm x_i^{(1)})^{\top} (\tilde \bbeta^{(0)} - \bbeta^{(0)})$ for $i = 1,\ldots,n_1$. For $\Lambda_{i,1}$, we have
\begin{eqnarray}\nonumber
\Lambda_{i,1}=|\hat \omega_i^{(1)} / \tilde \omega_i^{(1)} - 1|
& = &
\left|
\frac{\hat f_{\epsilon}(\epsilon_i^{(1)} + \eta_i^{(1)} - r_i^{(1)0} )}{f_{\epsilon}(\epsilon_i^{(1)} + \eta_i^{(1)} - r_i^{(1)0})} \cdot
\frac{f_{\epsilon^{(1)}}(\epsilon_i^{(1)})}{\hat  f_{\epsilon^{(1)}}(\hat \epsilon_i^{(1)})} - 1
\right|\\\nonumber
&\leq&
\left|
\frac{\hat f_{\epsilon}(\epsilon_i^{(1)} + \eta_i^{(1)} - r_i^{(1)0} )}{f_{\epsilon}(\epsilon_i^{(1)} + \eta_i^{(1)} - r_i^{(1)0})}  - 1
\right| \cdot
\left|
\frac{f_{\epsilon^{(1)}}(\epsilon_i^{(1)})}{\hat  f_{\epsilon^{(1)}}(\hat \epsilon_i^{(1)})} - 1
\right| \\\nonumber
&+& \underbrace{\left|
\frac{\hat f_{\epsilon}(\epsilon_i^{(1)} + \eta_i^{(1)} - r_i^{(1)0} )}{f_{\epsilon}(\epsilon_i^{(1)} + \eta_i^{(1)} - r_i^{(1)0})}  - 1
\right|}_{\Lambda_{i,11}} + \underbrace{\left|
\frac{f_{\epsilon^{(1)}}(\epsilon_i^{(1)})}{\hat  f_{\epsilon^{(1)}}(\hat \epsilon_i^{(1)})} - 1
\right|}_{\Lambda_{i,12}} \\\label{B33}
&:=&
\Lambda_{i,11} \Lambda_{i,12} + \Lambda_{i,11} + \Lambda_{i,12}.
\end{eqnarray}
For $\Lambda_{i,11}$, by Lemma \ref{lemma0} and the form of $f_{\epsilon}$, it holds that
$$
\underset{t}{\rm sup} \,
|\hat f_{\epsilon}(t) - f_{\epsilon}(t)| = O_p(q_n),
$$
where $q_n$ is defined in Lemma \ref{lemma0}. Then, combining  by Proposition \ref{prop2} and Condition 5, for any $i \in \hat {\mathcal{I}}_{1j}$, it holds that
 $$
 {\rm min}\{f_{\epsilon}(\epsilon_i^{(1)} + \eta_i^{(1)} - r_i^{(1)0}), f_{\epsilon}(\epsilon_i^{(1)} + \eta_i^{(1)}), \hat f_{\epsilon^{(1)}}(\epsilon_i^{(1)}), \hat f_{\epsilon^{(1)}}(\hat \epsilon_i^{(1)}) \}  \geq \tau_l > 0
 $$
for some constants $\tau_l$ with probability tending to one.
 Then
\begin{eqnarray}\nonumber
\Lambda_{i,11}=
\left|
\frac{\hat f_{\epsilon}(\epsilon_i^{(1)} + \eta_i^{(1)} - r_i^{(1)0} )}{f_{\epsilon}(\epsilon_i^{(1)} + \eta_i^{(1)} - r_i^{(1)0})}  - 1
\right| & = &
\left|
\frac{
\hat f_{\epsilon}(\epsilon_i^{(1)} + \eta_i^{(1)} - r_i^{(1)0} ) - f_{\epsilon}(\epsilon_i^{(1)} + \eta_i^{(1)} - r_i^{(1)0})}{f_{\epsilon}(\epsilon_i^{(1)} + \eta_i^{(1)} - r_i^{(1)0})}
\right|\\\nonumber
& \leq &
\frac{
|\hat f_{\epsilon}(\epsilon_i^{(1)} + \eta_i^{(1)} - r_i^{(1)0} ) - f_{\epsilon}(\epsilon_i^{(1)} + \eta_i^{(1)} - r_i^{(1)0})|}{\tau_l} \\\label{B4}
&=& O_p (q_n).
\end{eqnarray}

 For $\Lambda_{i,12}$, it holds that
\begin{eqnarray}\nonumber
\Lambda_{i,12} = \left|
\frac{f_{\epsilon^{(1)}}(\epsilon_i^{(1)})}{\hat  f_{\epsilon^{(1)}}(\hat \epsilon_i^{(1)})} - 1
\right|
&=&
\left|
\frac{f_{\epsilon^{(1)}}(\epsilon_i^{(1)})}{\hat f_{\epsilon^{(1)}}(\epsilon_i^{(1)})} \cdot
\frac{\hat f_{\epsilon^{(1)}}(\epsilon_i^{(1)})}{\hat  f_{\epsilon^{(1)}}(\hat \epsilon_i^{(1)})} - 1
\right|\\\nonumber
&\leq&
\left|
\frac{f_{\epsilon^{(1)}}(\epsilon_i^{(1)})}{\hat f_{\epsilon^{(1)}}(\epsilon_i^{(1)})}  - 1
\right| \cdot
\left|
\frac{\hat f_{\epsilon^{(1)}}(\epsilon_i^{(1)})}{\hat  f_{\epsilon^{(1)}}(\hat \epsilon_i^{(1)})} - 1
\right| \\\nonumber
 &+& \underbrace{\left|
\frac{f_{\epsilon^{(1)}}(\epsilon_i^{(1)})}{\hat f_{\epsilon^{(1)}}(\epsilon_i^{(1)})} - 1
\right|}_{\Lambda_{i,121}} + \underbrace{\left|
\frac{\hat f_{\epsilon^{(1)}}(\epsilon_i^{(1)})}{\hat  f_{\epsilon^{(1)}}(\hat \epsilon_i^{(1)})} - 1
\right|}_{\Lambda_{i,122}} \\\label{B5}
&:=&
\Lambda_{i,121} \Lambda_{i,122} + \Lambda_{i,121} + \Lambda_{i,122}.
\end{eqnarray}
For $\Lambda_{i,121}$, by  Lemma \ref{lemma0}, we have
\begin{equation}\label{B66}
\Lambda_{i,121} = \left|
\frac{\hat f_{\epsilon^{(1)}}(\epsilon_i^{(1)}) - f_{\epsilon^{(1)}}(\epsilon_i^{(1)})}{\hat f_{\epsilon^{(1)}}(\epsilon_i^{(1)})}
\right|  \leq
\frac{|\hat f_{\epsilon^{(1)}}(\epsilon_i^{(1)}) - f_{\epsilon^{(1)}}(\epsilon_i^{(1)})|}{\tau_l} = O_p(q_n).
\end{equation}
For $\Lambda_{i,122}$, we have
\begin{eqnarray}\nonumber
\Lambda_{i,122} &=& \left|
\frac{\hat  f_{\epsilon^{(1)}}(\hat \epsilon_i^{(1)}) - \hat f_{\epsilon^{(1)}}(\epsilon_i^{(1)})}{\hat  f_{\epsilon^{(1)}}(\hat \epsilon_i^{(1)})}
\right|\\\nonumber
& \leq &
\frac{3}{\tau_l n_1 b_1} \left|
\sum\limits_{j \in \mathcal{D}_{12}} K\left(\frac{\hat \epsilon_i^{(1)} - \hat\epsilon^{(1)}_j}{b_1} \right) -
\sum\limits_{j \in \mathcal{D}_{12}} K\left(\frac{\epsilon_i^{(1)} - \hat\epsilon^{(1)}_j}{b_1} \right)
\right|\\\nonumber
& \leq &
\frac{3}{\tau_l n_1 b_1^2} \sum\limits_{j \in \mathcal{D}_{12}} \left| K^{\prime} \left(\frac{ (\epsilon_i^{(1)})^* - \hat\epsilon^{(1)}_j }{b_1}\right) \right| |\hat \epsilon_i^{(1)} - \epsilon_i^{(1)}| \\\label{B77}
 &=& O_p\left( \underset{i \in \mathcal{D}_{13}}{\rm max}~|\hat \epsilon_i^{(1)} -  \epsilon_i^{(1)}|/b_1^2 \right) = O_p(u_n),
\end{eqnarray}
where $(\epsilon_i^{(1)})^*$ is between $\epsilon_i^{(1)}$ and $\hat \epsilon_i^{(1)}$ and $u_n$ is defined in Lemma \ref{lemma0}. Inserting (\ref{B66}) and (\ref{B77}) into (\ref{B5}), and combining with (\ref{B4}), for each $j = 1,2,3$, we have
\begin{equation}\label{Lambda_11}
\Lambda_{i,1}=\max_{i \in \hat {\mathcal{I}}_{1j}} |\hat \omega_i^{(1)}/ \tilde \omega_i^{(1)} - 1| = O_p(q_n).
\end{equation}

{\it \textbf{Step~2}.}~For $\Lambda_{i,2}$, by Conditions 1 and 5, we have
\begin{eqnarray}\nonumber
\Lambda_{i,2} = |\tilde \omega_i^{(1)}/ {\omega_i^{(1)}}  - 1|
& = &
\left|
\frac{f_{\epsilon}(\epsilon_i^{(1)} + \eta_i^{(1)} - r_i^{(1)0} ) - f_{\epsilon}(\epsilon_i^{(1)} + \eta_i^{(1)})}{f_{\epsilon}(\epsilon_i^{(1)} + \eta_i^{(1)})}
\right|\\\nonumber
&\leq&
\left|
\frac{f_{\epsilon}(\epsilon_i^{(1)} + \eta_i^{(1)} - r_i^{(1)0} ) - f_{\epsilon}(\epsilon_i^{(1)} + \eta_i^{(1)})}{\tau_l}
\right|\\\nonumber
& \leq & |f^{'}(\xi_i)||r_i^{(1)0}| /\tau_l\\\label{Lambda_22}
& \leq &  |f^{'}(\xi_i)| \|\bm x_i^{(1)}\|_{\infty} \|\tilde \bbeta^{(0)} - \bbeta^{(0)}\|_1 = O_p(\alpha_n),
\end{eqnarray}
where $\alpha_n$ is defined in \eqref{alpha_n_1} and $\xi_i$ is between the $\epsilon_i^{(1)} + \eta_i^{(1)} - r_i^{(1)0}$ and $\epsilon_i^{(1)} + \eta_i^{(1)}$. Inserting the result of (\ref{Lambda_11}) and (\ref{Lambda_22}) into (\ref{B3}), we have
$$
\underset{i \in \hat {\mathcal{I}}_{1j}}{\rm max}|\hat \omega_i^{(1)}/ \omega_i^{(1)} - 1| = O_p(q_n + \alpha_n).
$$
Lemma \ref{lemma1} is proved.

\vspace{5mm}
We introduce some notations for the following proofs. Let $\hat {\mathcal{I}} = \cup_{k=1}^K \hat {\mathcal{I}}_k$.  Recall  that $f_{\epsilon}(t)=(f_{\epsilon^{(k)}}(t)+f_{\epsilon^{(k)}}(-t))/2$ and that  $\epsilon$ is  from $f_{\epsilon}$ satisfying $\mathbb{E}(\epsilon)=0$. Then  we see that $\sigma_\epsilon^2 := \mE_{f_{\epsilon}} (X^2)$, standing for  the variance of $\epsilon$,   is bounded. Define $\mu_{\rm max} = 3 \Sigma_{k=1}^K \Sigma_{i \in \mathcal{D}_{k3}} \|\bm x_i^{(k)}\|_{\infty} / \Sigma_{k = 1}^K n_k$, which is the empirical version of $\Sigma_{k = 1}^K \pi_k \mE(\|\bm x^{(k)}\|_\infty)$ with $\pi_k = n_k / \Sigma_{k = 1}^K n_k$.

\subsection{Proof of Theorem 2}
It is sufficient to show the conclusion for $K=1$ where $\hat{\mathcal{I}}=\hat{\mathcal{I}}_1$.
Following from the definition that $\hat \bbeta^{(0)} = (\hat \bbeta_1^{(0)} + \hat \bbeta_2^{(0)} + \hat \bbeta_3^{(0)})/3$, we  prove only the convergence rate of $\hat \bbeta_1^{(0)}$ and the others are the  same.

With the notations $\hat {\mathcal{I}}_{03} = \mathcal{D}_{03}$, $\hat \omega_i^{(0)} = 1$ for all $i \in \hat {\mathcal{I}}_{03}$,  and $\tilde n_k = n_k / 3$ for $k = 0,1$. The estimator $\hat \bbeta_1^{(0)} $ is obtained by solving the following optimization problem:
$$
\hat \bbeta_1 = \underset{\bm\beta \in \mathbb{R}^p}{\rm argmin}~
 \frac{1}{\tilde n_0 + \tilde n_1}
\left\{
\sum\limits_{k = 0,1}
  \sum\limits_{i \in \mathcal{D}_{k3}}
   \left(y_i^{(k)} - (\bm x_i^{(k)})^\top \bbeta\right)^2 \hat \omega_i^{(k)} {\rm I}(i \in \hat {\mathcal{I}}_{k3})
   \right\}\notag + \lambda \|\hat \bbeta\|_1,
$$
where
\begin{align}
    \hat {\mathcal{I}}_{13} &=\{i \in \mathcal{D}_{13}:|\epsilon_i^{(1)} + \eta_i^{(1)} - r_i^{(1)0}|\le A,~ |\hat \eta_i^{(1)}| \leq M \}\notag
\end{align}
  with $r_i^{(1)0} = (\bm x_i^{(1)})^{\top}(\tilde \bbeta^{(0)} - \bbeta^{(0)})$.

  Let $\check {\mathcal{D}} = \mathcal{D}_{11} \cup \mathcal{D}_{12} \cup \{\bm x_i^{(1)}, i\in \mathcal{D}_{13}\}$. We establish the properties  by  conditioning on data $\check {\mathcal{D}}$.

{\it \textbf{Step~1}.}~We show some quantities involved are independent and bounded variables.
Recall that conditional on $\check {\mathcal{D}}$,   $\hat\eta_i^{(1)}$'s,  $\eta_i^{(1)}$'s and $r_i^{(1)0} $'s are  constants, implying
    that  ${\rm I}\{i \in \hat {\mathcal{I}}_{13}\}$ is a function of variable $\epsilon_i^{(1)}$;   consequently, ${\rm I}\{i \in \hat {\mathcal{I}}_{13}\}$ for different $(\bm x_i^{(1)},y_i^{(1)})$  are    independent  variables.
Moreover,   from the definition of $\hat\omega_i^{(1)}$, it follows that  $\hat\omega_i^{(1)}$ for different $i\in \mathcal{D}_{13}$ are independent  variables after conditioning on $\check {\mathcal{D}}$.

 Next, we show that ${\rm I}\{i \in \hat {\mathcal{I}}_{13}\} \hat \omega_i^{(1)}$ are bounded for all $i \in \mathcal{D}_{13}$. From  the proof of Lemma \ref{lemma1}, we see that $\max_{i\in \hat {\mathcal{I}}_{13}} |\hat\omega_i^{(1)}/\omega_i^{(1)}-1 |<\theta_0$ for some $\theta_0$ being  sufficiently small except on a small subset $\check {\mathcal{D}}^{1}$ of $\check {\mathcal{D}}$, satisfying $\mathbb{P}(\check {\mathcal{D}}^{1})\to 0$, as $\min\{n_0,n_1\}\to \infty$.
  By Condition 5, $\tilde {\bm\beta}_k, k=0,1$, are functions of $\check {\mathcal{D}}$ such that $\alpha_n = o_p(1)$, implying that  $\alpha_n$ is a function of $\check {\mathcal{D}}$. Then for this $\theta_0$, it holds that  $\alpha_n \le \theta_0$ except on a subset  $\check {\mathcal{D}}^{2}$ of $\check {\mathcal{D}}$ satisfying  $\mathbb{P}(\check {\mathcal{D}}^{2})\to 0$.

 Note that the following  three terms
   $$
   \max\limits_{i\in \mathcal{D}_{13}}|r_i^{(1)0}|, ~~\max_{i\in \mathcal{D}_{13}}|\hat\eta_i^{(1)}-\eta_i^{(1)}|,~~~ \max\limits_{i\in \mathcal{D}_{13}}|\hat\epsilon_i^{(1)}-\epsilon_i^{(1)}|,
   $$
   are all bounded by $\alpha_n$. Therefore,  these terms are  smaller than $\theta_0$ on the set $\check {\mathcal{D}}^c = \check {\mathcal{D}} \setminus (\check {\mathcal{D}}^1 \cup \check {\mathcal{D}}^2)$.
   Hence on the  set $\check {\mathcal{D}}^c$, we see that both $|\epsilon_i^{(1)}+\eta_i^{(1)}|$'s and $|\epsilon_i^{(1)}|$'s are bounded for $i\in \hat {\mathcal{I}}_{13}$, implying that ${\rm I}\{i \in \hat {\mathcal{I}}_{13}\} \omega_i^{(1)}$ and consequently   ${\rm I}\{i \in \hat {\mathcal{I}}_{13}\} \hat \omega_i^{(1)}$ are bounded.  In summary, on the set $\check {\mathcal{D}}^c$,  ${\rm I}\{i \in \hat {\mathcal{I}}_{13}\} \hat \omega_i^{(1)}$ for all $i\in \mathcal{D}_{13}$  are bounded and independent variables. Further, from the argument above, it is easy to see that ${\rm I}(i\in \hat {\mathcal{I}}_{13})\eta_i^{(1)}$ are bounded for $i\in \mathcal{D}_{13}$.

 {\it \textbf{Step~2.}}~
 Let
$$
\tilde {\bm Y} =
\begin{pmatrix}
 \bm Y^{(0)} \\
\bI_{\hat {\mathcal{I}}_{13}} \hat \bW^{1/2} \bm Y^{(1)}
\end{pmatrix} \quad \text{and}
\quad
\tilde \bX =
\begin{pmatrix}
 \bX^{(0)} \\
\bI_{\hat {\mathcal{I}}_{13}} \hat \bW^{1/2} \bX^{(1)}
\end{pmatrix},
$$
where $\bm Y^{(1)}=(y_i^{(1)}, i\in \mathcal{D}_{13})^{\top} \in \mathbb{R}^{|\mathcal{D}_{13}|}$ and $\bX^{(k)}=((\bm x_i^{(1)})^{\top}, i\in \mathcal{D}_{13})^{\T}\in\mathbb{R}^{|\mathcal{D}_{13}| \times p}$ for $k=0,1$, $\bI_{\hat {\mathcal{I}}_{13}} = \mathrm{diag}\{{\rm I}(i\in \hat{\mathcal{I}}_{13}), i\in \mathcal{D}_{13}\}\in \mathbb{R}^{|\mathcal{D}_{13}| \times |\mathcal{D}_{13}|}$ the diagonal matrix with the $i$th element being ${\rm I}(i\in \hat{\mathcal{I}}_{13})$ for $i\in \mathcal{D}_{13}$, $\hat \bW = {\rm diag}\{ \hat\omega_i^{(1)},~ i \in \mathcal{D}_{13}\} \in \mathbb{R}^{|\mathcal{D}_{13}| \times |\mathcal{D}_{13}|}$ being the diagonal matrix defined in the same manner.

From the fact that $\hat \bbeta_1^{(0)}$ is the minimizer, it holds that
\begin{equation}\label{B6}
\frac{1}{2(\tilde n_0 + \tilde n_1)} \|\tilde {\bm Y} - \tilde \bX \hat \bbeta_1^{(0)}\|^2 + \lambda \|\hat \bbeta_1^{(0)}\|_1
\leq
\frac{1}{2(\tilde n_0 + \tilde n_1)} \|\tilde {\bm Y} - \tilde \bX \bbeta^{(0)}\|^2  + \lambda \|\bbeta^{(0)}\|_1.
\end{equation}
 Denote by $\hat \bdelta = \hat \bbeta_1^{(0)} - \bbeta^{(0)}$. Simplifying (\ref{B6}) leads to the following basic inequality:
 \begin{eqnarray}\nonumber
&& \frac{\hat \bdelta^{\T} (\bX^{(0)})^{\T} \bX^{(0)} \hat \bdelta}{2(\tilde n_0 + \tilde n_1)}  + \frac{\hat \bdelta^{\T} (\bX^{(1)})^{\T} \bI_{\hat {\mathcal{I}}_{13}} \bW \bX^{(1)} \hat \bdelta}{2(\tilde n_0 + \tilde n_1)} + \lambda \|\hat \bbeta_1^{(0)}\|_1\\\nonumber
&\leq&
\frac{1}{\tilde n_0 + \tilde n_1}
\underbrace{\left[(\bm \epsilon^{(0)})^{\T} \bX^{(0)}   +  (\bm \epsilon^{(1)} + \bm \eta^{(1)})^{\T} \bI_{\hat {\mathcal{I}}_{13}} \hat\bW  \bX^{(1)}\right] }_{\Lambda_1} (\hat \bbeta_1^{(0)} - \bbeta^{(0)}) + \lambda \|\bbeta^{(0)}\|_1
\\\label{f14}
&-&
\underbrace{
\frac{\hat \bdelta^{\T} (\bX^{(1)})^{\T} \bI_{\hat {\mathcal{I}}_{13}} \bW(\bW^{-1} \hat \bW - \bI_{n_1}) \bX^{(1)} \hat \bdelta}{2(\tilde n_0 + \tilde n_1)}
}_{\Lambda_2}.
\end{eqnarray}

{\it \textbf{Step~2.1}.}~
We analyze the term $\|\Lambda_1\|_\infty$ on the set $\check {\mathcal{D}}^c$. Denote
$\check{\bm\epsilon}^{(1)}=(\check{\epsilon}_i^{(1)}, i\in \mathcal{D}_{13})^{\top}$ with $ \check{\epsilon}_i^{(1)}=(\epsilon_i^{(1)}+\eta_i^{(1)}) {\rm I}(i\in \hat{\mathcal{I}}_{13})\hat\omega_i^{(1)}$.
From Step 1, we see that $\check{\epsilon}_i^{(1)}$'s are independent and bounded variables on  $\check {\mathcal{D}}^c$, and consequently are sub-Gaussian variables. It follows that
$$(\bm \epsilon^{(1)} + \bm \eta^{(1)})^{\top} \bI_{\hat {\mathcal{I}}_{13}} \hat\bW  \bX^{(1)}=(\check{\bm\epsilon}^{(1)})^{\top}\bX^{(1)}= [\check{\bm\epsilon}^{(1)}-\mathbb{E}(\check{\bm\epsilon}^{(1)})]^{\top}\bX^{(1)} + [\mathbb{E}(\check{\bm\epsilon}^{(1)})]^{\top}\bX^{(1)}.$$
Then
 \begin{equation}\label{the2-lambda12}
 \Lambda_1 = \underbrace{
 (\bm \epsilon^{(0)})^{\top} \bX^{(0)} + [\check{\bm\epsilon}^{(1)}-\mathbb{E}(\check{\bm\epsilon}^{(1)})]^{\top}\bX^{(1)}
 }_{\Lambda_{11}} + \underbrace{
 [\mathbb{E}(\check{\bm\epsilon}^{(1)})]^{\top}\bX^{(1)}
 }_{\Lambda_{12}} :=\Lambda_{11} +\Lambda_{12}.
 \end{equation}
Define the two events:
 $$
    \mathcal{A}_1=\left\{\frac{1}{\tilde n_0+\tilde n_1}\|\Lambda_{11}\|_\infty \leq
    \lambda_n^{(1)}\right\}, ~~~~  \mathcal{A}_2=\left\{ \frac{1}{\tilde n_0+\tilde n_1}\|\Lambda_{12}\|_\infty\le  \lambda_n^{(2)}\right\},
$$
 where
 $$
 \lambda_n^{(1)}= C_1 \tilde \rho_{ \hat{\mathcal{I}}_{13}} \sqrt{\log p / (\tilde n_0 + n_{\hat {\mathcal{I}}_{13}})},~~~~
 \lambda_{n}^{(2)}= C_2
 [\mu_{\max} (q_n + \alpha_n) + C_{f_{\epsilon},A}  \gamma_0],
 $$
with $\tilde \rho_{\hat{\mathcal{I}}_{13}} = (\tilde n_0 + n_{\hat {\mathcal{I}}_{13}})/(\tilde n_0 + \tilde n_1)$ and $C_1, C_2$ being some positive constants. On the event $\mathcal{A}_1 \cap \mathcal{A}_2$, it holds that
$$
\frac{1}{\tilde n_0 + \tilde n_1} \|\Lambda_1\|_{\infty} \leq \frac{1}{n_0 + \tilde n_1} \|\Lambda_{11}\|_{\infty} + \frac{1}{\tilde n_0 + \tilde n_1} \|\Lambda_{12}\|_{\infty} \leq
\lambda_n^{(1)} + \lambda_n^{(2)}.
$$

{\it \textbf{Step~2.2.}}~
We prove the final conclusion on the event  $\mathcal{A}_1 \cap \mathcal{A}_2 \cap \check {\mathcal{D}}^c \cap \mathcal{T}$, where
$\mathcal{T} = \{\hat\omega_i^{(1)} \geq c_1, i \in \hat{\mathcal{I}}_{13}\}$ for some positive constants $c_1$.
On the set $\check {\mathcal{D}}^c$, we found that $\underset{i\in \hat {\mathcal{I}}_{13}}{\rm max} |\hat\omega_i^{(1)}/\omega_i^{(1)}-1 |\leq \theta_0$ for some $\theta_0$ being sufficiently small in Step 1. Then it holds that
$$
|\Lambda_2|  \le  \theta_0\frac{\hat \bdelta^{\T} (\bX^{(1)})^{\T} \bI_{\hat {\mathcal{I}}_{13}} \bW \bX^{(1)} \hat \bdelta}{2(\tilde n_0 + \tilde n_1)}.
$$
By Step 2.1, on the event $\mathcal{A}_1\cap\mathcal{A}_2$, we have $\|\Lambda_1\|_{\infty} \leq (\tilde n_0 + \tilde n_1) (\lambda_n^{(1)} + \lambda_n^{(2)})$. Let $\lambda = 2(\lambda_n^{(1)} + \lambda_n^{(2)})$. Then from \eqref{f14} we have
\begin{eqnarray}\nonumber
&& \frac{1}{2(\tilde n_0 + \tilde n_1)} \hat \bdelta^{\T} (\bX^{(0)})^{\T} \bX^{(0)} \hat \bdelta  + \frac{1}{2(\tilde n_0 + \tilde n_1)}(1-\theta_0) \hat \bdelta^{\T} (\bX^{(1)})^{\T} \bI_{\hat {\mathcal{I}}_{13}} \bW \bX^{(1)} \hat \bdelta + \lambda \|\hat \bbeta_1^{(0)}\|_1\\\nonumber
&\leq&
\frac{1}{\tilde n_0 + \tilde n_1}\|\Lambda_1\|_{\infty} \|\hat \bdelta\|_1   + \lambda \|\bbeta^{(0)}\|_1\\\nonumber
&\le& \frac{1}{2} \lambda \|\hat \bdelta\|_1 + \lambda \|\bbeta^{(0)}\|_1.
\end{eqnarray}
Similar to (\ref{f7}), on the event $\mathcal{A}_1\cap \mathcal{A}_2\cap \check {\mathcal{D}}^c$, the following inequality holds
\begin{eqnarray}\nonumber
&& \frac{1}{2(\tilde n_0 + \tilde n_1)} \hat \bdelta^{\top} [(\bX^{(0)})^{\T} \bX^{(0)} + (1-\theta_0)(\bX^{(1)})^{\T} \bI_{\hat {\mathcal{I}}_{13}} \bW \bX^{(1)}] \hat \bdelta + \frac{1}{2} \lambda \|(\hat \bbeta_1^{(0)} - \bbeta^{(0)})_{\mathcal{H}_0^c}\|_1\\\nonumber
&\leq&
\frac{3}{2} \lambda \| (\hat \bbeta_1^{(0)} - \bbeta^{(0)})_{\mathcal{H}_0}\|_1,
\end{eqnarray}
which implies that $\hat \bdelta = \hat \bbeta_1^{(0)} - \bbeta^{(0)} \in \mathcal{E}(\mathcal{H}_0,3)$. By adding $\lambda \|(\hat \bbeta_1^{(0)} - \bbeta^{(0)})_{\mathcal{H}_0}\|_1/ 2$ to both sides of this inequality, we have
\begin{eqnarray}\nonumber
&& \frac{1}{\tilde n_0 + \tilde n_1} \hat\bdelta^{\T} [(\bX^{(0)})^{\T} \bX^{(0)} +(1-\theta_0)(\bX^{(1)})^{\T} \bI_{\hat {\mathcal{I}}_{13}} \bW \bX^{(1)}] \hat \bdelta + \lambda \|\hat \bbeta_1^{(0)} - \bbeta^{(0)}\|_1\\\label{B13}
&\leq&
4 \lambda \| (\hat \bbeta_1^{(0)} - \bbeta^{(0)})_{\mathcal{H}_0}\|_1.
\end{eqnarray}
 By Condition 4, on the event $\mathcal{A}_1\cap \mathcal{A}_2\cap \check {\mathcal{D}}^c \cap \mathcal{T}$, the following holds
\begin{eqnarray}\nonumber
&&\frac{1}{\tilde n_0 + \tilde n_1} \hat \bdelta^{\T} [(\bX^{(0)})^{\T} \bX^{(0)} + (1 - \theta_0)(\bX^{(1)})^{\T} \bI_{\hat {\mathcal{I}}_{13}} \hat \bW \bX^{(1)}] \tilde \bdelta \\\nonumber
&\geq&
\frac{1}{\tilde n_0 + \tilde n_1} [ \hat \bdelta^{\T} (\bX^{(0)})^{\T} \bX^{(0)}\hat \bdelta  + c_1 (1 - \theta_0) \cdot \hat \bdelta^{\top} (\bX^{(1)})^{\T} \bI_{\hat {\mathcal{I}}_{13}} \bX^{(1)} \hat \bdelta] \\\nonumber
&\geq&
\frac{1}{\tilde n_0 + \tilde n_1} [ \hat \bdelta^{\T} (\bX^{(0)})^{\T} \bX^{(0)}\hat \bdelta  + c_1 (1 - \theta_0) \cdot \hat \bdelta^{\top} (\bX^{(1)})^{\T} \bI_{\mathcal{I}_{13}^{-}} \bX^{(1)} \hat \bdelta] \\\nonumber
&\geq&
\frac{1}{\tilde n_0 + \tilde n_1} \cdot [n_0  \phi_1^2 + n_{\mathcal{I}_{13}^{-}} c_1 (1 - \theta_0) \phi_2^2]\|\hat \bdelta\|^2 \\\label{B133}
&\geq&
\frac{\tilde n_0 + n_{\mathcal{I}_{13}^{-}}}{n_0 + \tilde n_1} \phi^2 \|\hat \bdelta\|^2,
\end{eqnarray}
where $\phi^2 = {\rm min}\{\phi_1^2, c_1 (1 - \theta_0) \phi_2^2\}$. For the term $4\lambda \| (\hat \bbeta_1^{(0)} - \bbeta^{(0)})_{\mathcal{H}_0}\|_2$ in (\ref{B13}), it holds that
\begin{eqnarray}\nonumber
&& 4\lambda  \| (\hat \bbeta_1^{(0)} - \bbeta^{(0)})_{\mathcal{H}_0}\|_1\\\nonumber
&\leq&
4\lambda \sqrt{s_0} \| (\hat \bbeta_1^{(0)} - \bbeta^{(0)})_{\mathcal{H}_0}\|_2\\\nonumber
&\leq&
4\lambda \sqrt{s_0} \|\hat \bbeta_1^{(0)} - \bbeta^{(0)}\|_2\\\nonumber
&\leq&
\frac{4 \lambda \sqrt{s_0}}{\phi} \sqrt{\frac{\tilde n_0 + \tilde n_1}{\tilde n_0 + n_{\mathcal{I}_{13}^{-}}}} \sqrt{\frac{\hat \bdelta^{\T} [(\bX^{(0)})^{\T} \bX^{(0)} + (1 - \theta_0)(\bX^{(1)})^{\T} \bI_{\hat {\mathcal{I}}_{13}} \bW \bX^{(1)}] \hat \bdelta}{\tilde n_0 + \tilde n_1}}\\\label{f9-2}
&\leq& \frac{\hat \bdelta^{\T} [(\bX^{(0)})^{\T} \bX^{(0)} + (\bX^{(1)})^{\T} \bI_{\hat {\mathcal{I}}_{13}} \bW \bX^{(1)}] \hat\bdelta}{2(\tilde n_0 + \tilde n_1)} + \frac{\tilde n_0 + \tilde n_1}{\tilde n_0 + n_{\mathcal{I}_{13}^{-}}} \frac{8 \lambda^2 s_0}{\phi^2} ,
\end{eqnarray}
where the first inequality follows from the Cauchy-Schwarz inequality, the third uses (\ref{B133}) and the last uses the inequality $4 a b \leq a^2/2 + 8 b^2$. By inserting (\ref{f9-2}) into (\ref{B13}), we have
$$
\frac{1}{2} \frac{\tilde n_0 + n_{\mathcal{I}_{13}^{-}}}{\tilde n_0 + \tilde n_1} \phi^2 \|\hat \bdelta\|^2 \leq \frac{1}{2(\tilde n_0 + \tilde n_1)} \hat \bdelta^{\T} [(\bX^{(0)})^{\T} \bX^{(0)} + (1 - \theta_0) (\bX^{(1)})^{\T} \bI_{\hat {\mathcal{I}}_{13}} \bW \bX^{(1)}] \hat \bdelta
 \leq
\frac{\tilde n_0 + \tilde n_1}{\tilde n_0 + n_{\mathcal{I}_{13}^{-}}} \frac{8 \lambda^2 s_0}{\phi^2},
$$
where the first inequality uses (\ref{B133}). Then it holds in probability that
\begin{equation}\label{rate22}
\|\hat \bdelta\|^2 \leq
\left(\frac{\tilde n_0 + \tilde n_1}{\tilde n_0 + n_{\mathcal{I}_{13}^{-}}} \right)^2
\frac{16 \lambda^2 s_0}{\phi^4}.
\end{equation}
Recall that $\lambda = 2 (\lambda_n^{(1)} + \lambda_n^{(2)})$, where
 $$
 \lambda_n^{(1)}= C_1 \tilde \rho_{\hat{\mathcal{I}}_{13}} \sqrt{\log p / (\tilde n_0 + n_{\hat {\mathcal{I}}_{13}})},~~~~
\lambda_{n}^{(2)}= C_2
 [\mu_{\max} (\alpha_n + q_n) + C_{f_{\epsilon},A}  \gamma_0]
 $$
with $\tilde \rho_{\hat{\mathcal{I}}_{13}} = (\tilde n_0 + n_{\hat {\mathcal{I}}_{13}})/(\tilde n_0 + \tilde n_1)$. In fact, by the Proposition \ref{prop4}, where it holds in probability that
$$
n_{\hat {\mathcal{I}}_{13}} \asymp \mathbb{E}(n_{\hat {\mathcal{I}}_{13}}) \asymp \mathbb{E}(n_{\hat {\mathcal{I}}_1}) \asymp \mathbb{E}(n_{\mathcal{I}_1}),
$$
and consequently that
$$
 \lambda_n^{(1)} \asymp  \rho_{\mathcal{I}_1} \sqrt{\log p / (n_0 + \mE(n_{\mathcal{I}_1}))}
 $$
with $\rho_{\mathcal{I}_1} = (n_0 + \mE(n_{\mathcal{I}_1}))/(n_0 + n_1)$. Then (\ref{rate22}) can be rewritten as
 $$
\|\hat \bbeta_1^{(0)} - \bbeta^{(0)}\|_2^2
 = O_p \left\{
 \left( \frac{n_0 + n_1}{n_0 + n_{\mathcal{I}_1^{-}}} \right)^2
s_0 (\lambda_n^{(1)} + \lambda_n^{(2)})^2
\right\}.
$$
Similar to (\ref{nE}), by Proposition \ref{prop4}, it holds in probability that
 $$
 n_{\mathcal{I}_1^{-}} \asymp \mathbb{E}(n_{\mathcal{I}_1^{-}}) \asymp \mathbb{E}(n_{\mathcal{I}_1}).
 $$
If $n_k \rightarrow \infty (k = 0,1)$, it holds  $\lambda_n^{(i)} = o(1),  i = 1,2$, and then we obtain
$$
\|\hat \bbeta_1^{(0)} - \bbeta^{(0)}\|_2^2
 = O_p \left\{
 \frac{s_0 {\rm log}\,p}{n_0 + \mathbb{E}(n_{\mathcal{I}_1})} +
\rho_{\mathcal{I}_1}^{-2}
s_0 [\mu_{\rm max} (q_n + \alpha_n) + C_{f_{\epsilon},A} \gamma_0]^2
\right\}.
$$
Similarly, we obtain the same convergence rate for $\hat \bbeta_2^{(0)}$ and $\hat \bbeta_3^{(0)}$, which leads to the conclusion desired.

{\it \textbf{Step~2.3.}}~
It remains to prove that the probabilities of events $\mathcal{A}_i$'s $(i = 1,2)$ and $\mathcal{T}$ are tending to 1. Recall
 $$
 \Lambda_1 = \underbrace{
 (\bm \epsilon^{(0)})^{\top} \bX^{(0)} + [\check{\bm\epsilon}^{(1)}-\mathbb{E}(\check{\bm\epsilon}^{(1)})]^{\top}\bX^{(1)}
 }_{\Lambda_{11}} + \underbrace{
 [\mathbb{E}(\check{\bm\epsilon}^{(1)})]^{\top}\bX^{(1)}
 }_{\Lambda_{12}} :=\Lambda_{11} +\Lambda_{12}.
 $$
The two events are
 $$
    \mathcal{A}_1=\left\{\frac{1}{\tilde n_0+\tilde n_1}\|\Lambda_{11}\|_\infty \leq  \lambda_n^{(1)}\right\}, ~~~~  \mathcal{A}_2=\left\{ \frac{1}{\tilde n_0+\tilde n_1}\|\Lambda_{12}\|_\infty\le  \lambda_n^{(2)}\right\},
$$
 where
 $$
 \lambda_n^{(1)}= C_1 \tilde \rho_{\hat{\mathcal{I}}_{13}} \sqrt{\log p / (\tilde n_0 + n_{\hat {\mathcal{I}}_{13}})},~~~~
 \lambda_{n}^{(2)}= C_2
 [\mu_{\max} (\alpha_n + q_n) + C_{f_{\epsilon},A} \gamma_0]
 $$
with $\tilde \rho_{\hat{\mathcal{I}}_{13}} = (\tilde n_0 + n_{\hat {\mathcal{I}}_{13}})/(\tilde n_0 + \tilde n_1)$ and $C_1, C_2$ are some constants.

 We firstly consider the event $\mathcal{A}_{1}$. From Step 1, we see that $\check \epsilon_i^{(1)}$'s are independent and bounded variables, and consequently are sub-Gaussian variables. Let $t = C_1 \tilde \rho_{ \hat{\mathcal{I}}_{13}} \sqrt{{\rm log}\,p /(\tilde n_0+ n_{\hat {\mathcal{I}}_{13}})}$. Since $\bm\epsilon^{(0)}$ and $\check{\bm\epsilon}^{(1)}-\mathbb{E}(\check{\bm\epsilon}^{(1)})$ are sub-Gaussian with zero mean, similar to \cite{Bickel2009}, we have
\begin{equation}\label{Lambda_1}
 {\mathbb{P}}\left(\frac{1}{\tilde n_0 + \tilde n_1}\|\Lambda_{11} \|_\infty  \leq t \right) =
  {\mathbb{P}}\left(\frac{1}{\tilde n_0 + n_{\hat {\mathcal{I}}_{13}}}\|\Lambda_{11} \|_\infty  \leq t / \tilde \rho_{ \hat{\mathcal{I}}_{13}} \right)
  > 1 -  p^{-c}\to 1
\end{equation}
for some constant $c>0$. Therefore, we prove that the probability of event $\mathcal{A}_1$  is tending to 1.

Next, we consider the event $\mathcal{A}_2$. It holds that
\begin{equation}\label{bias}
\frac{1}{\tilde n_0 + \tilde n_1} \|\Lambda_{12}\|_{\infty}
= \frac{1}{\tilde n_0 + \tilde n_1}
\left\|\sum\limits_{i\in \mathcal{D}_{13} } \bm x_i^{(1)} \mathbb{E} (\check{\epsilon}_i^{(1)})
\right\|_{\infty}
\leq \frac{1}{\tilde n_1}
\left\|\sum\limits_{i\in \mathcal{D}_{13} } \bm x_i^{(1)} \mathbb{E} (\check{\epsilon}_i^{(1)})\right\|_{\infty}.
\end{equation}
Moreover,
\begin{eqnarray}\nonumber
 \mathbb{E} (\check{\epsilon}_i^{(1)})&=& \int {\rm I}( i \in \hat{\mathcal{I}}_{13})  (\epsilon_i^{(1)} + \eta_i^{(1)}) \hat\omega_i^{(1)} f_{\epsilon^{(1)}}(\epsilon_i^{(1)}) d \epsilon_i^{(1)}\\\nonumber 
& = & \underbrace{
\int_{\epsilon_i^{(1)} \in \hat {\mathcal{I}}_{13}|_{\epsilon_i^{(1)}}}  (\epsilon_i^{(1)} + \eta_i^{(1)}) \omega_i^{(1)} f_{\epsilon^{(1)}}(\epsilon_i^{(1)})d \epsilon_i^{(1)}
}_{E_{i1}}\\\nonumber
& +&
\underbrace{
\int_{\epsilon_i^{(1)} \in \hat {\mathcal{I}}_{13}|_{\epsilon_i^{(1)}}}  (\epsilon_i^{(1)} + \eta_i^{(1)}) (\hat\omega_i^{(1)} - \omega_i^{(1)}) f_{\epsilon^{(1)}}(\epsilon_i^{(1)})d \epsilon_i^{(1)}
}_{E_{i2}} \\\nonumber
&:=&E_{i1}+E_{i2},
\end{eqnarray}
where $\hat {\mathcal{I}}_{13}|_{\epsilon_i^{(1)}}$ denotes the interval of $\epsilon_i^{(1)}$ induced from  the set $\hat {\mathcal{I}}_{13}$.

For $E_{i1}$, by the definition of $\hat {\mathcal{I}}_{13}$ and the symmetry of $f_{\epsilon}$,  it follows that
\begin{equation}\label{E-i1-U}
E_{i1} = \int_{\epsilon_i^{(1)} \in \hat {\mathcal{I}}_{13}|_{\epsilon_i^{(1)}}}  (\epsilon_i^{(1)} + \eta_i^{(1)}) f_{\epsilon}(\epsilon_i^{(1)} + \eta_i^{(1)}) d \epsilon_i^{(1)} =  \int_{A - r_i^{(1)0}}^{A + r_i^{(1)0}} \xi_i f_{\epsilon}(\xi_i) d \xi_i = 2 a_i f_{\epsilon}(a_i) r_i^{(1)0},
\end{equation}
where $\xi_i = \epsilon_i^{(1)} + \eta_i^{(1)}$ and $a_i\in[A - \theta_0,A + \theta_0]$ by recalling $\max\limits_{i\in \mathcal{D}_{13}} r_i^{(1)0}<\theta_0$ with $\theta_0$ being sufficiently small. Recall $C_{f_{\epsilon},A} = \max\limits_{a\in [A - \theta_0,A + \theta_0] }a f_{\epsilon}(a)$ and $\hat \bSig^{(1)} = \sum\limits_{i \in \mathcal{D}_{13}} \bm x_i^{(1)} (\bm x_i^{(1)})^{\top}/\tilde n_1  := (\hat \sigma_{ij}^{(1)})_{p \times p}$. It follows that
\begin{eqnarray}\nonumber
\left\|\frac{1}{\tilde n_1}\sum\limits_{i\in \mathcal{D}_{13} } \bm x_i^{(1)} E_{i1}\right\|_\infty  &\le&  2C_{f_{\epsilon},A}  \|\hat \bSig^{(1)} (\tilde \bbeta^{(0)} - \bbeta^{(0)})\|_{\infty}\\\nonumber
& \leq &  2C_{f_{\epsilon},A}  \|(\hat \bSig^{(1)}  - \bSig^{(1)})(\tilde \bbeta^{(0)} - \bbeta^{(0)})\|_{\infty} + 2C_{f_{\epsilon},A}
\|\bSig^{(1)} (\tilde \bbeta^{(0)} - \bbeta^{(0)})\|_{\infty}\\\nonumber
& \leq &   2C_{f_{\epsilon},A}
\left(
\underset{1 \leq i,j \leq p}{\rm max}|\hat \sigma_{ij}^{(1)} - \sigma_{ij}^{(1)}| + \underset{1 \leq i,j \leq p}{\rm max} |\sigma_{ij}^{(1)}|
\right) \|\tilde \bbeta^{(0)} - \bbeta^{(0)}\|_1\\\label{E-i1}
&\le & C
C_{f_{\epsilon},A} \gamma_0
\end{eqnarray}
for some constant $C>0$, where the last step uses the properties of the sample covariance matrix  that $\underset{1 \leq i,j \leq p}{\rm max}|\hat \sigma_{ij}^{(1)} - \sigma_{ij}^{(1)}| = o_p(1)$ \citep{Bickel2008} and the fact that $\sigma^{(1)}_{\max}=\max_{ij}\sigma_{ij}^{(1)}$ is bounded in Condition 7.

For $E_{i2}$, it follows that
\begin{align}
E_{i2} &= \int_{\epsilon_i^{(1)} \in \hat {\mathcal{I}}_{13}|_{\epsilon_i^{(1)}}}  (\epsilon_i^{(1)} + \eta_i^{(1)})  \omega_i^{(1)}f_{\epsilon^{(1)}}(\epsilon_i^{(1)}) (\hat\omega_i^{(1)} / \omega_i^{(1)}-1) d \epsilon_i^{(1)}\notag\\
&\le \int_{\epsilon_i^{(1)} \in \hat {\mathcal{I}}_{13}|_{\epsilon_i^{(1)}}} | (\epsilon_i^{(1)} + \eta_i^{(1)})  \omega_i^{(1)}f_{\epsilon^{(1)}}(\epsilon_i^{(1)}) | d \epsilon_i^{(1)}
\cdot \max_{i\in \hat{\mathcal{I}}_{13}}|\hat\omega_i^{(1)}/\omega_i^{(1)}-1|\notag\notag\\\nonumber
&=\int_{A - r_i^{(1)0}}^{A + r_i^{(1)0}} |\xi_i| f_{\epsilon}(\xi_i) d \xi_i \cdot\max_{i\in \hat{\mathcal{I}}_{13}}|\hat\omega_i^{(1)} / \omega_i^{(1)}-1| \le \sigma_\epsilon^2  \max_{i\in \hat{\mathcal{I}}_{13}}|\hat\omega_i^{(1)} / \omega_i^{(1)}-1|,
\end{align}
where $\sigma_\epsilon^2 = \mE_{f_{\epsilon}} (X^2)$ is bounded. When $K = 1$, with the notation $\mu_{\rm max} = \sum\limits_{i \in \mathcal{D}_{13}} \|\bm x_i^{(1)}\|_{\infty} / {\tilde n_1}$, it follows that
\begin{eqnarray}\nonumber
\left\|\frac{1}{\tilde n_1}\sum\limits_{i\in \mathcal{D}_{13}  } \bm x_i^{(1)} E_{i2}\right\|_\infty
&\le& \frac{1}{\tilde n_1} \sigma_\epsilon^2 \sum\limits_{i\in \mathcal{D}_{13} }\|\bm x_i^{(1)}\|_{\infty}  \max_{i\in \hat{\mathcal{I}}_{13}}|\hat\omega_i^{(1)}/\omega_i^{(1)}-1| \\\label{E-i2}
& =&  \sigma_{\epsilon}^2 \mu_{\max}  \max_{i\in \hat{\mathcal{I}}_{13}}|\hat\omega_i^{(1)} / \omega_i^{(1)}-1| \leq c \sigma_\epsilon^2 \mu_{\rm max} (\alpha_n + q_n),
\end{eqnarray}
 where the last inequality is due to Lemma \ref{lemma1} that $\max_{i\in \hat{\mathcal{I}}_{13}}|\hat\omega_i^{(1)}/\omega_i^{(1)}-1|\le c (q_n + \alpha_n)$ holds with overwhelming probability for some constant $c>0$.

 By inserting (\ref{E-i1}) and (\ref{E-i2}) into (\ref{bias}), we have
 \begin{eqnarray}\nonumber
 \frac{1}{\tilde n_0 + \tilde n_1} \|\Lambda_{12}\|_{\infty}
 &\leq&
 \frac{1}{\tilde n_1} \left\|\sum\limits_{i\in \mathcal{D}_{13} } \bm x_i^{(1)} \mathbb{E} (\check{\epsilon}_i^{(1)})\right\|_{\infty}\\\nonumber
 &\leq&
 \left\|\frac{1}{\tilde n_1}\sum\limits_{i\in \mathcal{D}_{13} } \bm x_i^{(1)} E_{i1}\right\|_\infty +
\left\|\frac{1}{\tilde n_1}\sum\limits_{i\in \mathcal{D}_{13} } \bm x_i^{(1)} E_{i2}\right\|_\infty\\\nonumber
&\leq &
c \sigma_\epsilon^2 \mu_{\rm max} (\alpha_n + q_n) + C C_{f_{\epsilon},A} \gamma_0.
 \end{eqnarray}
 Therefore, the probability of the event $\mathcal{A}_2$ is tending to 1.

 Finally, we consider the event $\mathcal{T} = \{\hat\omega_i^{(1)} \geq c_1, i \in \hat{\mathcal{I}}_{13}\}$ for some constants $c_1 > 0$. By Lemma \ref{lemma1}, for any $i \in \hat {\mathcal{I}}_{13}$, it holds in probability that
 \begin{equation}\label{weighthat-bound}
 (1 + \theta_0)\omega_i^{(1)} \leq \hat \omega_i^{(1)} \leq (1 + \theta_0) \omega_i^{(1)}
 \end{equation}
 for some $\theta_0$ being  sufficiently small. By Proposition \ref{prop2}, for all $i \in \mathcal{I}_{13}^{+}$ defined in \eqref{I_13_add}, we derive that $\omega_i^{(1)}$ is bounded away from 0 and $\infty$, and so is for all $i \in \hat{\mathcal{I}}_{13}$ since $\hat {\mathcal{I}}_{13} \subseteq \mathcal{I}_{13}^{+}$ in Proposition \ref{prop4}. Consequently, by \eqref{weighthat-bound}, we have $\omega_i^{(k)} \geq c_1$ for all $i \in \hat{\mathcal{I}}_{13}$ with some constants $c_1 > 0$.
 In summary, the probabilities of events $\mathcal{A}_i$'s $(i = 1,2)$ and $\mathcal{T}$ happening tend to 1.

\subsection{Proof of Corollary 2}
Recall that $\alpha_n$, depending  on  $\{\gamma_k, k\ge 0\}$,  is small only when all $n_k$'s (i.e. $k\ge 0$) are large. On the other hand,  $q_n$ that depends   only on  sources is small when $n_k(k\ge 1)$ are large. Hence, conditions  ($ii$) and ($iii$) imply that $q_n=O_p(\alpha_n)$. Moreover, by $\max\limits_{k\ge 1}\gamma_k\ll \gamma_0$ in condition ($ii$), the normality of $\bm x_i^{(k)}$ and the definition of $\alpha_n$,  we have
 $$\alpha_n \lesssim \max\limits_{0\le k\le K}   \gamma_k\max\limits_{1\le k\le K}\sqrt{{\rm log}\,(n_k p)}=O_p\left(\sqrt{\log (n_1p)} \gamma_0\right).$$
 Recall  that  $\mu_{\max}=O_p(\sqrt{\log p})$ in the discussions  just above the Theorem \ref{the2} for Gaussian predictors.
 Moreover, under the conditions of Corollary \ref{coro1} and the condition ($iv$), we see that $\rho_{\mathcal{I}}\asymp1$.
  Combining together, for the second term of  the error rate in Theorem \ref{the2}, we have
 $$
 \rho_{\mathcal{I}}^{-2}
s_0 [\mu_{\rm max} (q_n + \alpha_n) +  C_{f_{\epsilon},A}  \gamma_0]^2 \asymp s_0 \gamma_0^2 \log(p) \log (n_1 p) \asymp \gamma_0^2 \log(p) \log (n_1 p),
$$
where we use the fact that $s_0 \asymp 1$.
This completes the proof. $ \square$

\renewcommand\thesection{S.3}
\section{Proofs in Section \ref{sec:riw-tl-u}}\label{S.3}
\renewcommand{\theequation}{C.\arabic{equation}}
\setcounter{equation}{0}
\subsection{Proof of Proposition 5}
By the independence of $\epsilon_i^{(k)}$ and $\bm x_i^{(k)}$, it holds that
$\mE(n_{\mathcal{I}'_k}) = n_k \mathbb{P}(|\epsilon_i^{(k)}| \leq A) \mathbb{P}(|\eta_i^{(k)}| \leq M)$. When $\epsilon_i^{(k)}$ is distributed as $N(0,1)$, it is easy to see that $\mathbb{P}(|\epsilon_i^{(k)}| \leq A) = 2 \Phi(A) - 1$, where $\Phi(\cdot)$ is the probability distribution function of the standard normal distribution.

When $\bm x_i^{(k)}$ follows $N(\bm 0, \bSig^{(k)})$, it holds that $\eta_i^{(k)} \sim N(0,d_k^2)$, where $d_k^2 = (\bdelta^{(k)})^{\top} \bSig^{(k)} \bdelta^{(k)}$. Then
\begin{eqnarray}\nonumber
\mathbb{P}^2(|\eta_i^{(k)}| \leq M) &=& \frac{1}{2 \pi d_k^2} \int_{-M}^M \int_{-M}^M {\rm exp}\{-(x^2 + y^2)/(2 d_k^2)\} dxdy\\\nonumber
&\geq& \frac{1}{2 \pi d_k^2} \int_0^{2 \pi}d\theta \int_0^M r \cdot {\rm exp}\{-r^2/(2 d_k^2)\} dr\\\nonumber
& = & 1 - {\rm exp}\{-M^2 / (2 d_k^2)\}.
\end{eqnarray}
Consequently, we have
$$
\mE(n_{\mathcal{I}'_k}) \geq n_k (2 \Phi(A) - 1) \sqrt{1 - {\rm exp}\{-M^2 / (2 d_k^2)\}}.
$$

Since both $A$ and $M$ are bounded, the following conclusions hold: ($i$) when $d_k^2 = O(1)$, it holds $\mE(n_{\mathcal{I}_k^{\prime}}) \asymp n_k$; ($ii$) when $d_k$ diverges as $ n_k\to\infty$, it follows that
 $n_k/d_k  \lesssim \mE(n_{\mathcal{I}_k^{\prime}}) \leq n_k$. In summary, we get the conclusion, that is,
 $\mE(n_{\mathcal{I}'_k}) \gtrsim n_k/\max\{d_k,1\}$.

\subsection{Proof of Theorem 3}
The proof of Theorem \ref{the3} is similar to Theorem \ref{the1} except that here we need to bound $\|\Lambda_1\|_{\infty}$ with
$$
\Lambda_1 = \frac{1}{n_0 + n_1} \left\{(\bm \epsilon^{(0)})^{\T} \bX^{(0)} + (\bm \epsilon^{(1)} + \bm \eta^{(1)})^{\top} \bI_{\mathcal{I}'_1} \bW_T  \bX^{(1)}
\right\},
$$
where $\bW_T = {\rm diag}\{\omega_{1,T}^{(1)},\cdots,\omega_{n_1,T}^{(1)}\}$  denotes  the diagonal weights matrix and  $\mathcal{I}'_1 = \{i \in \mathcal{D}_1: |\epsilon_i^{(1)}| \leq A,~|\eta_i^{(1)}| \leq M\}$.
The population version of $\Lambda_1$ is $\mathbb{E}[\epsilon_i^{(0)} x_{ij}^{(0)} + (\epsilon_i^{(1)} + \eta_i^{(1)}) x_{ij}^{(1)} \omega_{i,T}^{(1)} {\rm I}(i \in \mathcal{I}'_1)]$ that   is nonzero here and is the counterpart of \eqref{f5} in Theorem \ref{the1}.

 The proof is similar to Step 2.1 of the  proof of Theorem \ref{the2}.
 Let $\check {\bm \epsilon}^{(1)} = (\check \epsilon_i^{(1)}, i \in \mathcal{D}_1)^{\top}$ with $\check \epsilon_i^{(1)} = (\epsilon_i^{(1)} + \eta_i^{(1)}) \omega_{i,T}^{(1)} {\rm I}(i \in \mathcal{I}'_1)$.  It follows that
$$
\Lambda_1 = \underbrace{
\frac{1}{n_0 + n_1}
\left\{
(\bm \epsilon^{(0)})^{\top} \bX^{(0)} + (\check{\bm\epsilon}^{(1)}-\mathbb{E}(\check{\bm\epsilon}^{(1)}))^{\top}\bX^{(1)}
\right\}
}_{\Lambda_{11}} +
\underbrace{
\frac{1}{n_0 + n_1}
[\mathbb{E}(\check{\bm\epsilon}^{(1)})]^{\top}\bX^{(1)}
}_{\Lambda_{12}}.
$$

For $\Lambda_{11}$, we had shown that $\check \epsilon_i^{(1)}$'s are independent and bounded variables, and consequently are sub-Gaussian variables. Let
$$
t = C_1 \tilde \rho_{\mathcal{I}_1^{\prime}} \sqrt{{\rm log}\,p /(n_0+ n_{ \mathcal{I}_1^{\prime}})}~~\text{with}~~\tilde \rho_{\mathcal{I}_1^{\prime}} = (n_0 + n_{\mathcal{I}_1^{\prime}})/(n_0 + n_1).
$$
Combining with the property of that $\bm\epsilon^{(0)}$ and $\check{\bm\epsilon}^{(1)}-\mathbb{E}(\check{\bm\epsilon}^{(1)})$ are sub-Gaussian with zero mean, similar to \eqref{prob-compute}, it follows that
$$
 {\mathbb{P}}\left(\|\Lambda_{11} \|_\infty  \leq t \right) > 1 -  p^{-c}\to 1
$$
for some constants $c>0$.

For $\Lambda_{12}$, since the weighted average distance $h_{\rm ave}$ is involved, we consider its general form which can be written as follows,
$$
\Lambda_{12} =  \frac{1}{n_0 + \Sigma_{k = 1}^K n_k} \sum\limits_{k = 1}^K
[\mathbb{E}(\check{\bm\epsilon}^{(k)})]^{\top}\bX^{(k)},
$$
where $\check {\bm \epsilon}^{(k)} = (\check \epsilon_i^{(k)}, i \in \mathcal{D}_k)^{\top}$ with $\check \epsilon_i^{(k)} = (\epsilon_i^{(k)} + \eta_i^{(k)}) \omega_{i,T}^{(k)} {\rm I}(i \in \mathcal{I}'_k)$.
Then we have
\begin{equation}\label{bias-2}
 \|\Lambda_{12}\|_{\infty} = \frac{1}{n_0 + \Sigma_{k = 1}^K n_k}
\left\|\sum\limits_{k = 1}^K \sum\limits_{i\in \mathcal{D}_k} \bm x_i^{(k)} \mathbb{E} (\check{\epsilon}_i^{(k)})\right\|_{\infty} \leq \frac{1}{\Sigma_{k = 1}^K n_k}
\left\|\sum\limits_{k = 1}^K \sum\limits_{i\in \mathcal{D}_k} \bm x_i^{(k)} \mathbb{E} (\check{\epsilon}_i^{(k)})\right\|_{\infty},
\end{equation}
where
\begin{align}\nonumber
 \mathbb{E} (\check{\epsilon}_i^{(k)})
 &= \int_{\epsilon_i^{(k)} \in \mathcal{I}'_k|_{\epsilon_i^{(k)}}}  (\epsilon_i^{(k)} + \eta_i^{(k)}) \omega_{i,T}^{(k)} f_{\epsilon^{(k)}}(\epsilon_i^{(k)}) d \epsilon_i^{(k)} \\\label{C3}
 &= \int_{A - \eta_i^{(k)}}^{A + \eta_i^{(k)}} \xi_i^{(k)} f_{\epsilon}(\xi_i^{(k)}) d \xi_i^{(k)}
 = \frac{1}{2T} \int_{A - \eta_i^{(k)}}^{A + \eta_i^{(k)}} \xi_i^{(k)} d \xi_i^{(k)}
 = \frac{A}{T} \eta_i^{(k)}
\end{align}
with $\xi_i^{(k)} = \epsilon_i^{(k)} + \eta_i^{(k)}$. By inserting the result of \eqref{C3} into \eqref{bias-2} and noticing the definition of $\eta_i^{(k)}$, we have
\begin{eqnarray}\nonumber
\|\Lambda_{12}\|_{\infty} &\leq&
\left\|\sum\limits_{k = 1}^K \sum\limits_{i\in \mathcal{D}_k} \bm x_i^{(k)} \mathbb{E} (\check{\epsilon}_i^{(k)})\right\|_{\infty} / \sum\limits_{k = 1}^K n_k\\\nonumber
& = & C_{A,T} \left\|\sum\limits_{k = 1}^K n_k \sum\limits_{i\in \mathcal{D}_k} \frac{1}{n_k} \bm x_i^{(k)} (\bm x_i^{(k)})^{\top} (\bbeta^{(k)} - \bbeta^{(0)})\right\|_{\infty} / \sum\limits_{k = 1}^K n_k \\\nonumber
& \leq & C_{A,T} \sum\limits_{k = 1}^K \pi_k \| \hat \bSig^{(k)} (\bbeta^{(k)} - \bbeta^{(0)})\|_{\infty} \\\nonumber
& \leq &  C_{A,T} \sum\limits_{k = 1}^K \pi_k \| (\hat \bSig^{(k)}  - \bSig^{(k)})(\bbeta^{(k)} - \bbeta^{(0)})\|_{\infty} +
C_{A,T} \sum\limits_{k = 1}^K \pi_k \|\bSig^{(k)} (\bbeta^{(k)} - \bbeta^{(0)})\|_{\infty}\\\nonumber
& \leq &   C_{A,T} \underset{1 \leq k \leq K}{\rm max}
\left(
\underset{1 \leq i,j \leq p}{\rm max}|\hat \sigma_{ij}^{(k)} - \sigma_{ij}^{(k)}| + \underset{1 \leq i,j \leq p}{\rm max} |\sigma_{ij}^{(k)}|
\right) \sum\limits_{k = 1}^K \pi_k \|\bbeta^{(k)} - \bbeta^{(0)}\|_1\\\label{E-i1-2}
&\le & C
 C_{A,T} h_{\rm ave}
\end{eqnarray}
for some constants $C > 0$, where $C_{A,T} = A/T$ and $h_{\rm ave} = \Sigma_{k = 1}^K \pi_k \|\bbeta^{(k)} - \bbeta^{(0)}\|_1$ with $\pi_k = n_k / \Sigma_{k = 1}^K n_k$. In the last step of \eqref{E-i1-2}, we use  the properties of sample covariance matrix  that $\underset{1 \leq i,j \leq p}{\rm max}|\hat \sigma_{ij}^{(k)} - \sigma_{ij}^{(k)}| = o_p(1)$ \citep{Bickel2008} and the fact that $\sigma^{(k)}_{\max}=\max_{ij}\sigma_{ij}^{(k)}$ is bounded in Condition 7.
Define the following two events:
 $$
    \mathcal{A}_1=\left\{\|\Lambda_{11}\|_\infty \leq
    \lambda_n^{(1)}\right\}, ~~~~  \mathcal{A}_2 =\left\{\|\Lambda_{12}\|_\infty\le  \lambda_n^{(2)}\right\},
$$
 where
 $$
 \lambda_n^{(1)} = C_1 \tilde \rho_{\mathcal{I}'_1} \sqrt{\log p / (n_0 + n_{\mathcal{I}'_1})},~~~~
 \lambda_{n}^{(2)}= C_2 C_{A,T} h_{\rm ave}
 $$
with $\tilde \rho_{\mathcal{I}'_1} = (n_0 + n_{\mathcal{I}'_1})/(n_0 + n_1)$ and $C_1, C_2$ are some constants. Hence, under the set $\mathcal{A}_1 \cap \mathcal{A}_2$, it holds that
$$
 \|\Lambda_1\|_{\infty} \leq  \|\Lambda_{11}\|_{\infty} +  \|\Lambda_{12}\|_{\infty} \leq
\lambda_n^{(1)} + \lambda_n^{(2)}.
$$

According to above discussion, both the probability of events $\mathcal{A}_i (i = 1,2)$ are tending to 1. Thereafter, similar to the proving procedures of step 2 in Theorem \ref{the2}, we derive that
\begin{equation}\label{oracle-rate-2}
\|\tilde \bbeta_{T,ora}^{(0)} - \bbeta^{(0)}\|^2 \leq
\frac{16 \lambda^2 s_0}{\tilde \rho_{\mathcal{I}'_1}^2 \phi^4}.
\end{equation}
Recall that $\lambda = 2(\lambda_n^{(1)} + \lambda_n^{(2)})$ and the event   $\{n_{\mathcal{I}'_1} \asymp \mathbb{E}(n_{\mathcal{I}'_1})\}$ holds in probability from Proposition \ref{prop4}. As $\underset{0 \leq k \leq K}{\rm min} n_k \rightarrow \infty$, then (\ref{oracle-rate-2}) can be rewritten as
$$
\|\tilde \bbeta_{T,ora}^{(0)} - \bbeta^{(0)}\|^2
 =  O_p \left\{
\frac{s_0 {\rm log}\,p}{n_0 + \mathbb{E}(n_{\mathcal{I}'_1})} +
 \rho_{\mathcal{I}'_1}^{-2} s_0  h_{\rm ave}^2
\right\}.
$$

\subsection{Proof of Lemma 2}
According to the definition of $\omega_{i,T}^{(k)}$, for any $i \in \hat {\mathcal{I}}'_{kj}$, it holds that
\begin{eqnarray}\nonumber
|\hat \omega_{i,T}^{(k)}/ \omega_{i,T}^{(k)} - 1| &=&
\left|
\frac{f_{\epsilon^{(k)}}(\epsilon_i^{(k)})}{\hat f_{\epsilon^{(k)}}(\epsilon_i^{(k)})} \cdot
\frac{\hat f_{\epsilon^{(k)}}(\epsilon_i^{(k)})}{\hat f_{\epsilon^{(k)}}(\hat \epsilon_i^{(k)})} - 1
\right|\\\nonumber
& = &
\left|
\frac{f_{\epsilon^{(k)}}(\epsilon_i^{(k)})}{\hat f_{\epsilon^{(k)}}(\epsilon_i^{(k)})} - 1
\right| \cdot
\left|
\frac{\hat f_{\epsilon^{(k)}}(\epsilon_i^{(k)})}{\hat f_{\epsilon^{(k)}}(\hat \epsilon_i^{(k)})} - 1
\right|\\\nonumber
& + &
\underbrace{
\left|
\frac{f_{\epsilon^{(k)}}(\epsilon_i^{(k)})}{\hat f_{\epsilon^{(k)}}(\epsilon_i^{(k)})} - 1
\right|}_{\Lambda_{i,1}} +
\underbrace{
\left|
\frac{\hat f_{\epsilon^{(k)}}(\epsilon_i^{(k)})}{\hat f_{\epsilon^{(k)}}(\hat \epsilon_i^{(k)})} - 1
\right|}_{\Lambda_{i,2}}\\\nonumber
& := & \Lambda_{i,1} \Lambda_{i,2} + \Lambda_{i,1} + \Lambda_{i,2}.
\end{eqnarray}
For $\Lambda_{i,1}$, by Lemma \ref{lemma0}, it holds that $\Lambda_{i,1} = O_p(q_n)$.
For $\Lambda_{i,2}$, similar to \eqref{B77}, we have
$$
\Lambda_{i,2} = \left|
\frac{\hat  f_{\epsilon^{(k)}}(\hat \epsilon_i^{(k)}) - \hat f_{\epsilon^{(k)}}(\epsilon_i^{(k)})}{\hat  f_{\epsilon^{(k)}}(\hat \epsilon_i^{(k)})}
\right| = O_p\left( \underset{i \in \mathcal{D}_{k3}}{\rm max}~|\hat \epsilon_i^{(k)} -  \epsilon_i^{(k)}|/b_k^2 \right) = O_p(u_n),
$$
where $u_n$ is defined in Lemma \ref{lemma1}. Due to the relationship $u_n \leq q_n$ in Lemma \ref{lemma1}, we derive that
$$
\max\limits_{1 \le k \le K}\max_{i \in \hat {\mathcal{I}}'_{kj}} |\hat \omega_{i,T}^{(k)}/ \omega_{i,T}^{(k)} - 1| =
O_p (q_n).
$$

\subsection{Proof of Theorem 4}
The proof of Theorem \ref{the4} is similar to Theorem \ref{the2} except that here we need to bound $\|\Lambda_1\|_{\infty}$ with
$$
\Lambda_1 = \frac{1}{\tilde n_0 + \tilde n_1}
\left[(\bm \epsilon^{(0)})^{\T} \bX^{(0)}   +  (\bm \epsilon^{(1)} + \bm \eta^{(1)})^{\T} \bI_{\hat {\mathcal{I}}_{13}^{\prime}} \hat\bW_T  \bX^{(1)}\right],
$$
where $\hat\bW_T = {\rm diag}\{\hat\omega_{1,T}^{(1)}, i \in \mathcal{D}_{13}\}$  denotes  the diagonal weights matrix, $\mathcal{I}'_{13} = \{i \in \mathcal{D}_{13}: |\hat\epsilon_i^{(1)}| \leq A,~|\hat\eta_i^{(1)}| \leq M\}$ and $\tilde n_k = n_k / 3$ for $k = 0,1$.

 The proof is similar to Step 2.1 of the  proof of Theorem \ref{the2}.
 Let $\check {\bm \epsilon}^{(1)} = (\check \epsilon_i^{(1)}, i \in \mathcal{D}_{13})^{\top}$ with $\check \epsilon_i^{(1)} = (\epsilon_i^{(1)} + \eta_i^{(1)}) \hat \omega_{i,T}^{(1)} {\rm I}(i \in \hat {\mathcal{I}}'_{13})$.
It follows that
$$
\Lambda_1 = \underbrace{
\frac{1}{\tilde n_0 + \tilde n_1}
\left\{
(\bm \epsilon^{(0)})^{\top} \bX^{(0)} + (\check{\bm\epsilon}^{(1)}-\mathbb{E}(\check{\bm\epsilon}^{(1)}))^{\top}\bX^{(1)}
\right\}
}_{\Lambda_{11}} +
\underbrace{
\frac{1}{\tilde n_0 + \tilde n_1}
[\mathbb{E}(\check{\bm\epsilon}^{(1)})]^{\top}\bX^{(1)}
}_{\Lambda_{12}},
$$
 where $\Lambda_{11}$ and $\Lambda_{12}$ are the counterpart in \eqref{the2-lambda12}.

For $\Lambda_{11}$, we had shown that $\check \epsilon_i^{(1)}$'s are independent and bounded variables in step 2.1 of Theorem \ref{the2}, and consequently are sub-Gaussian variables. Let
$$
t = C_1 \tilde \rho_{\hat{\mathcal{I}}_{13}^{\prime}} \sqrt{{\rm log}\,p /(\tilde n_0+ n_{\hat {\mathcal{I}}_{13}^{\prime}})}~~\text{with}~~\tilde \rho_{\hat{\mathcal{I}}_{13}^{\prime}} = (\tilde n_0 + n_{\hat{\mathcal{I}}_{13}^{\prime}})/(\tilde n_0 + \tilde n_1).
$$
Combining with the property of that $\bm\epsilon^{(0)}$ and $\check{\bm\epsilon}^{(1)}-\mathbb{E}(\check{\bm\epsilon}^{(1)})$ are sub-Gaussian with zero mean, similar to \eqref{Lambda_1}, it follows that
$$
 {\mathbb{P}}\left(\|\Lambda_{11} \|_\infty  \leq t \right) > 1 -  p^{-c}\to 1
$$
for some constant $c>0$.

For $\Lambda_{12}$, its general form for $K \geq 1$ is as follows,
$$
\Lambda_{12} =  \frac{3}{n_0 + \Sigma_{k = 1}^K n_k} \sum\limits_{k = 1}^K
[\mathbb{E}(\check{\bm\epsilon}^{(k)})]^{\top}\bX^{(k)} ,
$$
where $\check {\bm \epsilon}^{(k)} = (\check \epsilon_i^{(k)}, i \in \mathcal{D}_{k3})^{\top}$ with $\check \epsilon_i^{(k)} = (\epsilon_i^{(k)} + \eta_i^{(1)}) \hat \omega_{i,T}^{(k)} {\rm I}(i \in \hat{\mathcal{I}}'_k)$.
Since the quantity involves the average of distances of sources, we consider its general form for clarity. It holds that
\begin{equation}\label{bias-22}
 \|\Lambda_{12}\|_{\infty} = \frac{3}{n_0 + \Sigma_{k = 1}^{K} n_k}
\left\|\sum\limits_{k = 1}^K \sum\limits_{i\in \mathcal{D}_{k3}} \bm x_i^{(k)} \mathbb{E} (\check{\epsilon}_i^{(k)})\right\|_{\infty} \leq \frac{3}{\Sigma_{k = 1}^{K} n_k}
\left\|\sum\limits_{k = 1}^K \sum\limits_{i\in \mathcal{D}_{k3}} \bm x_i^{(k)} \mathbb{E} (\check{\epsilon}_i^{(k)})\right\|_{\infty},
\end{equation}
where
\begin{eqnarray}\nonumber
 \mathbb{E} (\check{\epsilon}_i^{(k)})&=& \int {\rm I}( i \in \hat{\mathcal{I}}'_{k3})  (\epsilon_i^{(k)} + \eta_i^{(k)}) \hat\omega_{i,T}^{(k)} f_{\epsilon^{(k)}}(\epsilon_i^{(k)}) d \epsilon_i^{(k)}\\\nonumber
& = & \underbrace{
\int_{\epsilon_i \in \hat{\mathcal{I}}'_{k3}|_{\epsilon_i^{(k)}}}  (\epsilon_i^{(k)} + \eta_i^{(k)}) \omega_{i,T}^{(k)} f_{\epsilon^{(k)}}(\epsilon_i^{(k)})d \epsilon_i^{(k)}
}_{E_{k,i1}} +
\underbrace{
\int_{\epsilon_i \in \hat{\mathcal{I}}'_{k3}|_{\epsilon_i^{(k)}}}  (\epsilon_i^{(k)} + \eta_i^{(k)}) (\hat\omega_{i,T}^{(k)} - \omega_{i,T}^{(k)}) f_{\epsilon^{(k)}}(\epsilon_i^{(k)})d \epsilon_i^{(k)}
}_{E_{k,i2}} \\\nonumber
&:=& E_{k,i1}+E_{k,i2}.
\end{eqnarray}

For $E_{k,i1}$, by the definition of $\hat {\mathcal{I}}'_{k3}$ and the symmetric of $f_{\epsilon}$ being $f_{\epsilon}(t) = (2T)^{-1} {\rm I}(|t| \leq T)$, it follows that
$$
E_{k,i1} = \int_{\epsilon_i^{(k)} \in \hat{\mathcal{I}}'_{k3}|_{\epsilon_i^{(k)}}}  (\epsilon_i^{(k)} + \eta_i^{(k)}) f_{\epsilon}(\epsilon_i^{(k)} + \eta_i^{(k)}) d \epsilon_i^{(k)}
 =  \int_{A - r_i^{(k)} - \eta_i^{(k)}}^{A + r_i^{(k)} + \eta_i^{(k)}} \xi_i^{(k)} f_{\epsilon}(\xi_i^{(k)}) d\xi_i^{(k)}
 =  C_{A,T} (r_i^{(k)} + \eta_i^{(k)}),
$$
where $C_{A,T} = A / T$, $r_i^{(k)} = (\bm x^{(k)})^{\top}(\tilde \bbeta^{(k)} - \bbeta^{(k)})$ and $\xi_i^{(k)} = \epsilon_i^{(k)} + \eta_i^{(k)}$. Then,
\begin{eqnarray}\nonumber
&&
3 \left\|\sum\limits_{k = 1}^K \sum\limits_{i\in \mathcal{D}_{k3}} \bm x_i^{(k)} E_{k,i1}\right\|_{\infty} / \sum\limits_{k = 1}^K n_k\\\nonumber
& = &
C_{A,T}
\left\|\sum\limits_{k = 1}^K n_k \sum\limits_{i\in \mathcal{D}_{k3}} \frac{3}{n_k} \bm x_i^{(k)} (\bm x_i^{(k)})^{\top} (\tilde \bbeta^{(k)} - \bbeta^{(0)}) \right\|_{\infty} / \sum\limits_{k = 1}^K n_k \\\nonumber
& = &
C_{A,T}
\left\|\sum\limits_{k = 1}^K \pi_k \hat \bSig^{(k)} (\tilde \bbeta^{(k)} - \bbeta^{(0)}) \right\|_{\infty}\\\nonumber
& \leq &
C_{A,T} \sum\limits_{k = 1}^K \pi_k \|(\hat \bSig^{(k)}  - \bSig^{(k)})(\tilde \bbeta^{(k)} - \bbeta^{(k)})\|_{\infty} +
C_{A,T} \sum\limits_{k = 1}^K \pi_k\|\bSig^{(k)} (\tilde \bbeta^{(k)} - \bbeta^{(k)})\|_{\infty}\\\nonumber
& + &
C_{A,T} \sum\limits_{k = 1}^K \pi_k \|(\hat \bSig^{(k)}  - \bSig^{(k)})( \bbeta^{(k)} - \bbeta^{(0)})\|_{\infty} +
C_{A,T} \sum\limits_{k = 1}^K \pi_k\|\bSig^{(k)} (\bbeta^{(k)} - \bbeta^{(0)})\|_{\infty}.
\end{eqnarray}
Continuing the last inequality can be found
\begin{eqnarray}\nonumber
&&
3 \left\|\sum\limits_{k = 1}^K \sum\limits_{i\in \mathcal{D}_{k3}} \bm x_i^{(k)} E_{k,i1}\right\|_{\infty} / \sum\limits_{k = 1}^K n_k\\\nonumber
& \leq &
C_{A,T} \max\limits_{1 \leq k \leq K} \left(
\underset{1 \leq i,j \leq p}{\rm max}|\hat \sigma_{ij}^{(k)} - \sigma_{ij}^{(k)}| + \underset{1 \leq i,j \leq p}{\rm max} |\sigma_{ij}^{(k)}|
\right) \max\limits_{1 \leq k \leq K} \|\tilde \bbeta^{(k)} - \bbeta^{(k)}\|_1\\\nonumber
& + &
C_{A,T} \max\limits_{1 \leq k \leq K} \left(
\underset{1 \leq i,j \leq p}{\rm max}|\hat \sigma_{ij}^{(k)} - \sigma_{ij}^{(k)}| + \underset{1 \leq i,j \leq p}{\rm max} |\sigma_{ij}^{(k)}|
\right) \sum\limits_{k = 1}^K \pi_k \|\bbeta^{(k)} - \bbeta^{(0)}\|_1\\\label{E-i1-22}
&\le & C
C_{A,T} (\gamma_{\rm max} + h_{\rm ave})
\end{eqnarray}
for some constant $C>0$, where the last step is similar to \eqref{E-i1-2} and $\gamma_{\rm max} = \underset{1 \le k \le K}{\rm max} \gamma_k $ with $\gamma_k$ being the convergence rate of $\tilde \bbeta^{(k)}$.

For $E_{k,i2}$, it follows that
\begin{align}
E_{k,i2} &= \int_{\epsilon_i^{(k)} \in \hat{\mathcal{I}}'_{k3}|_{\epsilon_i^{(k)}}}  (\epsilon_i^{(k)} + \eta_i^{(k)})  \omega_i^{(k)}f_{\epsilon^{(k)}}(\epsilon_i^{(k)}) (\hat\omega_{i,T}^{(k)} / \omega_{i,T}^{(k)}-1) d \epsilon_i^{(k)}\notag\\
&\le
\int_{\epsilon_i^{(k)} \in \hat{\mathcal{I}}'_{k3}|_{\epsilon_i^{(k)}}} | (\epsilon_i^{(k)} + \eta_i^{(k)}) \omega_{i,T}^{(k)} f_{\epsilon^{(k)}}(\epsilon_i^{(k)}) | d \epsilon_i^{(k)}  \max_{i\in \hat{\mathcal{I}}'_{k3}}|\hat\omega_{i,T}^{(k)}/\omega_{i,T}^{(k)}-1|\notag\notag\\\nonumber
&=
\int_{A - r_i^{(k)0}}^{A + r_i^{(k)0}} |\xi_i^{(k)}| f_{\epsilon}(\xi_i^{(k)}) d\xi_i^{(k)} \cdot\max_{i\in \hat{\mathcal{I}}'_{k3}}|\hat\omega_{i,T}^{(k)} / \omega_{i,T}^{(k)}-1| \le \sigma_\epsilon^2  \max_{i\in \hat{\mathcal{I}}'_{k3}}|\hat\omega_{i,T}^{(k)} / \omega_{i,T}^{(k)}-1|,
\end{align}
where $\sigma_\epsilon^2 = \mE_{f_{\epsilon}} (X^2)$ is bounded. Then it follows that
\begin{eqnarray}\nonumber
&&3 \left\|\sum\limits_{k = 1}^K \sum\limits_{i\in \mathcal{D}_{k3}} \bm x_i^{(k)} E_{k,i2}\right\|_{\infty} / \sum\limits_{k = 1}^K n_k\\\nonumber
&\le& \sigma_\epsilon^2  \left( 3 \sum\limits_{k = 1}^K \sum\limits_{i\in \mathcal{D}_{k3} }\|\bm x_i^{(k)}\|_{\infty} / \sum\limits_{k = 1}^K n_k \right) \cdot \max_{1 \le k \le K}\max_{i\in \hat{\mathcal{I}}'_{k3}}|\hat\omega_{i,T}^{(k)}/\omega_{i,T}^{(k)}-1| \\\label{E-i2-22}
& =&  \sigma_{\epsilon}^2 \mu_{\max} \cdot \max_{1 \le k \le K}\max_{i\in \hat{\mathcal{I}}'_{k3}}|\hat\omega_{i,T}^{(k)} / \omega_{i,T}^{(k)}-1| \leq c \sigma_\epsilon^2 \mu_{\rm max} q_n,
\end{eqnarray}
 where the last inequality is due to Lemma \ref{lemma2}, where $\max_{1 \le k \le K} \max_{i\in \hat{\mathcal{I}}'_{k3}}|\hat\omega_{i,T}^{(k)}/\omega_{i,T}^{(k)}-1|\le c q_n$ with overwhelming probability for some constant $c>0$. Inserting (\ref{E-i1-22}) and (\ref{E-i2-22}) into (\ref{bias-22}), we have
 \begin{eqnarray}\nonumber
\|\Lambda_{12}\|_{\infty}
 &\leq&
\frac{3}{\Sigma_{k = 1}^K n_k}
\left\|\sum\limits_{k = 1}^K \sum\limits_{i\in \mathcal{D}_{k3}} \bm x_i^{(k)} \mathbb{E} (\check{\epsilon}_i^{(k)})\right\|_{\infty}\\\nonumber
 &\leq&
\frac{3}{\Sigma_{k = 1}^K n_k}
\left\|\sum\limits_{k = 1}^K \sum\limits_{i\in \mathcal{D}_{k3}} \bm x_i^{(k)} E_{k,i1} \right\|_{\infty} +
\frac{3}{\Sigma_{k = 1}^K n_k}
\left\|\sum\limits_{k = 1}^K \sum\limits_{i\in \mathcal{D}_{k3}} \bm x_i^{(k)} E_{k,i2}\right\|_{\infty}\\\nonumber
&\leq &
C C_{A,T} (\gamma_{\rm max} + h_{\rm ave}) + c \sigma_\epsilon^2 \mu_{\rm max} q_n.
 \end{eqnarray}

Define the two events are
$$
\mathcal{A}_1=\left\{\|\Lambda_{11}\|_\infty \leq  \lambda_n^{(1)}\right\}, ~~~~  \mathcal{A}_2=\left\{\|\Lambda_{12}\|_\infty\le  \lambda_n^{(2)}\right\},
$$
 where
 $$
 \lambda_n^{(1)}= C_1 \tilde \rho_{\hat{\mathcal{I}}'_{13}} \sqrt{\log p / (\tilde n_0 + n_{\hat {\mathcal{I}}'_{13}})},~~~~
 \lambda_{n}^{(2)}= C_2
 [\mu_{\max} q_n + C_{A,T} (\gamma_{\rm max} + h_{\rm ave})]
 $$
with $\tilde \rho_{\hat{\mathcal{I}}'_{13}} = (\tilde n_0 + n_{\hat {\mathcal{I}}'_{13}})/(\tilde n_0 + \tilde n_1)$. Hence, under the set $\mathcal{A}_1 \cap \mathcal{A}_2$, it holds that
$$
 \|\Lambda_1\|_{\infty} \leq  \|\Lambda_{11}\|_{\infty} +  \|\Lambda_{12}\|_{\infty} \leq
\lambda_n^{(1)} + \lambda_n^{(2)}.
$$

According to above discussion, both the probability of events $\mathcal{A}_i (i = 1,2)$ are tending to 1. Thereafter, similar to the proving procedures of step 2 in Theorem \ref{the2}, it holds in probability that
\begin{equation}\label{rate22-2}
\|\hat \bbeta_{T,1}^{(0)} - \bbeta^{(0)}\|^2 \leq
\left(\frac{\tilde n_0 + \tilde n_1}{\tilde n_0 + n_{\mathcal{I}_{13}'^{-}}} \right)^2
\frac{16 \lambda^2 s_0}{\phi^4}.
\end{equation}
Let $\lambda = 2(\lambda_n^{(1)} + \lambda_n^{(2)})$. As $\underset{0 \leq k \leq K}{\rm min} n_k \rightarrow \infty$, by Proposition \ref{prop4}, \eqref{rate22-2} can be rewritten as
$$
\|\hat \bbeta_{T,1}^{(0)} - \bbeta^{(0)}\|_2^2
 = O_p \left\{
 \frac{s_0 {\rm log}\,p}{n_0 + \mathbb{E}(n_{\mathcal{I}'_1})} +
\rho_{\mathcal{I}'_1}^{-2}
s_0 (\mu_{\rm max} q_n + \gamma_{\rm max} + h_{\rm ave})^2
\right\}.
$$
Similarly, we obtain the same convergence rate for $\hat \bbeta_{T,2}^{(0)}$ and $\hat \bbeta_{T,3}^{(0)}$, which leads to the conclusion desired.

\renewcommand\thesection{S.4}
\section{Additional numerical results}\label{S.4}
We present additional simulation in a setting similar to that in the main paper but with the magnitude of the difference between $\bbeta^{(0)}$ and $\bbeta^{(k)}$ being random and the case where $\epsilon_i^{(0)}$ and $\epsilon_i^{(k)} (k \geq 1)$ follow different distributions. In particular, for $1 \leq k \leq K$, we specify $\bbeta^{(k)}$ as follows.
\begin{itemize}
  \item [$(i)$] For a given $\mathcal{B}$, if $k \in \mathcal{B}$, let
  $$
 \beta^{(k)}_j =
  \beta_j^{(0)} - \xi_j {\rm I}(j \in T_k),~~ \text{where}~ \xi_j \sim_{i.i.d} U(0,1),
 $$
 where $T_k$ is a random subset of $\{s_0 + 1,\cdots,p\}$ with $|T_k|= d$. The value of $d$ will change for different simulations.

  \item [$(ii)$]

  For a given $\mathcal{B}$, if $k \notin \mathcal{B}$, let
  $$
 \beta^{(k)}_j = \left\{
  \begin{array}{ll}
  \beta_j^{(0)} -  1,    &  j \in [s_0];\\
  \beta_j^{(0)} - \xi_j, &  j \in U_k;\\
   \beta_j^{(0)}, & \mbox{otherwise},
        \end{array}
        \right. ~~ \text{where}~ \xi_j \sim_{i.i.d} U(0,1),
 $$
 where $U_k$ is a random subset of $\{s_0 + 1,\cdots,p\}$ with $|U_k|= 2 s_0.$
\end{itemize}
Specifically, for $1 \leq k \leq K$, we focus on the following two scenarios under full distribution shift:

\begin{itemize}
  \item [$(S1)$]
   Both $\epsilon_i^{(0)}$ and $\epsilon_i^{(k)}$ are independently distributed as $N(0,1)$;

  \item [$(S2)$]
  The errors satisfy $\epsilon_i^{(0)} \sim N(0,1)$ and $\epsilon_i^{(k)} \sim t(5)$.
 \end{itemize}

 The results for ($S1$) and ($S2$) are reported in Figures \ref{fig1} and \ref{fig2} respectively. We can see that qualitatively similar conclusions to the main paper can be reached with regard to the performance of various methods under investigation. We also remark that RIW-TL-P assuming Gaussianity of the errors tend to underperform RIW-TL where the error distribution is nonparametrically estimated and RIW-TL-U where the errors are assumed following a uniform distribution.
\begin{figure}[!htbp]
\centering
  \includegraphics[scale=0.7]{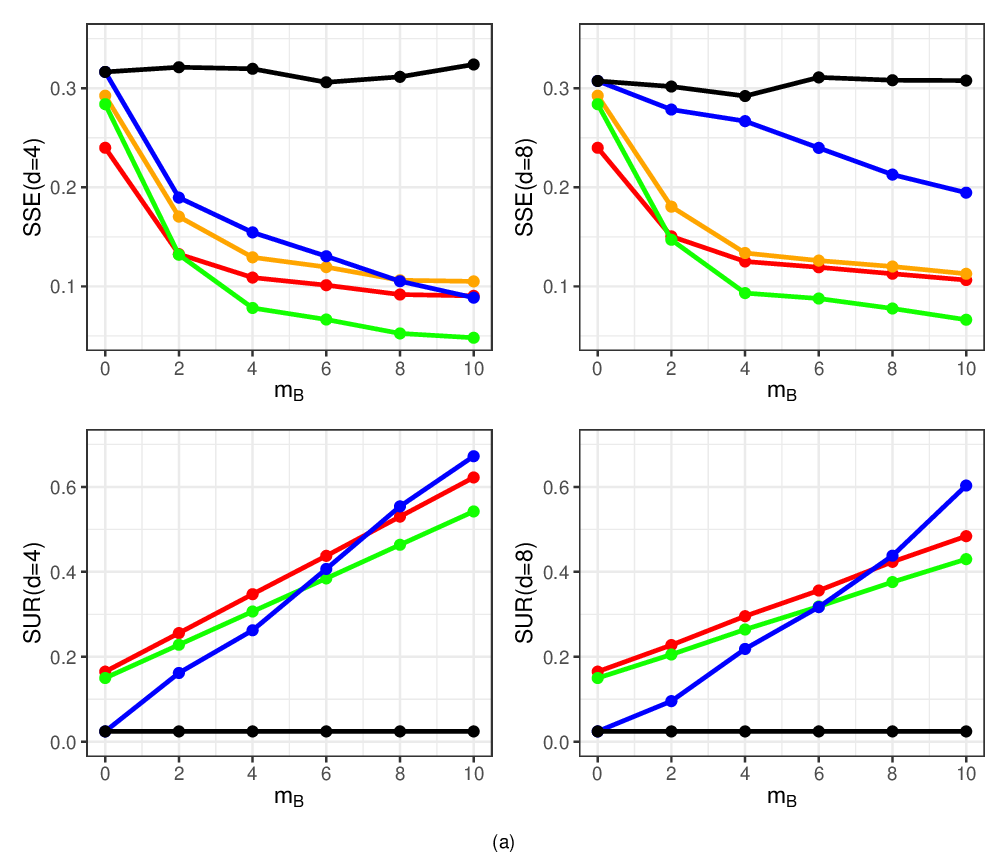}
  \includegraphics[scale=0.7]{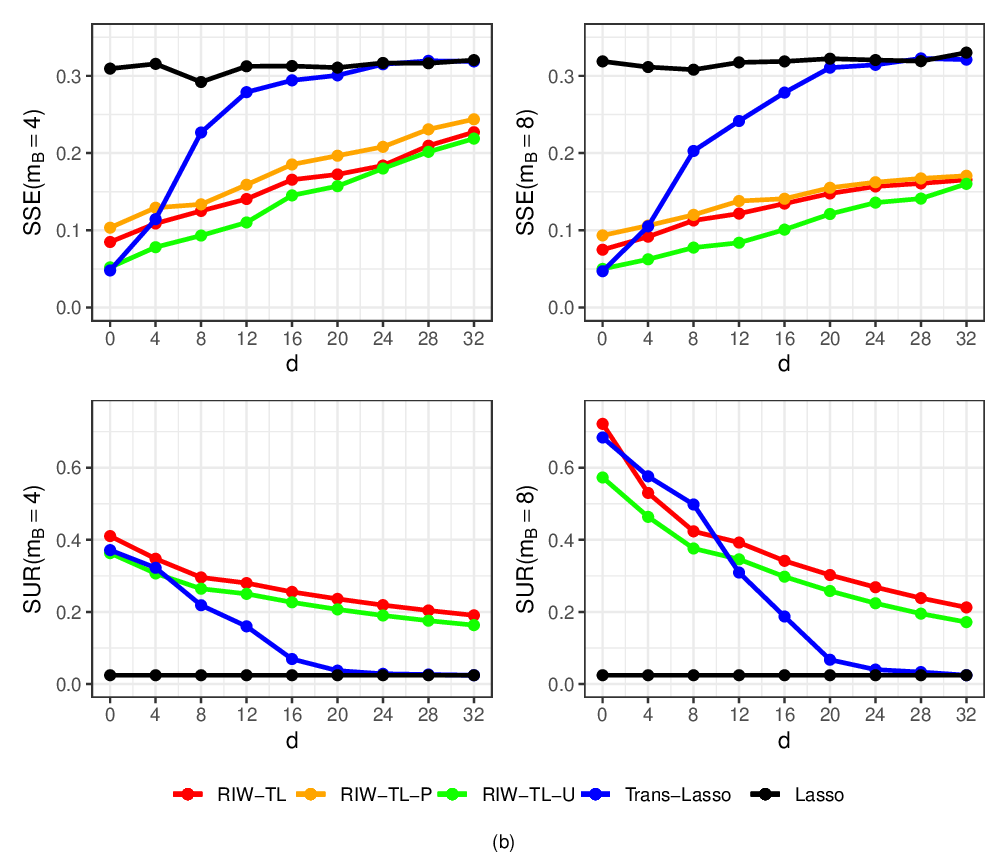}
  \caption{\small (a) The estimation errors (the first row) and sample usage rates (the second row) versus $d$ for different $m_{\mathcal{B}}$ in case $(S1)$. (b) The estimation errors (the third row) and sample usage rates (the fourth row) versus $m_{\mathcal{B}}$ for different $d$ in case $(S1)$. Note in the SUR plots, RIW-TL-P lines are invisible because they overlap with the corresponding RIW-TL lines.}\label{fig1}
\end{figure}

\begin{figure}[!htbp]
\centering
  \includegraphics[scale=0.7]{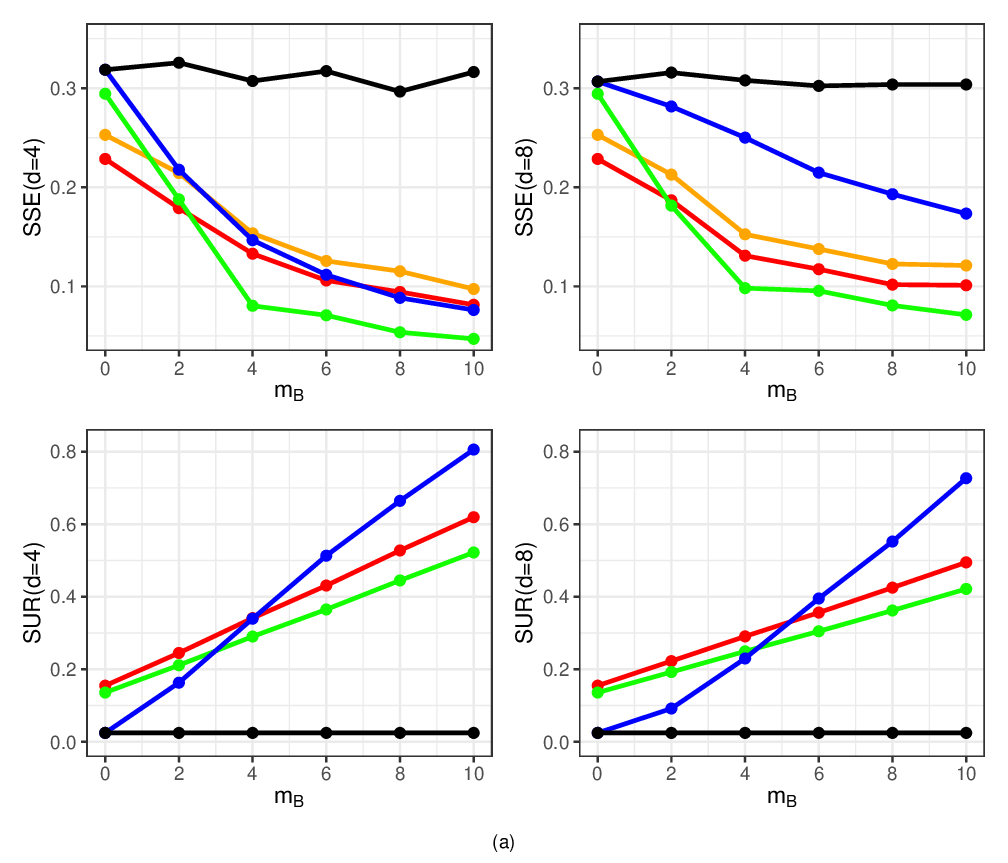}
  \includegraphics[scale=0.7]{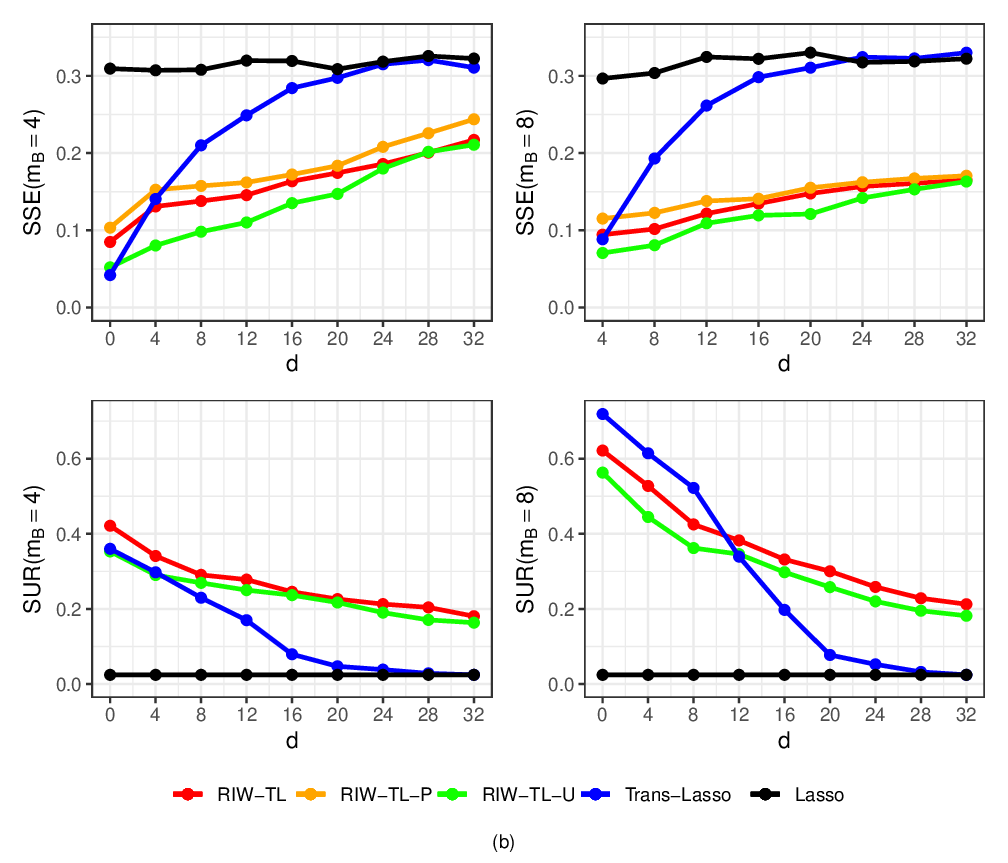}
  \caption{\small (a) The estimation errors (the first row) and sample usage rates (the second row) versus $d$ for different $m_{\mathcal{B}}$ in case $(S2)$. (b) The estimation errors (the third row) and sample usage rates (the fourth row) versus $m_{\mathcal{B}}$ for different $d$ in case $(S2)$. Note in the SUR plots, RIW-TL-P lines are invisible because they overlap with the corresponding RIW-TL lines.}\label{fig2}
\end{figure}

For the real data analysis, we conduct tests of normality of the residuals for the 31 sources once linear models are fitted.
The resulting $p$-values are plotted in Figures \ref{fig3}.  We can see that more than half of the tests reject the notion of normality for the residuals, suggesting that modelling the residuals as normal distributions may not be appropriate.
\clearpage
\begin{figure}[!htbp]
\centering
  \includegraphics[scale=0.55]{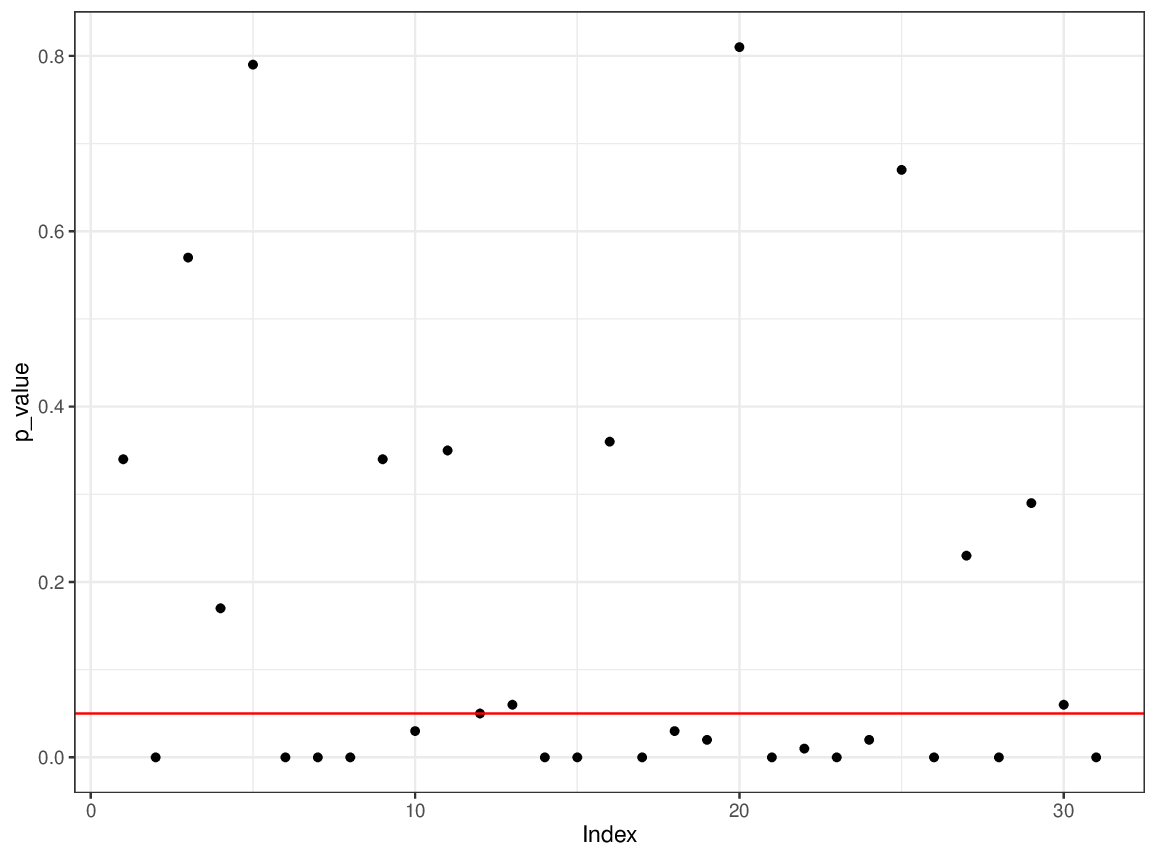}
  \caption{Plot for the $p$-values for testing the normality of the residuals in 31 sources. The red line is at the significance level $\alpha = 0.05$.}\label{fig3}
\end{figure}

\end{document}